\theoremstyle{plain}
\newtheorem{thm}{Theorem}
\newtheorem{prop}{Proposition}
\newtheorem{lem}{Lemma}
\theoremstyle{remark}
\newtheorem{theorem}{Theorem}[section]
\newtheorem{definition}{Definition}[section]
\newtheorem{remark}{Remark}[section]
\newcommand{\SNR}{\mathsf{SNR}}
\newcommand{\aaaa}{\mathrm{(a)}}
\newcommand{\bbbb}{\mathrm{(b)}}
\newcommand{\E}[1]{\mathbb{E} \left[ #1 \right]}
\newcommand{\Prob}[1]{\mathbb{P} \left( #1 \right)}
\newcommand{\lp}{\left(}
\newcommand{\rp}{\right)}
\newcommand{\lb}{\left[}
\newcommand{\rb}{\right]}
\newcommand{\lbp}{\left\{}
\newcommand{\rbp}{\right\}}
\newcommand{\what}{\widehat}
\newcommand{\wtild}{\widetilde}
\newcommand{\msf}{\mathsf}
\newcommand{\cgauss}[2]{\mathcal{CN}(#1,#2)}
\newcommand{\diag}{\text{diag}}
\DeclareMathOperator*{\argmax}{arg\,max}
\title{Enhancing Multiuser MIMO Through Opportunistic D2D Cooperation}
\author{
\IEEEauthorblockN{Can Karakus} ~~
\and
~~\IEEEauthorblockN{Suhas Diggavi} 
\thanks{The authors are with the Department of Electrical Engineering, University of California, Los Angeles. E-mail: \{karakus, suhasdiggavi\}@ucla.edu. This work was supported in part by NSF grants \#1314937 and \#1514531 and a gift from Intel.}}
\begin{document}

\maketitle

\begin{abstract}
We propose a cellular architecture that combines multiuser MIMO (MU-MIMO) downlink with
opportunistic use of unlicensed ISM bands to establish
device-to-device (D2D) cooperation. The architecture consists of a physical-layer cooperation scheme
based on forming downlink virtual MIMO channels through D2D relaying, and a novel resource allocation strategy for such D2D-enabled networks. We prove the approximate optimality of the
physical-layer scheme, and demonstrate that such cooperation boosts the effective $\SNR$ of the weakest user in the system, especially in the many-user regime, due to multiuser
diversity. To harness this physical-layer scheme, we formulate the cooperative user
scheduling and relay selection problem using the network utility
maximization framework. For such a cooperative network, we propose a
novel utility metric that jointly captures fairness in throughput and
the cost of relaying in the system. We propose a joint user scheduling
and relay selection algorithm, which we prove to be asymptotically
optimal. We study the architecture through system-level simulations over a wide range of scenarios. The highlight of these simulations is an approximately $6$x improvement
in data rate for cell-edge (bottom fifth-percentile) users (over the
state-of-the-art SU-MIMO) while still improving the overall throughput, and
taking into account various system constraints.
\end{abstract}
\begin{IEEEkeywords}
D2D, opportunistic scheduling, multiuser MIMO, ISM bands, user cooperation
\end{IEEEkeywords}

\section{Introduction}\label{sec:intro}
One of the biggest challenges in wireless networks is to
provide uniform connectivity experience throughout the service
area. The problem is especially difficult at the cell-edge, where users with unfavorable
channel conditions need to receive reliable and high-rate
communications. One of the ambitious visions of 5G network
design is to achieve $10$x reduction in data rate variability in the cell \cite{OttHimayat_16} (over existing 4G single-user MIMO OFDM architecture with proportional fair scheduling), without sacrificing the overall sum throughput in the system. In
this paper, we propose and study a solution that, realistic simulations
indicate, can give up to approximately $6$x improvement in data rate for
cell-edge (bottom fifth-percentile) users while still improving the overall throughput under various system constraints.

The proposed solution is centered around opportunistically using the
unlicensed band through device-to-device (D2D) cooperation to improve
the performance of the licensed multiple-antenna downlink
transmission. This solution can be enabled without the
presence of any WiFi hotspots, or other data off-loading
mechanisms. The main idea is an architecture where a multiple-antenna
downlink channel is enhanced through out-of-band D2D relaying to provide
multiple versions of the downlink channel outputs, forming virtual MIMO links, which is then
opportunistically harnessed through scheduling algorithms designed for this architecture. 

This architecture is predicated on two opposing developments. The first is
that infrastructure is becoming more powerful, with the use of a growing number of
multiple antennas through massive MIMO for 5G. The other development is on the user equipment (UE)
side, with mobile devices becoming more powerful, both in terms of spectrum access and computational power. Most of the mobile devices currently in
widespread use can access multiple bands over the ISM spectrum,
including the 2.4GHz and 5GHz bands. 
Furthermore, dense clusters of users 
constitute a challenging scenario for increasing capacity through massive MIMO, which
is precisely the scenario where D2D cooperation is the most useful, since the D2D links
are much stronger. 

The main technical question involving the architecture is that of how and when to enable such D2D
links in a network with many users to boost the cell-edge gains. Our analysis, which uses the network utility maximization framework, leads to an optimal resource allocation algorithm for scheduling these links in a centralized manner, while accounting for system constraints
such as limited network state knowlede at the base station; uncoordinated interference over the unlicensed band; fairness in throughput and fairness in the amount of relaying performed by users.
Extensive simulations based on 3GPP channel
models demonstrate that the proposed architecture combined with our resource allocation algorithm can yield up to approximately
$6$x throughput gain for the bottom fifth-percentile of users in the
network and up to approximately $4$x gain for median users over the state-of-the-art single-user MIMO (SU-MIMO) currently implemented in LTE systems, without degrading the throughput of the high-end users.

Since the architecture relies on opportunistically using the unlicensed ISM bands, an important question is how the D2D transmissions would affect other wireless technologies using the unlicensed bands, such as WiFi. As a co-existence mechanism, one can consider strategies similar to LTE-U \cite{ZhangWang_15}: a user can search for an available (unused) channel within the unlicensed band to use for D2D cooperation. If none exists, the user can either declare itself unavailable for D2D cooperation, or transmit only for a short duty cycle. We study the effect of a simplified co-existence mechanism that does the former through simulations, and find that the throughput loss in WiFi users is small compared with the gains in the cell-edge users, since the fraction of time D2D transmission is required from a given user is small.


The main technical contributions of the work can be summarized as follows.
\begin{itemize}

\item We analyze a physical-layer scheme based on compress-and-forward relaying and MIMO Tx/Rx processing that approximately achieves (within 2 bits/s/Hz) the capacity of two-user downlink channel with D2D
cooperation (Section~\ref{sec:cooperation}), and describe how the scheme can be extended to MU-MIMO (Section~\ref{sec:mumimo}). We characterize the gains in terms of cell-edge
$\SNR$-scaling due to D2D cooperation for a specific model of clustered networks (Section~\ref{sec:scaling}).

\item We formulate the problem of allocating such D2D links for cooperation within the utility maximization
framework (Section~\ref{sec:formulation}). Since the existing cross-layer design tools are not directly applicable in our scenario when D2D transmission conflicts are taken into account, we propose a novel scheduling policy for such D2D-enabled networks that takes into account such conflicts (Section~\ref{sec:optimal}). The policy consists of an extension of the single-user scheduling algorithm of \cite{TsibonisGeorgiadis_05} to the cooperative MU-MIMO scenario with incomplete network state knowledge, and a novel flow control component based on an explicit characterization of an inner bound on the stability region of the system. The proposed algorithm is shown to be optimal with respect to this inner bound on the stability region. 
We also introduce a novel class of utility functions for cooperative downlink communication, which incorporates the cost of cooperation and leads to desirable fairness properties (Section~\ref{sec:utility}). 

\item We present an extensive simulation study using 3GPP specifications to study the performance of the
proposed architecture (Section~\ref{sec:sim}). The main results include {\sf
(i)} a throughput gain ranging from $4.3$x up to $6.3$x (depending on system constraints, channel estimation accuracy etc.) for the users in bottom
fifth-percentile for MU-MIMO with D2D cooperation versus the state-of-the-art SU-MIMO, without degrading the throughput of the stronger users, {\sf (ii)} a throughput gain ranging from $3.7$x up to $4.9$x for the bottom
fifth-percentile users versus non-cooperative MU-MIMO without degrading throughput of stronger users, 
{\sf (iii)} A reduction of more than $50\%$ in the relaying load in the network through the use of novel utility functions, while still giving gains close to proportional fair case, {\sf (iv)} A basic study of an architecture wherein D2D cooperation coexists (and interferes) with WiFi in the network via a simple co-existence mechanism where cooperation is disabled within WiFi range, where it is shown that despite the residual interference, the throughput loss in WiFi users is small (10\% for median user) compared with the gains in the cell-edge users (130\% for fifth-percentile user), since the fraction of time D2D transmission is required from a given user is small (in the simulation 80\% of users performed relaying less than 10\% of the time).
\end{itemize}

{\noindent \bf Related work:} 
The relevant literature can be broadly classified into three areas: {\sf (i)} cooperative cellular communications; {\sf (ii)} dynamic downlink scheduling; {\sf (iii)} D2D in cellular communications; each of which we will summarize next.

In cooperative cellular communications, the idea is to allow users overhearing transmissions to perform relaying to increase spatial diversity and minimize outage probability. This line of work (for instance, \cite{SendonarisErkip_03, NosratiniaHunter_04, LiuTao_06}, and the references therein) typically focuses on uplink and in-band cooperation, where users that overhear other users' transmission over the licensed band relay their version to the base station. In contrast, we focus on downlink communication and out-of-band cooperation, where users perform relaying for each other's downlink traffic by \emph{opportunistically} using the unlicensed band. As will be seen, the use of orthogonal bands for cooperation can significantly simplify coding schemes. 

There is also a large literature in cellular downlink scheduling. Some of these works focus on scaling behavior of the achievable rate under various scheduling schemes \cite{SharifHassibi_05, YooGoldsmith_06}, some focus on the low-complexity algorithms \cite{DimicSidiropoulos_05}, while some others also account for fairness and various system constraints using the cross-layer optimization approach \cite{LiuChong_01, TsibonisGeorgiadis_05, LinShroff_06, GeorgiadisNeely_06, ShiraniCaire_10}. While our work uses the cross-layer optimization paradigm as well, none of the proposed resource allocation algorithms directly applicable to our cooperative scenario, since we consider an architecture where the broadcast nature of the wireless medium is explicitly used at the physical-layer, precluding an abstraction into isolated bit pipes in upper layers, which is a prevalent model in existing works on cross-layer optimization.

Embedding D2D communication in cellular network has also received considerable attention in the past (see \cite{AsadiWang_14} for a comprehensive survey). A majority of these works (\emph{e.g.}, \cite{DopplerRinne_09, LiLeiGao_12, WuTavildar_13}) focus on direct proximal communication between devices, where one device directly transmits a message for another over the licensed band, skipping infrastructure nodes. This type of proximal D2D communication also has been part of the 4G LTE-Advanced standard \cite{LiuKato_15}. The main focus in this line of work is to do resource allocation and interference management across D2D and/or uplink/downlink message flows. In contrast, we focus on D2D \emph{cooperation} to aid downlink communication, which is the use of physical D2D transmissions to assist downlink message flows intended for other devices. This can be considered as a new way the D2D capability can be used in the next-generation 5G networks, in addition to the existing proximal communication in 4G. Considering the fact that the volume of downlink traffic far exceeds the volume of proximal D2D communication traffic, the cooperation architecture has the potential to exploit the D2D capability to a much higher degree. This is also in line with one of the envisioned goals in 5G, which is to enable multihop communication in cellular networks \cite{ChenZhao_14}. 

Conceptually, the most relevant work in the literature to our problem is the one in \cite{AsadiMancuso_13_2}, where the authors propose an architecture where users form clusters through the use of unlicensed bands, and all communication with the base station is performed through the cluster head. In another line of work \cite{WangRengarajan_13}, the authors suggest using out-of-band D2D for traffic spreading, where a user performs sends request and receives downlink content on behalf of another user, in a base-station transparent manner. In both works, the authors numerically demonstrate various throughput, fairness and energy-efficiency benefits of D2D. 
In contrast to these works, our physical-layer scheme is not based on routing; it explicitly uses the direct link from the base station to the destination user in addition to the relay links. We also consider a much more general scheduling algorithm based on utility optimization and dynamic user pairing, while accounting for fairness and cooperation cost.

{\noindent \bf Paper outline: } In
Section~\ref{sec:model}, we present our model and proposed architecture. In Section~\ref{sec:phy},
we present the physical-layer cooperation scheme, prove its
approximate optimality, describe its extension to MU-MIMO,
and study the scaling behavior of the minimum effective $\SNR$ in the
network. In
Section~\ref{sec:scheduling}, we formulate the downlink cooperative
scheduling problem within the utility optimization framework and present
our scheduling algorithm, along with the proposed cooperative utility metric, and in Section~\ref{sec:sim}, we present our simulation results.

\section{Model and Overview of the Architecture}\label{sec:model}
\subsection{Overview of the Architecture}

Consider a single cell in a multi-cell downlink cellular system\footnote{Since the base stations are uncoordinated, for the purposes of designing a scheduling algorithm, it is sufficient to consider a single cell in isolation. We will consider the multi-cell system in Section V for evaluation purposes.} with a base station equipped with $M$ antennas, and a set $\mathcal{N}$ of single-antenna users, where $\left| \mathcal{N}\right|=n$. An example operation is depicted in Figure~\ref{fig:schedule}. 
\begin{figure}
\centering
  \includegraphics[scale=0.8]{./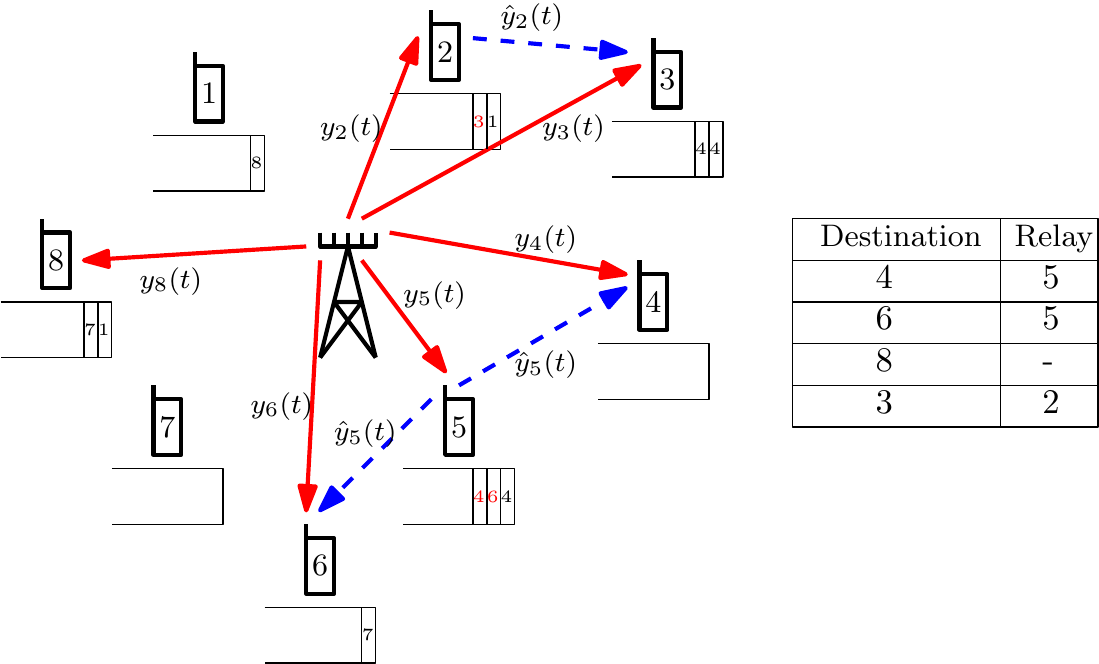}
  \caption{An example scheduling decision made by the base station, where the table reflects the selected active set. The red arrows denote the corresponding downlink transmissions, all taking place throughout frame $t$, and the dashed blue arrows represent scheduled side-channel transmissions, taking place at a later time, determined by multiple-access protocol $\mathcal{I}$. Once the active set is selected, the required side-channel transmissions are queued at the users (the transmissions scheduled in frame $t$ are highlighted in red). In this example, user $5$ is selected to share a function of its channel output to relay for users $4$ and $6$; user $2$ is selected to relay for user $3$; and user $8$ is scheduled without any relays.}
  \label{fig:schedule}
\end{figure}
We assume slotted time, with $m$ representing the physical-layer time index. A frame, indexed by $t$, is defined as $T$ consecutive discrete time slots\footnote{We will use square brackets to denote physical-layer time indices, and round brackets for frame indices.}. We will use the notation $m \sqsubset t$ to mean that the physical-layer slot $m$ lies within the frame $t$, \emph{i.e.}, $(t-1)T < m \leq tT$.

In the proposed architecture, the base station selects an active set $\mathcal{A}(t) \subseteq \mathcal{N}^2$ for each frame $t$, which consists of pairs $(i,j)$ of users, where the first index $i$ refers to the destination node scheduled for data, and the second index $j$ refers to user assigned as a relay for user $i$. We define $(i,i)$ to represent the case where user $i$ is scheduled with no relay assigned. Note that a user can be designated as a relay for a stream and a destination for another stream simultaneously, as exemplified in Figure~\ref{fig:schedule}. It is also possible within this framework to assign multiple relays to the same destination by having $(i,j),(i,k) \in \mathcal{A}(t)$. We define $A_{ij}(t)=1$ if $(i,j) \in \mathcal{A}(t)$, and $A_{ij}(t)=0$ otherwise.

Once the selection $\mathcal{A}(t)$ is made, the base station transmits a sequence of vectors $\mathbf{x}[m] \in \mathbb{C}^{M}$, $m=1,\dots,T$, over $M$ antennas and $T$ time slots of the frame $t$. The channel output $y_i[m]$ at user $i$ is given by
\begin{align}
y_i[m] = \mathbf{h}_i^*(t) \mathbf{x}[m] + w[m], \label{eq:main_channel}
\end{align}
for $m \sqsubset t$, where $\mathbf{h}_i (t) \in \mathbb{C}^M$ is the time-variant complex channel vector of user $j$ at frame $t$ (note that we are assuming that channel stays constant within a frame, but can arbitrarily vary over time slots), $\mathbf{x}[m]$ is the input vector to the channel at time $m$, and $w[m] \sim \cgauss{0}{1}$ is the circularly symmetric complex white Gaussian noise process. We assume an average power constraint $\frac{1}{T}\sum_{m=1}^T \mathbf{tr} \lp  \mathbf{x}[m]  \mathbf{x}^*[m]\rp \leq 1$, and define $H(t) := \lbp  \mathbf{h}_i(t)\rbp_i$.

If user $j$ is assigned as a relay for user $i$ at frame $t$, a transmission from user $j$ to $i$ is queued at user $j$, to be transmitted at a later frame $\tau > t$. At frame $\tau$, user $j$ transmits the sequence $x_j[m] \in \mathbb{C}$, $m \sqsubset \tau$, which is a deterministic function of the receptions corresponding to earlier frame $t$, \emph{i.e.}, $y_j[\wtild m]$ for $\wtild m \sqsubset t$. User $i$ performs decoding by combining its own channel outputs $y_i[m]$, $m \sqsubset t$, with the receptions from $j$, $\bar y_j [\wtild m]$, $\wtild m \sqsubset \tau$, which is a function of $y_j[m]$, $m \sqsubset t$ (the specific D2D link model generating $\bar y_j [\wtild m]$ will be discussed later). Note that user $i$ can combine receptions corresponding to multiple frames to decode.

We will specify the details of the model and formulate the specific mathematical problem.

\begin{single}
\begin{table}\centering
\scriptsize 
\caption{Notation for variables corresponding to the D2D link $(i,j)$}
\begin{tabular}{| c | l || c | l | }
\hline
{\bf Notation} & {\bf Explanation} & {\bf Notation} & {\bf Explanation} \\
\hline
$g_{ij}$ & D2D channel gain  &$Q_{ij}$ & State of the queue at relay $j$ for destination $i$  \\
$\phi_{ij}, \Phi$ & Path-loss factor(s)   & $\mu_{ij}$ & Binary service process (transmission permission indicator) for the queue $Q_{ij}$   \\
$\zeta_{ij}, Z$ & Fading parameter(s) & $A_{ij}$ & Binary arrival process (D2D link scheduling indicator) for the queue $Q_{ij}$ \\
$B_{ij}$ & D2D link availability indicator & $J_{ij}$  & D2D interference indicator  \\
$C_{ij}$ & The capacity of the D2D link  & $\beta_{ij}$ & Arrival rate to the queue $Q_{ij}$ \\
\hline
\end{tabular} 
\label{tb:notation}
\end{table}
\end{single}
\begin{double}
\begin{table}\centering
\scriptsize 
\caption{Notation for variables corresponding to the D2D link $(i,j)$}
\begin{tabular}{| c | l | }
\hline
{\bf Notation} & {\bf Explanation} \\
\hline
$g_{ij}$ & D2D channel gain  \\
$\phi_{ij}$ & Path-loss factor   \\ 
$\zeta_{ij}$ & Fading parameter \\
$B_{ij}$ & D2D link availability indicator \\
$C_{ij}$ & The capacity of the D2D link  \\
$Q_{ij}$ & State of the queue at relay $j$ for destination $i$  \\
$\mu_{ij}$ & Binary service process for the queue $Q_{ij}$   \\
$A_{ij}$ & Binary arrival process for the queue $Q_{ij}$ \\
$J_{ij}$  & D2D interference indicator  \\
 $\beta_{ij}$ & Arrival rate to the queue $Q_{ij}$ \\
\hline
\end{tabular} 
\label{tb:notation}
\end{table}
\end{double}

\subsection{D2D Link Model and Conflict Graph}

For any pair $(i,j) \in \mathcal{N}^2, i \neq j$, the time-variant channel gain is given by $g_{ij}(t) = \sqrt{\phi_{ij}} \zeta_{ij} (t)$, where $\phi_{ij} \in \mathbb{R}$ is the path loss component, and $\zeta_{ij} (t) \sim \mathcal{CN}(0,1)$ is the fading component for the pair $(i,j)$, i.i.d. across MAC layer slots. We assume reciprocal side-channels, \emph{i.e.}, $g_{ij}(t)=g_{ji}(t)$, and define $Z(t) := \lbp {\zeta}_{ij}(t)\rbp_{i,j}$ and $\Phi := \lbp  \phi_{ij}\rbp_{i,j}$. 

We define $B_{ij} (t)$ as an i.i.d. $Bernoulli(p_{ij})$ process for each $(i,j) \in \mathcal{N}^2, i \neq j$, representing whether or not the link $(i,j)$ is available at frame $t$. This models unavailability due to external transmissions {(\emph{e.g.}, WiFi access points, or another application on the same device attempting to use WiFi etc.)} in the same unlicensed band. The realization of $B_{ij}(t)$ is known at the users strictly causally (at frame $t+1$), and unknown at the base station. We define $B(t) := \{ B_{ij} (t)\}_{i\neq j}$.

Define the connectivity graph $\mathcal{G}=\lp \mathcal{V}, \mathcal{E}\rp$ such that $\mathcal{V} = \mathcal{N}$, and $\mathcal{E}$ is such that $\lp i,j\rp \in \mathcal{E}$ if $i = j$ or $\phi_{ij} > \theta$ for some threshold $\theta > 0$ (\emph{e.g.}, noise level). We further define the conflict graph $\mathcal{G}_c = \lp \mathcal{V}_c, \mathcal{E}_c\rp$ such that
\begin{single}
\begin{align}
\mathcal{V}_c &:= \lbp (i,j) \in [n]^2: i \neq j\rbp \label{conflict_graph}\\
\mathcal{E}_c &:= \lbp \lp (i,j), (k, \ell)\rp: (i,j) \neq (k, \ell) \text{ and } \lp \lp i, \ell\rp \in \mathcal{E} \text{ or }  \lp j, k\rp \in \mathcal{E} \rp\rbp. \notag
\end{align}
\end{single}
\begin{double}
\begin{align}
\mathcal{V}_c &:= \lbp (i,j) \in [n]^2: i \neq j\rbp \label{conflict_graph}\\
\mathcal{E}_c &:= \{ \lp (i,j), (k, \ell)\rp: (i,j) \neq (k, \ell) \text{ and }\notag\\
&\qquad  \lp \lp i, \ell\rp \in \mathcal{E} \text{ or }  \lp j, k\rp \in \mathcal{E} \rp\}. \notag
\end{align}
\end{double}
The conflict graph represents the pairs of D2D transmissions $(i,j), (k,\ell)$ that are not allowed to simultaneously occur due to interference\footnote{The interference model that induces the conflict graph $\mathcal{G}_c$ as defined in \eqref{conflict_graph} is similar to the two-hop interference model of \cite{Arikan_84}, but also takes into account the directionality of the transmission}. Given these definitions, the channel from user $j \in \mathcal{N}$ to user $i \in \mathcal{N} - \lbp j\rbp$ is modeled by
\begin{align*}
\bar y_i[m] = B_{ij}(t)J_{ij}(t)\lp g_{ij}(t)x_j[m] + \bar w_i [m]\rp
\end{align*}
for $m \sqsubset t$, where 
\begin{single}
\begin{align*}
J_{ij} (t) &= \left\{
\begin{array}{ll}
0, & \text{if $\exists (k,\ell)$ s.t. $\lp (i, j), (k,\ell)\rp \in \mathcal{E}_c$ and $\| x_{\ell}[\tilde m]\|^2 > 0$ for some $\tilde m \sqsubset t$} \\
1, & \text{otherwise}
\end{array}
\right.,
\end{align*}
\end{single}
\begin{double}
\begin{align*}
J_{ij} (t) &= \left\{
\begin{array}{ll}
0, & \begin{array}{l} \text{if $\exists (k,\ell)$ s.t. $\lp (i, j), (k,\ell)\rp \in \mathcal{E}_c$ and } \\ \text{$\| x_{\ell}[\tilde m]\|^2 > 0$ for some $\tilde m \sqsubset t$} \end{array} \\
1, & \text{otherwise}
\end{array}
\right.,
\end{align*}
\end{double}
which captures interference between conflicting D2D transmissions, and $\bar w_i[m] \sim \cgauss{0}{1}$ is the complex white Gaussian noise process. We assume an average power constraint $\frac{1}{T}\sum_{m=1}^T \| x_j[m]\|^2 \leq 1$, absorbing the input power into the channel gain. The capacity of the D2D link $(i,j)$ at time $t$ (assuming it is available) is given by $C_{ij} (t) := \log\lp 1+\| g_{ij}(t)\|^2\rp$. We assume the base station has knowledge of the average $\SNR$, \emph{i.e.}, the path-loss component $\phi_{ij}$ for each $(i,j)$ pair, but has no knowledge of the fading realization $\zeta_{ij}(t)$.

\subsection{D2D Transmission Queues}

We assume that each user $j \in \mathcal{N}$ maintains $\lp n-1\rp$ queues, whose states are given by $Q_{ij} (t)$, $i \in \mathcal{N}-\lbp j \rbp$, each representing the number of slots of transmission\footnote{Note that $Q_{ij} (t)$ does not represent the number of \emph{bits} to be transmitted, but the number of \emph{slots of transmission}. This is because the reception of relay does not directly translate into information bits, but is rather a \emph{refinement} of the reception of the destination node.} to be delivered to node $i$. We assume the queue states evolve according to
\begin{align}
Q_{ij}(t+1) = \lp Q_{ij}(t) - B_{ij}(t)J_{ij}(t)\mu_{ij} (t)\rp^+ + A_{ij}(t), \label{eq:queue_update}
\end{align}
where $\mu_{ij} (t)$ is a binary process that is induced by the multiple-access protocol $\mathcal{I}$ used by the nodes, indicating whether or not the flow $(i,j)$ is granted permission for transmission at frame $t$. The protocol $\mathcal{I}$ is a mapping from the current queue states $\lbp Q_{ij}(t)\rbp_{i \neq j}$ and the D2D interference structure $\lbp J_{ij}(t)\rbp_{i \neq j}$ to the binary service processes $\lbp \mu_{ij}(t)\rbp_{i \neq j}$. 

We define the average arrival rates as $\beta_{ij} (t) := \frac{1}{t}\sum_{\tau = 1}^t A_{ij}(t)$, and $\beta_{ij} := \limsup _{t \to \infty} \beta_{ij} (t)$. For a given vector of arrival rates $\beta := \lbp \beta_{i,j} \rbp_{i \neq j}$, the system is said to be \emph{stable} if the average queue sizes are bounded, \emph{i.e.}, for all $\lp i,j \rp$, $\limsup_{t \to \infty} \mathbb{E}\lb Q_{ij} (t)\rb < \infty$. The set of arrival-rate vectors $\beta$ for which there exists service processes $\lbp \mu_{ij}(t)\rbp_{i\neq j}$ such that the system is stable is called the \emph{stability region} of the queueing system, and will be denoted by $\Lambda$. Note that the arrival rates need to remain in the stability region in order to ensure that the D2D transmissions eventually occur with a finite delay. Within the scope of this paper, we do not focus on the details of $\mathcal{I}$, and simply assume that the nodes implement a protocol $\mathcal{I}$ that achieves the stability region $\Lambda$, \emph{i.e.}, if the arrival rates $\beta \in\Lambda$, protocol $\mathcal{I}$ can find a schedule for D2D transmissions such that each transmission is successfully delivered with finite delay\footnote{One can design such a protocol by having the nodes coordinate with the base station to circumvent the hidden terminal problem, and then use any of the existing stability-region-achieving distributed scheduling algorithms, \emph{e.g.}, \cite{TassiulasEphremides_92, LibinWalrand_10, ModianoShah_06}}.

\subsection{Problem Formulation}
If the vector of arrival rates $\beta \in \Lambda$, we can assume that a noiseless logical link with capacity $\bar R_{ij} (t)$ is available at time $t$, where $\bar R_{ij} (t) = C_{ij}(\tau)$ for some finite $\tau \geq t$, where $\tau$ is the frame where the actual physical D2D transmission takes place, carrying traffic scheduled at frame $t$. Note that at frame $t$, the base station has no knowledge of $C_{ij}(\tau)$, but can still compute the average capacity $\mathbb{E}_{Z(\tau)}\lb  C_{ij}(\tau)\left| \phi_{ij}\right.\rb$ for a given link $(i,j)$, for a transmission decision. We define $\bar Z(t) = Z(\tau)$. Let $\mathcal{C}(t)$ denote the instantaneous information-theoretic capacity region of the system consisting of the channels \eqref{eq:main_channel} and the set of logical links $(i,j)$ with capacities $\bar R_{ij} (t) A_{ij}(t)$, with no knowledge of $\bar Z(t)$ at the base station\footnote{Note that the D2D link is assumed to have zero capacity if $A_{ij}(t)=0$, \emph{i.e.}, if the base station did not schedule the link $(i,j)$ at time $t$.}. A physical-layer strategy $\gamma$ is a map $\lp H(t), \Phi, \bar Z(t) \rp \mapsto \lbp R_{i} (t)\rbp_{i}$ whose output vector (interpreted as the vector of information rates delivered to users, in bits/s/Hz) satisfies $\{ R_i (t) \}_i \in \mathcal{C} (t)$ for all $i$ and $t$.

Note that even though the transmission decisions of the base station does not depend on the unknown components of the network state $Z(t)$, by allowing the rate vector $\lbp R_{i} (t)\rbp_{i}$ to be anywhere inside the instantaneous capacity region, we implicitly assume an idealized rate adaptation scenario, where once the transmission occurs, the capacity corresponding to the realization of $\bar Z(t)$ is achievable. In practice this can be implemented through incremental redundancy schemes such as hybrid ARQ.

Assume an infinite backlog of data to be transmitted to each user $i\in\mathcal{N}$. The long-term average rate of user $i$ up to time $t$ is defined as $r_i(t) = \frac{1}{t} \sum_{\tau = 1}^t R_i(\tau)$, where $R_i(\tau)$ is the rate delivered to user $i$ by the physical layer scheme $\gamma(t)$ chosen at time $t$.
The long-term throughput of user $i$ is $r_i = \liminf_{t \to \infty} r_i (t)$. Define $\mathbf{r}(t) = \{ r_i(t)\}_i$, and $\mathbf{r} = \lbp r_i \rbp_i$.

Given the stability-region-achieving D2D MAC protocol $\mathcal{I}$, and a set of physical-layer strategies $\Gamma$, at every frame $t$, the base station chooses an active set $\mathcal{A}(t)$, and a strategy $\gamma (t)\in \Gamma$ consistent with $\mathcal{A}(t)$. A scheduling policy $\pi$ is a collection of mappings 
\begin{align*}
\lp \mathbf{r}(t-1), \beta (t-1), H(t), \Phi \rp \mapsto \lp \mathcal{A}(t), \gamma(t) \rp,
\end{align*}
indexed by $t$. If $\beta^\pi$ represents the vector of arrival rates to the queues under policy $\pi$, and $\mathbf{r}^\pi$ the throughputs under policy $\pi$, then the policy $\pi$ is called \emph{stable} if $\beta^{\pi} \in \Lambda$. Our goal is to design a stable policy $\pi$ that maximizes any given concave, twice-differentiable \emph{network utility function} $U (\mathbf{r}^\pi, \beta^\pi)$ of the throughputs and the fraction of time nodes spend relaying for others\footnote{Note that since $\pi$ is stable, the relaying fraction is the same quantity as the arrival rate $\beta$.}.



\section{Downlink Physical Layer: Achievable Rates}\label{sec:phy}
In this section, we describe a class of physical layer cooperation strategies that will be used as a building block for our proposed architecture, and derive its achievable rates. We will first focus on the two-user case, where we show the approximate information-theoretic optimality of the scheme. We consider the extension to MU-MIMO in Section~\ref{sec:mumimo}.

The main idea behind the cooperation strategy is that the D2D side-channel can be used by the destination node to access a quantized version of the channel output of the relay node, which combined with its own channel output, effectively forms a MIMO system. The base station can perform signaling based on singular value decomposition over this effective MIMO channel, to form two parallel AWGN channels accessible by the destination node. Next, we describe the strategy in detail, and derive the rate it achieves.

\subsection{Cooperation Strategy} \label{sec:cooperation}
We isolate a particular user pair $(i,j)$, and without loss of generality assume $(i,j)=(1,2)$. The effective network model is given by\footnote{We focus on a particular frame $t$ to characterize the instantaneous capacity, \emph{i.e.}, the achievable rate for a given set of network parameters.}
\begin{align}
y_i &= \mathbf{h}^*_i \mathbf{x} + z_i, \; i=1,2, \;\;\;\; \bar y_1 = g_{12}x_2 + \bar z_1, \label{eq:phy_chan1}
\end{align}
where $x_2[m]$ is a function of $y_2^{m-1}$, the past receptions of user $2$, and user 1 has access to $y_1$ and $\bar y_1$.


By Wyner-Ziv Theorem \cite{WynerZiv_76}, if 
\begin{align*}
\bar R_{12} \geq \min_{\substack{p(w|y_2) \\ \what y_2(w,y_1)}: \E{\|\hat y_2 - y_2\|^2} \leq D} I(y_2;w|y_1)
\end{align*}
for a given joint distribution of channel outputs $p(y_1, y_2)$, then given a block of outputs $y_2^N$, user $1$ can recover a quantized version $\hat y_{2}^N$ of outputs such that\footnote{This is achieved by performing appropriate quantization and binning of the channel outputs at user $2$ (see \cite{WynerZiv_76} for details).} $\E{\|\hat y_2 - y_2\|^2} \leq D$.

Choosing $\mathbf{x} \sim \mathcal{CN}(\mathbf{0},\mathbf{Q})$, i.i.d. over time, we get $(y_1, y_2) \sim \mathcal{CN}\lp \mathbf{0}, \mathbf{\Sigma} \rp$ i.i.d. over time, for some covariance matrix $\Sigma = \mathbf{HQH}^*$ induced by the channel, with $\mathbf{H}=\lb \mathbf{h}_1 \; \mathbf{h}_2\rb^*$. We further choose $w = y_2+ q_{2}$, where $q_{2} \sim \mathcal{CN}(0, D)$ is independent of all other variables, and we set the mapping $\hat y_2(w, y_1) = w$. We also choose $D = \frac{\sigma^2_{2|1}}{\left| g_{12}\right|^2}$, where $\sigma^2_{2|1} = \Sigma_{22} - \Sigma_{21} \Sigma_{11}^{-1} \Sigma_{12}$ is the conditional variance of $y_2$ given $y_1$. With this set of choices, it can be shown that user $1$ can access $\hat y_2 = y_2 + q_2$, where $q_2 \sim \mathcal{CN}(0, D)$.

Once user $1$ recovers $\hat y_2$, it can construct the effective MIMO channel
\begin{align}
\mathbf{y} = \lb \begin{array}{c} y_1 \\ \hat y_2\end{array}\rb = \mathbf{Hx} + \lb \begin{array}{c} z_1 \\ z_2 + q_2\end{array}\rb. \label{eq:eff_mimo}
\end{align}
It follows that all rates $R < R_{\text{MIMO}}$ are achievable over the effective MIMO channel \eqref{eq:eff_mimo}, where 
\begin{align*}
 R_{\text{MIMO}} &= \max_{\mathrm{tr}(\mathbf{Q}) \leq 1} \log \left| \mathbf{I}_2 + \mathbf{K}^{-1}\mathbf{HQH}^*\right|,
\end{align*}
with $\mathbf{K} = \diag\lp 1, \;\; 1 + \frac{\sigma^2_{2|1}}{\left| g_{12}\right|^2}\rp$. Note that due to orthogonality of the links incoming to the destination, the encoding and decoding is significantly simplified compared to traditional Gaussian relay channel with superposition, since there is no need for complex schemes such as block Markov encoding and joint decoding, and point-to-point MIMO codes are sufficient from the point of view of the source.

Note that the MIMO channel \eqref{eq:eff_mimo} can be equivalently viewed as two parallel AWGN channels, using the singular value decomposition (SVD). It will also be useful to lower bound the rates individually achievable over these two parallel streams. Assuming $\mathbf{H}=\mathbf{USV}^*$ is an SVD, it can be shown that the rates 
\begin{align}
R_{\text{MIMO}, d} &= \log\lp 1 + \frac{s_d^2 P_d}{1 + \left| u_{2d}\right|^2\frac{\sigma^2_{2|1}}{\left| g_{12}\right|^2}}\rp, \; d=1,2 \label{eq:eff_rate1}
\end{align}
are achievable respectively\footnote{We perform the SVD on $\mathbf{H}$ directly, instead of performing on $\mathbf{K}^{-1/2}\mathbf{H}$, in order to obtain closed-form expressions for the subsequent analysis.}, over the two streams, by transmit beamforming using the matrix $\mathbf{V}$ and receive beamforming using $\mathbf{U}^*$, where $s_d$ is the $d$th singular value, $u_{2,d}$ is the $(2,d)$th element of $\mathbf{U}$, and the power allocation parameters satisfy $P_1+P_2 \leq 1$.

The next theorem shows that the gap between the rate achievable with the cooperation scheme described in the previous subsection is universally within $2$ bits/s/Hz of the capacity of the network. 
\begin{thm}\label{th:gap}
For any set of parameters $\lp \mathbf{H}, \bar R_{12}, M\rp$, the capacity $\bar C$ of the MIMO single relay channel with orthogonal links from relay to destination and from source to destination satisfies $\bar C \geq  R_{\text{MIMO}} \geq \bar C - 2$.
\end{thm}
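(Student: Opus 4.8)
The plan is to sandwich the capacity $\bar C$ between the rate $R_{\text{MIMO}}$ that the compress-and-forward scheme achieves and a cut-set upper bound, and then to show that these two quantities differ by at most an additive constant. The lower bound $\bar C \ge R_{\text{MIMO}}$ is immediate, since the scheme of Section~\ref{sec:cooperation} is a legitimate coding strategy for this relay channel. For the upper bound I would first specialize the cut-set bound: because the relay-to-destination link is a separate noiseless pipe of rate $\bar R_{12}$ and the source-to-\{destination, relay\} link is a Gaussian MISO broadcast channel, each of the two cuts is controlled by a functional of the input covariance alone --- the broadcast cut by $\log\big|\mathbf{I}_2 + \mathbf{H}\mathbf{Q}\mathbf{H}^*\big|$ (conditioning reduces entropy, and a Gaussian input maximizes differential entropy under a covariance constraint), and the multiple-access cut by $\log\!\big(1 + \mathbf{h}_1^*\mathbf{Q}\mathbf{h}_1\big) + \bar R_{12}$ (here $Y_1$ is a function of $X$ alone, so the relay's transmit symbol is conditionally independent of $Y_1$ given $X$, and the orthogonal pipe carries at most $\bar R_{12}$ bits). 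This yields
\begin{align*}
\bar C \;\le\; \max_{\mathbf{Q} \succeq 0,\ \mathrm{tr}(\mathbf{Q}) \le 1}\ \min\!\left\{\ \log\big|\mathbf{I}_2 + \mathbf{H}\mathbf{Q}\mathbf{H}^*\big|,\ \ \log\!\big(1 + \mathbf{h}_1^*\mathbf{Q}\mathbf{h}_1\big) + \bar R_{12}\ \right\}.
\end{align*}

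Next I would rewrite the achievable rate in a comparable form. For a fixed input covariance $\mathbf{Q}$, the Wyner--Ziv step can be run with the quantization distortion $D$ for which the side-information rate is exactly $\bar R_{12}$; carrying out the $2 \times 2$ determinant computation in \eqref{eq:eff_mimo} and writing $a := 1 + \mathbf{h}_1^*\mathbf{Q}\mathbf{h}_1 \ge 1$, $b := (\mathbf{h}_2^*\mathbf{Q}\mathbf{h}_2)(1 + \mathbf{h}_1^*\mathbf{Q}\mathbf{h}_1) - |\mathbf{h}_1^*\mathbf{Q}\mathbf{h}_2|^2 \ge 0$, and $G := 2^{\bar R_{12}} \ge 1$, the achievable rate for that $\mathbf{Q}$ collapses to
\begin{align*}
\log\big|\mathbf{I}_2 + \mathbf{K}^{-1}\mathbf{H}\mathbf{Q}\mathbf{H}^*\big| \;=\; \log\frac{aG(a+b)}{aG+b},
\end{align*}
while the two cut-set terms for the same $\mathbf{Q}$ become exactly $\log(a+b)$ and $\log(aG)$.

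It then remains to establish the elementary scalar inequality $\dfrac{aG(a+b)}{aG+b} \ge \dfrac{1}{4}\min\{a+b,\ aG\}$ for all $a \ge 1$, $b \ge 0$, $G \ge 1$. I would split into two cases: if $a+b \le aG$ then $aG+b \le 2aG$, so the left-hand side is at least $\tfrac12(a+b)$; if $aG \le a+b$ then $aG+b \le 2(a+b)$, so the left-hand side is at least $\tfrac12 aG$. In either case the gap to $\min\{\log(a+b),\log(aG)\}$ is at most one bit, comfortably within the claimed two. Choosing $\mathbf{Q}$ to be a maximizer of the cut-set bound and combining the three displays then gives $R_{\text{MIMO}} \ge \bar C - 2$.

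The one genuinely delicate point is the first step: verifying that the cut-set bound really collapses to the two MISO-type expressions above, and in particular that one may use the \emph{same} Gaussian input covariance in both cuts, even though the general cut-set bound allows the relay's input to be correlated with the source's and to depend causally on the relay's past observations. The remedy is the standard one --- bound each cut separately by a quantity that depends only on the input covariance --- but it should be written out with some care. Everything afterward is routine $2 \times 2$ determinant algebra and the two-line case check above.
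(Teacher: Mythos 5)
Your proof is correct, and it shares the paper's overall skeleton (compress-and-forward achievability sandwiched against the cut-set bound) but bounds the gap by a genuinely different and in fact sharper argument. The paper's proof never evaluates the achievable rate in closed form: it peels off the quantization penalty via Weyl's inequality, writing $\log\left|\mathbf{I}_2+\mathbf{K}^{-1}\mathbf{H}\mathbf{Q}\mathbf{H}^*\right| \geq \log\left|\mathbf{I}_2+\mathbf{H}\mathbf{Q}\mathbf{H}^*\right| + \log\eta$ with $\eta = \lp 1+\sigma^2_{2|1}/|g_{12}|^2\rp^{-1}$, then loses one bit bounding $\eta$ by a two-case estimate and a second bit replacing $\log^+\lp|g_{12}|^2\rp$ by $\log\lp 1+|g_{12}|^2\rp - 1$ to match the cut-set expression --- hence the constant $2$. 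You instead compute the $2\times 2$ determinant exactly at the Wyner--Ziv distortion matched to $\bar R_{12}$, obtaining $\frac{aG(a+b)}{aG+b}$ (I verified this algebra: with $D=\sigma^2_{2|1}/(G-1)$ and $\sigma^2_{2|1}=(a+b)/a$ the numerator and denominator work out as you claim, and $b\geq 0$ follows from $\Sigma\succeq 0$), and your two-case harmonic-mean inequality shows this is within a factor $2$ of $\min\{a+b,aG\}$, i.e., a one-bit gap. This buys a strictly stronger conclusion ($\bar C - 1$ rather than $\bar C - 2$) at the cost of the explicit determinant computation; the paper's route avoids that computation but is looser. Your treatment of the cut-set specialization --- bounding each cut separately by a functional of the marginal input covariance alone, so that a single $\mathbf{Q}$ can be carried through both cuts and into the achievable rate --- is the right fix for the point you flag as delicate, and is consistent with (indeed more explicit than) what the paper does, since the paper simply invokes the cut-set bound of Cover and El Gamal without derivation.
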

The proof is provided in the Appendix C.

\begin{remark}
The relay channel with orthogonal links from relay to destination and from source to destination was studied by \cite{LiangVeeravalli_05} and \cite{ZahediMohseni_04}. In the former, the authors consider a relaying strategy based on decode-and-forward relaying, and focus on performance optimization problems such as optimal bandwidth allocation. The latter work focuses on linear relaying functions for such channels, and characterizes the achievable rates for scalar AWGN case. Here, we propose a relaying scheme based on compress-and-forward \cite{CoverElGamal_79} that achieves a rate that is within 2 bits/s/Hz of the information-theoretic capacity for the MIMO case. 
\end{remark}
\begin{remark}
Note that this strategy can also be implemented through quantize-map-forward relaying. Although the proposed architecture supports other relaying strategies (\emph{e.g.}, amplify-forward, decode-forward etc.), we stick with compress-forward (or quantize-map-forward implementation) due to the theoretical approximate optimality \cite{AvestimehrDiggavi_11} as well as practical feasibility, which was shown in \cite{DuarteSengupta_13} through real testbed implementation.
\end{remark}


\subsection{Cooperation with MU-MIMO} \label{sec:mumimo}
In this subsection, we demonstrate how the scheme described for two users in the previous subsection can be extended to MU-MIMO with pairs of cooperative users.


Given the set $\mathcal{N}$ of users, let us index all possible downlink streams that can be generated by the scheme by $\lp i,j,d \rp \in \mathcal{N}^2 \times \{ 1,2\}$, where $(i,j)$ is represents the cooperative pair, and $d$ represents the stream index corresponding to this pair. We assume $d \neq 2$ if $i=j$, representing the case where user $i$ is scheduled without a relay. 

By a slight abuse of notation, we assume that a schedule set $\mathcal{S} \subseteq \mathcal{N}^2 \times \{ 1,2\}$ is scheduled, consisting of such triples $\lp i,j,d\rp$, where $(i,j,d) \in \mathcal{S}$ for some $k$ if and only if $A_{ij}=1$ (note that schedule set also contains the stream index unlike active set $\mathcal{A}$). Next, consider the ``virtual users'' $\lp i,j,d \rp \in \mathcal{S}$ with the channels
\begin{single}
\begin{align*}
\tilde y_{ijd} &:= \mathbf{u}^*_{ijd} \lb \begin{array}{c} y_i \\ \hat y_j\end{array}\rb = \mathbf{u}^*_{ijd}\mathbf{H}_{ij}\mathbf{x} + \mathbf{u}^*_{ijd}\lb \begin{array}{c} z_i \\ z_j + q_{ij}\end{array}\rb := \tilde h_{ijd}^*\mathbf{x} + \tilde z_{ijd},
\end{align*}
\end{single}
\begin{double}
\begin{align*}
\tilde y_{ijd} &:= \mathbf{u}^*_{ijd} \lb \begin{array}{c} y_i \\ \hat y_j\end{array}\rb = \mathbf{u}^*_{ijd}\mathbf{H}_{ij}\mathbf{x} + \mathbf{u}^*_{ijd}\lb \begin{array}{c} z_i \\ z_j + q_{ij}\end{array}\rb \\
& := \tilde h_{ijd}^*\mathbf{x} + \tilde z_{ijd},
\end{align*}
\end{double}
where $\mathbf{H}_{ij} = \lb\mathbf{h}_i \; \mathbf{h}_j\rb^*$, and assuming $\mathbf{H}_{ij} = \mathbf{U}_{ij}\mathbf{S}_{ij}\mathbf{V}^*_{ij}$ is an SVD of $\mathbf{H}_{ij}$, $\mathbf{u}_{ijd}$ is the $k$th column of $\mathbf{U}_{ij}$. By convention, we assume that $\mathbf{U}_{ii}=\lb \begin{array}{cc} 1&0 \end{array}\rb^*$. 
The variance of $\tilde z_{ijd}$ is given by $1 + \left| u_{ijd} (2) \right|^2D_{ij}$, where $u_{ijd} (2)$ is the second element of $\mathbf{u}_{ijd}$, and $D_{ij}=\frac{\sigma^2_{j|i}}{\left| g_{ij}\right|^2}$ is the distortion introduced by quantization at node $j$. 
Note that, when $i=j$, we have $\mathbf{\tilde h}_{ijd} = \mathbf{h}_i$, and we set $\bar R_{ii} = \infty$ so that $D_{ii} = 0$.

Note that through the use of SVD over the virtual MIMO channel \eqref{eq:eff_mimo}, we have reduced the system into a set of $\left| \mathcal{S}\right|$ single-antenna virtual users with channel vectors $\frac{1}{1 + \left| u_{ijd} (2) \right|^2D_{ij}} \mathbf{\tilde h}_{ijd}$. Given such a set of channel vectors, one can implement any MU-MIMO beamforming strategy (\emph{e.g.}, zero-forcing, conjugate beamforming, SLR maximization etc.), by precoding the transmission with the corresponding beamforming matrix.

\subsection{Scaling of $\SNR$ Gain in Clustered Networks} \label{sec:scaling}
In this subsection, we consider a specific clustered network model as an example, and characterize the achievable demodulation $\SNR$ gain due to D2D cooperation for the weakest user in the network, under this model. In this analysis, we use several simplifying assumptions on the channel and network model for analytical tractability, in order to get a feel for the scale of the possible gains that can be attained through cooperation. This simplification is limited to the scope of this particular subsection, and the results in the rest of the paper do not depend on these assumptions.

Consider a network where users are clustered in a circular area of radius $r$, 
whose center is a distance $d$ away from the base station, where $r \ll d$. The users are assumed to be uniformly distributed within the circular area. In general, a network might consist of several such clusters, but here we focus on one, assuming other clusters are geographically far relative to $r$.

We assume that the downlink channel vector of user $i$ at time $t$ is modeled by\footnote{This is written for a uniform linear transmit array for simplicity, but our analysis using this model can be generalized for any array configuration.}
\begin{align*}   
\mathbf{h}_i(t) = \sqrt{\rho} \sum_{k=1}^P \xi_{i,k}(t) \mathbf{e}\lp \theta_{i,k}(t)\rp,
\end{align*}
where $\rho$ is the path loss factor (assumed constant across users in the same cluster since $r \ll d$), $P$ is the number of signal paths, $\xi_{i,k}(t) \sim \mathcal{CN}(0,1)$ is the complex path gain for the $k$th path of user $i$ at time $t$, $\theta_{i,k}$ is the angle of departure of the $k$th path of the $i$th user at time $t$, and $\mathbf{e}(\theta)$ is given by
\begin{single}
\begin{align*}
\mathbf{e}(\theta) := \lb
\begin{array}{ccccc}
1 & e^{j2\pi\Delta\cos(\theta)} & e^{j2\pi2\Delta\cos(\theta)} & \dots & e^{j2\pi(M-1)\Delta\cos(\theta)}
\end{array}
\rb^*,
\end{align*}
\end{single}
\begin{double}
\begin{align*}
\mathbf{e}(\theta) := \lb
\begin{array}{ccccc}
1 & e^{j2\pi\Delta\cos(\theta)} & \dots & e^{j2\pi(M-1)\Delta\cos(\theta)}
\end{array}
\rb^*,
\end{align*}
\end{double}
for an antenna separation $\Delta$. The path gains $\xi_{i,k} (t)$ are i.i.d. across different $i$, $k$, and $t$.

Path loss between users is modeled by $\phi_{ij} = \phi_0 d_{ij}^c$ for some constant $\phi_0$, where $d_{ij}$ is the distance between $i$ and $j$, and $c>2$ is the path loss exponent. 

For simplicity of analysis, in this example network we will assume that only one cooperative pair per time slot is scheduled. Our goal is to characterize the cooperation gains in $\SNR$ when one is allowed to choose the most suitable relay $j$ for a given destination $i$. 

Invoking \eqref{eq:eff_rate1}, we define the cooperative $\SNR$ for the pair $(i,j)$, $\SNR_{ij}^{coop}$ to be
\begin{align*}
\SNR_{ij}^{coop} := \frac{s_{ij1}^2}{1 +\left| u_{ij1} (2) \right|^2\frac{\sigma^2_{j|i}}{\left| g_{ij}\right|^2}},
\end{align*}
where $s_{ij1}$ is the first singular value corresponding to the pair $(i,j)$. Since we are interested in the achievable $\SNR$ gain, in defining this quantity, we have allocated all power to only one of the available streams, ignoring the multiplexing gain that could be achieved by scheduling two parallel streams to user $i$. The maximal non-cooperative $\SNR$ for user $i$ is given by $\SNR_i^{non-coop} := \| \mathbf{h}_i\|^2$, achieved by beamforming along the direction of $h_i$. Minimum cooperative and non-cooperative $\SNR$s in the network are respectively defined as
\begin{single}
\begin{align*}
\SNR_{\min}^{coop} &:= \min_{i \in \mathcal{N}} \SNR_{ij^*(i)}^{coop}, \;\;\;\; \SNR_{\min}^{non-coop}:= \min_{i \in \mathcal{N}} \SNR_{i}^{non-coop},
\end{align*}
\end{single}
\begin{double}
\begin{align*}
\SNR_{\min}^{coop} &:= \min_{i \in \mathcal{N}} \SNR_{ij^*(i)}^{coop}, \\
\SNR_{\min}^{non-coop}&:= \min_{i \in \mathcal{N}} \SNR_{i}^{non-coop},
\end{align*}
\end{double}
where $j^*(i) = \arg \max_{j \in \mathcal{N}} \mathbb{E}\lb \left.\SNR_{ij}^{coop} \right| \phi_{ij}, \mathbf{h}_j \rb$, which arises due to relay selection, and the expectation is taken over the D2D side-channel fading $\zeta_{ij} (t)$. 

The next theorem, whose proof is in Appendix A, summarizes our results on how the $\SNR$ of the weakest user in either case scales with the number of users $n$ in the cluster.

\begin{thm}\label{th:scaling}
\begin{single}
\begin{align*}
\lim_{n \to \infty} \Prob{\SNR_{\min}^{coop} <  \frac{1}{2}M\rho\lp \frac{1}{2}\log n - 2\log\log n\rp - 1} = O\lp e^{-\log^2 n + 2\log n} \rp,
\end{align*}
\end{single}
\begin{double}
\begin{align*}
\lim_{n \to \infty} &\Prob{\SNR_{\min}^{coop} <  \frac{1}{2}M\rho\lp \frac{1}{2}\log n - 2\log\log n\rp - 1} \\
&\qquad\qquad\qquad= O\lp e^{-\log^2 n + 2\log n} \rp,
\end{align*}
\end{double}
and
\begin{align*}
\lim_{n \to \infty} \Prob{\SNR_{\min}^{non-coop} > M\rho n^{-\frac{\gamma}{2P}}\psi(2P) } = O\lp e^{-n^{1-\gamma}} \rp,
\end{align*}
for any $0 < \gamma < 1$, where $\psi(\ell) = \lp \ell!\rp^{\frac{1}{\ell}}$, and $P$ is the number of signal paths.
\end{thm}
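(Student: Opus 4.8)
The plan is to establish the two inequalities separately; both are multiuser-diversity (extreme-value) arguments, run in opposite directions. For the non-cooperative bound, since the channels $\mathbf{h}_i$ are i.i.d.\ across users, $\Prob{\SNR_\min^{non-coop} > \epsilon_n} = \left(1 - \Prob{\|\mathbf{h}_i\|^2 < \epsilon_n}\right)^n \leq \exp\left(-n\,\Prob{\|\mathbf{h}_i\|^2 < \epsilon_n}\right)$, so everything reduces to a small-ball estimate: for the threshold $\epsilon_n := M\rho\,n^{-\gamma/(2P)}\psi(2P)$ I would show $\Prob{\|\mathbf{h}_i\|^2 < \epsilon_n}$ is at least of order $n^{-\gamma/2}$, which then yields $O\!\left(e^{-c\,n^{1-\gamma/2}}\right) = O\!\left(e^{-n^{1-\gamma}}\right)$. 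For the cooperative bound, it suffices to exhibit, for each fixed destination $i$, a relay whose cooperative $\SNR$ already has the claimed order, with per-destination failure probability $O(e^{-\log^2 n})$; a union bound over the $n$ destinations then supplies the $e^{2\log n}$ factor.

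For the non-cooperative part I would bound $\|\mathbf{h}_i\|^2$ from above using only the array geometry: since $\|\mathbf{e}(\theta)\|^2 = M$ for every $\theta$, the triangle inequality and Cauchy--Schwarz give $\|\mathbf{h}_i\|^2 = \rho\left\| \sum_{k=1}^P \xi_{i,k}\mathbf{e}(\theta_{i,k})\right\|^2 \leq \rho M\left(\sum_{k=1}^P |\xi_{i,k}|\right)^2$, a bound \emph{uniform in the (random) angles}, so the angle distribution is irrelevant. Hence $\Prob{\|\mathbf{h}_i\|^2 < M\rho\, t^2} \geq \Prob{\sum_{k=1}^P |\xi_{i,k}| < t}$. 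Each $|\xi_{i,k}|$ has the Rayleigh density $2u e^{-u^2}$, which behaves like $2u$ near the origin; integrating the product density over the simplex $\{\sum_k u_k < t,\ u_k > 0\}$ (a Dirichlet integral with all $P$ parameters equal to $2$) gives $\Prob{\sum_k |\xi_{i,k}| < t} = (1+o(1))\frac{2^P}{(2P)!}\,t^{2P}$ as $t \to 0$ -- and a cruder explicit lower bound of the same order, e.g.\ from the event $\{|\xi_{i,k}| < t/P\ \forall k\}$, serves equally well. Setting $t^2 = n^{-\gamma/(2P)}\psi(2P)$, so that $t^{2P} = n^{-\gamma/2}\,((2P)!)^{1/2}$, makes this probability of order $n^{-\gamma/2}$; indeed $\psi(2P) = ((2P)!)^{1/(2P)}$ is precisely the normalization that collapses $t^{2P}$ to a clean constant times $n^{-\gamma/2}$. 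Substituting into the product bound above finishes this half.

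For the cooperative part, the first step is a deterministic lower bound on $\SNR_{ij}^{coop}$. Using $|u_{ij1}(2)| \leq 1$ (an entry of a unitary matrix), $s_{ij1}^2 \geq \|\mathbf{h}_j\|^2$ (the top singular value of $\mathbf{H}_{ij} = [\mathbf{h}_i\ \mathbf{h}_j]^*$ dominates the Euclidean norm of each of its rows), and $\sigma_{j|i}^2 \leq \Sigma_{jj} = \mathbf{h}_j^*\mathbf{Q}\mathbf{h}_j + 1 \leq \|\mathbf{h}_j\|^2 + 1$ (a conditional variance is at most the marginal, with $\mathbf{tr}(\mathbf{Q}) \leq 1$), one gets $\SNR_{ij}^{coop} \geq \frac{\|\mathbf{h}_j\|^2}{1 + (\|\mathbf{h}_j\|^2+1)/|g_{ij}|^2}$, which is $\geq \tfrac12\|\mathbf{h}_j\|^2$ once $|g_{ij}|^2 \geq \|\mathbf{h}_j\|^2 + 1$. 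So it suffices, for each $i$, to find a relay $j$ with $\|\mathbf{h}_j\|^2 \asymp M\rho\log n$ and a D2D link at least that strong, which I would do by a two-stage maximization over the $n-1$ candidates. Conditioning on all departure angles, $\|\mathbf{h}_j\|^2 \geq \rho M E_j$ with $E_j = |\tilde g_{1,j}|^2 \sim \mathrm{Exp}(1)$ the squared projection of $\mathbf{h}_j$ onto the leading eigenvector of $\mathbf{R}_j := \sum_k \mathbf{e}(\theta_{j,k})\mathbf{e}(\theta_{j,k})^*$ (whose largest eigenvalue is $\geq M$ by a Rayleigh-quotient estimate against $\mathbf{e}(\theta_{j,1})$), and the $E_j$ are independent over $j$; so with a threshold of order $\tfrac12 M\rho\log n$ the pool $\mathcal{J} := \{j : \|\mathbf{h}_j\|^2 \geq \text{threshold}\}$ has $\Prob{j \in \mathcal{J}} \gtrsim n^{-1/2}$, hence $|\mathcal{J}| \gtrsim \sqrt n$ except on an event of probability $e^{-\Omega(\sqrt n)}$ (Chernoff). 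Conditioned on $\mathcal{J}$ (a function of the $\mathbf{h}_j$ only), the variables $\{|\zeta_{ij}|^2\}_{j\in\mathcal{J}}$ are i.i.d.\ $\mathrm{Exp}(1)$, and $\Prob{\max_{j \in \mathcal{J}}|\zeta_{ij}|^2 < \tfrac12\log n - 2\log\log n} \leq \exp\!\left(-|\mathcal{J}|\,e^{-(\frac12\log n - 2\log\log n)}\right) \leq e^{-\log^2 n}$ -- this is exactly where the $\log^2 n$ rate is born, from ``best of $\sqrt n = n\cdot n^{-1/2}$ independent exponentials''. On the good event there is a relay $\hat{j}(i) \in \mathcal{J}$ with $\|\mathbf{h}_{\hat{j}(i)}\|^2 \geq \tfrac12 M\rho\log n$ and $|g_{i\hat{j}(i)}|^2 = \phi_{i\hat{j}(i)}|\zeta_{i\hat{j}(i)}|^2 \geq \phi_{\min}(\tfrac12\log n - 2\log\log n)$, using $\phi_{ij} \geq \phi_{\min} > 0$ uniformly over the cluster (all intra-cluster distances are $\leq 2r$); in the cell-edge regime $M\rho \leq \phi_{\min}$ this makes $|g_{i\hat{j}(i)}|^2 \geq \|\mathbf{h}_{\hat{j}(i)}\|^2 + 1$ up to the $\log\log n$ correction, so the deterministic bound gives $\SNR_{i\hat{j}(i)}^{coop} \geq \tfrac12\|\mathbf{h}_{\hat{j}(i)}\|^2 \geq \tfrac14 M\rho\log n$ -- the stated threshold $\tfrac12 M\rho(\tfrac12\log n - 2\log\log n) - 1$ up to lower-order terms. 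A union bound over the $n$ destinations (the $e^{-\Omega(\sqrt n)}$ pool-size failure is far smaller) gives $O(e^{-\log^2 n + 2\log n})$.

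The step I expect to be the real obstacle is the reconciliation of the relay-selection rule with the quantity being bounded: as written, $j^*(i)$ maximizes the \emph{$\zeta$-averaged} cooperative $\SNR$, whereas $\SNR_\min^{coop}$ is evaluated at a fading realization, and the two can differ by a $\log n$ factor (the averaged version saturates at a constant of order $\phi_{ij}$). The way I would handle this is to use that the D2D transmission for a scheduled pair occupies many MAC slots with independent fading, so the effective D2D gain is the best over that window rather than a single draw; with this, the relay the scheme actually uses does at least as well as the $\hat{j}(i)$ constructed above, and the two-stage estimate goes through. A secondary chore is bookkeeping the constants so that the two halves of the maximization split the $\log n$ budget as $(1-\delta)\log n$ (downlink strength) and $\delta\log n$ (D2D link) with $\delta = \tfrac12$, and the $2\log\log n$ corrections are tuned so that $|\mathcal{J}|\,e^{-x} \gtrsim \log^2 n$; this is what pins down the exact threshold and the additive $-1$ (from the ``$+1$''s in the noise/quantization terms). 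The remaining ingredients -- the Rayleigh-quotient bound $\lambda_{\max}(\mathbf{R}_j) \geq M$, the singular-value inequality, the conditional-variance bound, the Dirichlet integral, and the Chernoff and max-of-exponentials estimates -- are all routine.
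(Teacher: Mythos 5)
Your non-cooperative half is correct and follows essentially the paper's route: an angle-independent upper bound on $\|\mathbf{h}_i\|^2$ in terms of the path gains alone, a small-ball estimate of order $n^{-\gamma/2}$ at the threshold $M\rho\, n^{-\gamma/(2P)}\psi(2P)$ (the paper derives it from $\sum_k|\xi_{i,k}|^2$ rather than your Dirichlet integral for $\sum_k|\xi_{i,k}|$, but either gives the right order), and an independence product over the $n$ users. The cooperative half also shares the paper's skeleton: the deterministic bound $\SNR^{coop}_{ij}\geq\tfrac12\min\lp\|\mathbf{h}_j\|^2,|g_{ij}|^2\rp-1$, a pool of roughly $\sqrt n$ users with $\|\mathbf{h}_j\|^2\geq M\rho\lp\tfrac12\log n-2\log\log n\rp$, a product of $\sim\sqrt{n}$ per-candidate failure probabilities of the form $1-n^{-1/2}\log^2 n$ producing the $e^{-\log^2 n}$ rate, and a union bound over the $n$ destinations producing the $e^{2\log n}$ factor.

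The genuine gap is in how you certify the D2D link of the \emph{selected} relay, and it is exactly the obstacle you flag at the end. You obtain $|g_{ij}|^2\gtrsim\log n$ by maximizing the realized fading $|\zeta_{ij}|^2$ over the pool, but $j^*(i)$ maximizes $\mathbb{E}\lb\SNR^{coop}_{ij}\mid\phi_{ij},\mathbf{h}_j\rb$, which does not see $\zeta_{ij}$; the realized fading of the relay the scheme actually uses is therefore a single fresh $\chi^2(2)$ draw, independent of the selection, and your max-over-the-pool does not lower bound it. Your proposed repair (best fading over many MAC slots) changes the model rather than the proof. Separately, even granting the max, you need $\phi_{\min}\geq M\rho$ (your ``cell-edge regime'') to turn $|\zeta|^2\gtrsim\tfrac12\log n$ into $|g_{ij}|^2\geq\|\mathbf{h}_j\|^2+1$; the theorem carries no such hypothesis. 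The paper closes both holes with one device: among the $\sim\sqrt n\log^2 n$ strong-downlink candidates, with probability $1-O(e^{-\log^2 n})$ at least one lies so close to user $i$ that $\phi_{ij}\geq n^{c/4}$ (under the uniform placement each candidate is that close with probability $\propto n^{-1/2}$, so the $e^{-\log^2 n}$ rate is born in the \emph{path loss}, not the fading). With $\phi_{ij}$ polynomially large, the selected relay's realized fading only needs to avoid the small-ball event $|\zeta_{ij}|^2<n^{-\gamma/2}\psi(2)$, which is controlled uniformly over all $n$ selected links by the same min-of-$\chi^2$ proposition used in your non-cooperative half, at cost $O(e^{-n^{1-\gamma}})$ --- negligible against $e^{-\log^2 n}$. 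Rerouting your pool argument from ``max of fading'' to ``max of path loss'' in this way closes the gap; the rest of your outline then goes through.
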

Theorem~\ref{th:scaling} highlights the importance of having multiple options in relay selection. In the non-cooperative case, the factor $n^{-\frac{\gamma}{2P}}$ appears due to the fact that as the number of users in the cluster grows, the minimum is taken over a larger set of users, and hence it is expected for the $\SNR$ of the weakest user to decay, in the absence of cooperation. On the other hand, in the presence of cooperation, the $\SNR$ of the weakest user actually \emph{grows}. This is due to the \emph{multiuser diversity gain}, which is present due to our ability to schedule the user with the most favorable channel conditions as a relay. In other words, as the number of users grows, so does the number of possible paths from the base station to each user, and thus the maximal $\SNR$, even when the weakest user is considered.

\section{Downlink Scheduling with Cooperation}\label{sec:scheduling}
Although our analysis of the $\SNR$ gain with relay selection in the previous section is informative of the potential gains of cooperation, one should note that its scope is limited. 
For a more thorough understanding of how to perform relay selection, we formulate the problem within the network utility maximization framework, which has been extensively studied in the context of resource allocation and scheduling problems for wireless/wired networks \cite{GeorgiadisNeely_06, LinShroff_06}. 

Note that due to interference from other D2D links as well as from external sources, not all D2D users can transmit at a given time, which implicitly imposes a constraint on relay selection. In particular, one needs to ensure that the relays can find a slot for transmission to the destination user after a finite delay, \emph{i.e.}, the relay queues remain stable. The existing cross-layer optimization algorithms, \emph{e.g.}, \cite{GeorgiadisNeely_06, LinShroff_06} (\emph{e.g.}, virtual
queues, dynamic backpressure routing etc.) are not immediately applicable to this scenario. This is firstly because our physical-layer signaling is not based on routing, and makes explicit use of the broadcast nature of the wireless medium, by using both the direct link to the destination node, and the alternate link formed by relay. Consequently, the full network cannot be abstracted into a graph with isolated links, which is widely assumed in the literature. Second, since our utility metric is a function of the average amount of relaying done by users, 
different choices of relay for the same user results in different rewards, even when the rates offered in these choices are equal. Existing formulations do not capture this generalization, which necessitates a special treatment of the downlink resource allocation problem with D2D cooperation.

To achieve this, we take an approach consisting of
\begin{enumerate}
\item A generalization of the single-user scheduling algorithm of \cite{TsibonisGeorgiadis_05} based on the maximization of the derivative of the utility function to the cooperative scenario with relay selection, MU-MIMO, and incomplete network state knowledge,
\item A relay flow control scheme integrated into scheduling, which involves explicitly imposing a set of hard linear constraints on the relaying frequency of users,
\item A novel utility metric that is specific to the cooperative architecture, exhibiting desirable fairness properties.
\end{enumerate}
In particular, the second point requires the use of a novel technique using exponential barrier functions to handle the stability constraint, and the generalizations of the first point requires several modifications to the proof of \cite{TsibonisGeorgiadis_05}.

\subsection{Utility Maximization Formulation}\label{sec:formulation}

As discussed in Section~\ref{sec:model}, our goal is to design a stable policy $\pi$ that maximizes a network utility function $U(\mathbf{r}, \beta) = \sum_{i=1}^n U_i (r_i, \beta_i)$, where $U_i: [0,\infty) \times [0, 1] \to \mathbb{R}$, for $i=1,\dots,n$, are twice continuously differentiable concave functions that are non-decreasing in the first argument, and non-increasing in the second argument. Note that unlike the existing works, the utility function is not only a function of the throughput (first argument), but also a function of the amount of relaying performed for others by the user (second argument). This definition naturally introduces a penalty each time a D2D link is scheduled, and thus the out-of-band resources are not ``free''. The utility function $U_i\lp r_i, \beta_i \rp$ then jointly captures the reward of having received an average throughput of $r_i$, and the cost of having relayed $\beta_i$ fraction of time, for user $i$. We will consider a specific form of utility function in Section~\ref{sec:utility}, and demonstrate its properties in terms of fairness and relaying cost.

Fixing the transmission strategy as the one described in Section~\ref{sec:mumimo}, the problem of selecting the pair $\lp \mathcal{A}(t), \gamma(t)\rp$ reduces to the selection of a schedule set $\mathcal{S}(t) \subseteq \mathcal{N}^2 \times \{ 1,2\}$ for every frame $t$, which specifies the active set $\mathcal{A}(t)$ as well as the stream index corresponding to each pair $(i,j)\in \mathcal{A}(t)$. The schedule set chosen by policy $\pi$ at frame $t$ will be denoted by $\mathcal{S}_\pi(t)$.

Let the network state be represented by the pair $\lp K(t), Z(t)\rp$, where $K(t) = \lp H(t), \Phi\rp$ represents the network parameters causally known at the base station, and $Z(t)$ is the fading parameter, which is unknown (all variables are as defined in Section~\ref{sec:model}, Table~\ref{tb:notation}). We assume that $K(t)$ and $Z(t)$ take values over the arbitrarily large but finite sets $\mathcal{K}$ and $\mathcal{Z}$, respectively\footnote{The finiteness assumption is made for technical convenience in proofs; however the proposed scheduling algorithm itself does not rely on this assumption. By assuming a large cardinality, one can model the general case with uncountable alphabets arbitrarily closely.}. Define
\begin{align*}
\alpha^\pi_{skz} (t) = \frac{1}{t} \sum_{\tau=1}^{t} \mathbb{I}_{\mathcal{S}_\pi (\tau) = s} \mathbb{I}_{K(\tau) = k}  \mathbb{I}_{Z(\tau) = z},
\end{align*}
for $s \subseteq \mathcal{N}^2 \times \{1,2\}$, $k \in \mathcal{K}$, and $z \in \mathcal{Z}$, and $\mathbb{I}_E$ is the indicator variable for the event $E$; \emph{i.e.}, $\alpha^\pi_{skz} (t)$ is the average fraction of time the network was in state $\lp k, z\rp$, and the policy $\pi$ chose the schedule set $s$ up to time $t$.
Under this definition, our joint scheduling/relay selection problem can be formulated as the following utility optimization problem.
\begin{align}
\text{maximize} \; \sum_{i \in \mathcal{N}} U_i \lp r_i, \beta_i\rp  \;\;\; \text{s.t.} \;\; \lp \mathbf{r}, \beta\rp \in \mathcal{R} ,\;\;\; \beta \in \Lambda, \label{opt1}
\end{align}
where $\mathcal{R}$ is such that $\lp \mathbf{r}, \beta \rp \in \mathcal{R}$ if and only if there exists a scheduling policy $\pi$ such that
\begin{single}
\begin{align*}
&\liminf_{t \to \infty} \sum_{s: i \in s_1} \sum_{k \in \mathcal{K}} \sum_{z \in \mathcal{Z}} R_{skz}^{(i)} \alpha_{skz}^\pi (t) = r_i, \;\; \limsup_{t \to \infty} \sum_{s: i \in s_2} \sum_{k \in \mathcal{K}} \sum_{z \in \mathcal{Z}} \alpha_{skz}^\pi (t) = \beta_i,
\end{align*}
\end{single}
\begin{double}
\begin{align*}
&\liminf_{t \to \infty} \sum_{s: i \in s_1} \sum_{k \in \mathcal{K}} \sum_{z \in \mathcal{Z}} R_{skz}^{(i)} \alpha_{skz}^\pi (t) = r_i, \\& \limsup_{t \to \infty} \sum_{s: i \in s_2} \sum_{k \in \mathcal{K}} \sum_{z \in \mathcal{Z}} \alpha_{skz}^\pi (t) = \beta_i,
\end{align*}
\end{double}
almost surely for all $i \in \mathcal{N}$, where $s_1 := \lbp i: (i,j,d)\in s\rbp$, $s_2 := \lbp j: (i,j,d)\in s, i \neq j\rbp$, and $R_{skz}^{(i)}$ is the rate delivered to user $i$ when $\mathcal{S}_\pi = s, \mathcal{K} = k, \mathcal{Z} = z$, which can be computed based on the results from Section~\ref{sec:phy}. Note that in the optimization problem \eqref{opt1}, the first constraint simply ensures feasibility of the pair $\lp \mathbf{r}, \beta \rp$, and the second one imposes the stability constraint for the relay queues, given the conflict graph $\mathcal{G}_c$ between the flows $(i,j)$ available in the network. 

\subsection{Stability Region Structure}
Let $\Lambda \lp \mathcal{G}_c\rp$ denote the stability region corresponding to the conflict graph $\mathcal{G}_c$. In general, an explicit characterization of $\Lambda\lp \mathcal{G}_c\rp$ is difficult to obtain. However, it turns out one can explicitly obtain a reasonably large inner bound by appropriately inserting edges in the conflict graph, and thus backing off from the optimal stability region. The following theorem characterizes this inner bound.
\begin{thm}\label{th:ssp}
Given the conflict graph $\mathcal{G}_c = \lp \mathcal{V}_c, \mathcal{E}_c\rp$ and the non-zero link availability probabilities $\lbp p_{ij}\rbp$, there exists a polynomial-time algorithm that generates another graph $\mathcal{\bar G}_c = \lp \mathcal{V}_c, \mathcal{\bar E}_c\rp$ such that $\Lambda\lp \mathcal{\bar G}_c\rp \subseteq \Lambda\lp \mathcal{G}_c\rp$, and $\beta \in \Lambda\lp \mathcal{\bar G}_c\rp$ if and only if $\beta_Q := \sum_{(i,j) \in Q} \frac{\beta_{ij}}{p_{ij}} \leq 1$ for every maximal clique\footnote{A maximal clique is a clique that is not a subset of another clique.} $Q$ of $\mathcal{\bar G}_c$. Further, the number of maximal cliques of $\mathcal{\bar G}_c$ is at most $n^2$, and these cliques can be listed in polynomial time.
\end{thm}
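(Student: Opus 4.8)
The plan is to reduce the claim to a classical fact about stable--set polytopes of \emph{perfect} graphs, after an appropriate edge augmentation of the conflict graph.

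\emph{Step 1 (reduction to a stable-set polytope).} First I would record that, since the nodes run a stability-region-achieving scheduler $\mathcal{I}$, the stability region of a conflict graph $\mathcal{G}_c$ with link-availability probabilities $\lbp p_{ij}\rbp$ equals the availability-scaled stable-set polytope: $\beta\in\Lambda(\mathcal{G}_c)$ iff $\lp \beta_{ij}/p_{ij}\rp_{i\neq j}\in\mathrm{STAB}(\mathcal{G}_c)$, where $\mathrm{STAB}(\mathcal{G}_c)$ is the down-closure of the convex hull of indicator vectors of independent sets of $\mathcal{G}_c$. This is the standard Tassiulas--Ephremides characterization specialized to our model: in any slot the set of D2D links actually served is an independent set of $\mathcal{G}_c$ (granting an independent set forces $J_{ij}=1$ on all granted links), and a granted link is served with probability $p_{ij}$ independently of the schedule, because $B_{ij}(t)$ is fresh i.i.d.\ randomness; max-weight scheduling stabilizes every $\beta$ with $(\beta_{ij}/p_{ij})\in\mathrm{STAB}(\mathcal{G}_c)$, and a flow-conservation/averaging argument shows nothing outside this set is stabilizable. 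An immediate corollary is the inclusion $\Lambda(\mathcal{\bar G}_c)\subseteq\Lambda(\mathcal{G}_c)$ whenever $\mathcal{\bar E}_c\supseteq\mathcal{E}_c$: independent sets of $\mathcal{\bar G}_c$ are independent sets of $\mathcal{G}_c$, so $\mathrm{STAB}(\mathcal{\bar G}_c)\subseteq\mathrm{STAB}(\mathcal{G}_c)$ (equivalently, any schedule valid under $\mathcal{\bar G}_c$ is valid under $\mathcal{G}_c$).

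\emph{Step 2 (the augmentation algorithm).} The goal is to add edges to $\mathcal{G}_c$ so that the resulting supergraph $\mathcal{\bar G}_c$ is (a) a perfect graph and (b) has only $O(n^2)$ maximal cliques. The leverage is the special form of $\mathcal{E}_c$: every conflict edge $\lp(i,j),(k,\ell)\rp$ is witnessed by a single adjacency of $\mathcal{G}$ incident to a single endpoint role, namely $(i,\ell)\in\mathcal{E}$ or $(j,k)\in\mathcal{E}$. I would exploit this to cover all edges of $\mathcal{E}_c$ by a family $\mathcal{Q}$ of at most $n^2$ sets indexed by user pairs --- e.g.\ for each $(a,b)$ with $(a,b)\in\mathcal{E}$, the set of all links having $a$ as destination together with all links having $b$ as relay --- and then set $\mathcal{\bar E}_c := \mathcal{E}_c\cup\lbp\text{all pairs inside some member of }\mathcal{Q}\rbp$, so each member of $\mathcal{Q}$ becomes a clique containing the edges it accounts for. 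One then has to prove that $\mathcal{\bar G}_c$ is perfect and that $\mathcal{Q}$ is, up to subsets, exactly its set of maximal cliques; the cleanest route to perfection is to realize $\mathcal{\bar G}_c$ as chordal via an explicit perfect-elimination ordering built from the node-indexing (or as the intersection graph of an interval/laminar family), and the clique count follows by checking that every clique extends into some member of $\mathcal{Q}$. A guaranteed fallback, if that slick construction fails to be perfect, is to make each connected component of $\mathcal{G}_c$ a clique: this is trivially chordal, has at most $\left|\mathcal{V}_c\right| = n(n-1) < n^2$ maximal cliques, runs in linear time, and still gives a valid (if coarser) inner bound. In every case the algorithm is polynomial-time and $\mathcal{Q}$ enumerates the maximal cliques, settling the counting and listing assertions.

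\emph{Step 3 (conclusion) and the main difficulty.} Since $\mathcal{\bar G}_c$ is perfect, the Fulkerson--Lov\'asz--Chv\'atal theorem (equivalently Gr\"otschel--Lov\'asz--Schrijver) gives $\mathrm{STAB}(\mathcal{\bar G}_c)=\lbp x\geq 0:\ \sum_{v\in Q}x_v\leq 1\ \text{for every maximal clique }Q\rbp$; substituting $x_{ij}=\beta_{ij}/p_{ij}$ and invoking the Step-1 equivalence $\beta\in\Lambda(\mathcal{\bar G}_c)\iff(\beta_{ij}/p_{ij})\in\mathrm{STAB}(\mathcal{\bar G}_c)$ yields precisely ``$\beta\in\Lambda(\mathcal{\bar G}_c)$ iff $\beta_Q=\sum_{(i,j)\in Q}\beta_{ij}/p_{ij}\leq 1$ for every maximal clique $Q$,'' which together with the inclusion and clique count finishes the theorem. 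I expect the main obstacle to be Step 2 --- designing the augmentation so that it simultaneously remains a supergraph of $\mathcal{G}_c$, is perfect, and keeps only $O(n^2)$ maximal cliques, i.e.\ producing a structured chordal completion that exploits the two-hop-with-direction form of $\mathcal{E}_c$ rather than an arbitrary (or minimum, hence NP-hard) triangulation. A secondary point to pin down is the exact equality $\Lambda=$ scaled $\mathrm{STAB}$ in this model, given that availability is known only strictly causally and $J_{ij}$ couples transmissions; under the paper's assumption that $\mathcal{I}$ achieves the stability region this reduces to the standard conflict-graph scheduling result.
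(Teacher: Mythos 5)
Your overall architecture matches the paper's: Step 1 (Tassiulas--Ephremides reduction of $\Lambda$ to the availability-scaled stable-set polytope, and the monotonicity giving $\Lambda(\mathcal{\bar G}_c)\subseteq\Lambda(\mathcal{G}_c)$ for any edge supergraph) and Step 3 (Chv\'atal's clique-inequality description of the stable-set polytope of a perfect graph) are exactly the paper's two pillars. The difference is in Step 2. The paper does not exploit the two-hop structure of $\mathcal{E}_c$ at all: it applies a generic greedy chordal completion (visit vertices in order, connect the yet-unvisited neighbors of the current vertex), then invokes Berge (chordal $\Rightarrow$ perfect), Gavril (a chordal graph has at most $|\mathcal{V}_c|=n(n-1)$ maximal cliques), and a polynomial-time maximal-clique listing result for chordal graphs. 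Your preferred Step-2 construction --- covering $\mathcal{E}_c$ by at most $n^2$ vertex sets $Q_{ab}$ and turning each into a clique --- has a genuine gap as stated: covering all edges by cliques neither makes the union graph perfect nor ensures that its maximal cliques are among (or bounded by) the $Q_{ab}$, since new, larger cliques can arise from edges contributed by different $Q_{ab}$'s; you would still owe a perfect-elimination ordering or an intersection-representation argument, which you acknowledge but do not supply. However, your fallback (turn each connected component of $\mathcal{G}_c$ into a clique) is correct and does prove the theorem as literally stated: the result is a disjoint union of cliques, hence chordal and perfect, its maximal cliques are the components (at most $n(n-1)<n^2$ of them, trivially listable), and the clique inequalities characterize its stable-set polytope even without perfect-graph machinery. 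The trade-off is that the componentwise completion typically yields a much coarser inner bound than the paper's chordal fill-in, since it forbids all simultaneous transmissions within a connected component; the theorem statement does not quantify the size of the inner bound, so this is a loss in quality rather than in correctness.
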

The proof of Theorem~\ref{th:ssp}, given in Appendix F, relies on standard results from \cite{TassiulasEphremides_92} specialized to our one-hop network consisting of user pairs, as well as certain graph-theoretic results on perfect graphs, \emph{i.e.}, graphs whose chromatic numbers equal their clique number.

The relay flow control component of our scheduling algorithm uses the inner bound of Theorem~\ref{th:ssp} to ensure the stability of the relay queues. Defining $\bar \Lambda := \Lambda\lp \mathcal{\bar G}_c\rp$, we reformulate the optimization \eqref{opt1} as
\begin{align}
\text{maximize} \; \sum_{i \in \mathcal{N}} U_i \lp r_i, \beta_i\rp \;\; \text{s.t.}  \lp \mathbf{r}, \beta\rp \in \mathcal{R},\;\;\; \beta \in \bar \Lambda.  \label{opt2} 
\end{align}
The optimality of the proposed scheduling algorithm is with respect to \eqref{opt2}.

\subsection{Optimal Scheduling}\label{sec:optimal}

Let $\mathcal{Q}$ be the set of maximal cliques of $\mathcal{G}_c$. 
Consider the following policy, which we call $\pi^*$: Given $\lp \mathbf{r}(t-1), \beta(t-1), H(t), \Phi\rp$, choose the schedule set $s^*$ such that $s^* = \argmax\limits_{s \subseteq \mathcal{\bar N}(t) \times \{ 1,2\} }  f(s)$, where
\begin{single}
\begin{align}
f(s) = \sum_{(i,j,d) \in s} &\mathbb{E}_{Z(t)} \lb \left. R_{sK(t)Z(t)}^{(i)}\right| K(t)\rb \frac{\partial U_i}{\partial r_i}\Bigr|_{\substack{r_i=r_i(t-1)\\ \beta_i=\beta_i (t-1)}}+ \frac{\partial U_j}{\partial \beta_j}\Bigr|_{\substack{r_j=r_j(t-1)\\ \beta_j=\beta_j (t-1)}}, \label{sch_rule}
\end{align}
\end{single}
\begin{double}
\begin{align}
f(s) = \sum_{(i,j,d) \in s} &\mathbb{E}_{Z(t)} \lb \left. R_{sK(t)Z(t)}^{(i)}\right| K(t)\rb \frac{\partial U_i}{\partial r_i}\Bigr|_{\substack{r_i=r_i(t-1)\\ \beta_i=\beta_i (t-1)}} \notag\\
&\qquad \qquad + \frac{\partial U_j}{\partial \beta_j}\Bigr|_{\substack{r_j=r_j(t-1)\\ \beta_j=\beta_j (t-1)}}, \label{sch_rule}
\end{align}
\end{double}
$\mathcal{\bar N}(t) :=\lbp (i,j) \in \mathcal{N}^2: \beta_{Q}(t) \leq 1 \; \text{ for all $Q \in \mathcal{Q}$ s.t. $(i,j) \in Q$}\rbp$, and $R_{sK(t)Z(t)}^{(i)} = R_{skz}^{(i)}$ with $K(t)=k$ and $Z(t)=z$. Note that $(i,i) \in  \mathcal{\bar N}(t)$ is vacuously true for all $i$, corresponding to the scenario where user $i$ is scheduled without relay.

There are a few key points to note in the definition of policy $\pi^*$. First, note that the maximization is performed over the available \emph{streams} $(i,j,d)$ in the network, as opposed to over the set of users themselves. Second, at any frame $t$, any stream $(i,j,d)$ that involves a pair of users $(i,j)$ that is part of a clique $Q$ that currently violates its constraint $\beta_Q(t) \leq 1$ is ignored in the maximization, which is the relay flow control component of the algorithm to ensure stability of the relay queues. Third, the asymptotic optimality of $\pi^*$ reveals that it is sufficient to average the rate $R_{sK(t)Z(t)}^{(i)}$ over the part of the network state $Z(t)$ that is unknown at the base station, which is consistent with the results in \cite{ShiraniCaire_10}. 

\begin{thm}\label{th:opt}
Let the optimal value of the maximization in \eqref{opt2} be $\msf{OPT}$. Define the empirical utility of $\pi^*$ as $U^* (t) = \sum_{i \in \mathcal{N}} U_i \lp r_i^* (t), \beta_i^* (t)\rp$, where $r_i^* (t)$ and $\beta_i^* (t)$ correspond to variables $r_i(t)$ and $\beta_i(t)$, respectively, under policy $\pi^*$. Then the following events hold with probability 1 (\emph{i.e.}, almost surely) in the probability space generated by the random network parameters $K(t)$ and $Z(t)$:
\begin{enumerate}
\item $\lim_{t\to \infty} \inf \lbp \| \beta^* (t) - \beta \|_1 : \beta \in \bar \Lambda \rbp = 0$,
\item $\lim_{t\to\infty} U^* (t) = \msf{OPT}$.
\end{enumerate}
\end{thm}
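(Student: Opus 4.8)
The plan is to read $\pi^*$ as a stochastic conditional‑gradient (Frank--Wolfe) iteration with diminishing step size $1/t$, and to prove the two claims by combining a self‑correction argument for the clique constraints with a concavity‑based drift estimate for the empirical utility, reducing away the base station's ignorance of $Z(t)$ by a martingale strong law. Write $\mathbf{x}^*(t) = (\mathbf{r}^*(t), \beta^*(t))$. The running‑average definitions of $\mathbf{r}(t)$ and $\beta(t)$ yield
\begin{align*}
\mathbf{x}^*(t) = \frac{t-1}{t}\,\mathbf{x}^*(t-1) + \frac{1}{t}\,\mathbf{v}(t), \qquad \text{so}\qquad \mathbf{x}^*(t) - \mathbf{x}^*(t-1) = \frac{1}{t}\big(\mathbf{v}(t) - \mathbf{x}^*(t-1)\big),
\end{align*}
where $\mathbf{v}(t)$ is the instantaneous reward vector (throughputs delivered and relaying indicators) induced by the schedule $s^*(t)$ and the realized state $(K(t),Z(t))$; since $\mathcal{K},\mathcal{Z}$ are finite and the rates $R^{(i)}_{skz}$ are bounded, $\mathbf{v}(t)$ lies in a fixed compact set, so all iterates stay in a compact convex set $\mathcal{X}$ containing $\mathcal{R}$. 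Since $(K(t),Z(t))$ is i.i.d.\ across frames (stationarity--ergodicity is all that is used), $\mathbf{v}(t) - \mathbb{E}[\mathbf{v}(t)\mid\mathcal{F}_{t-1}]$ is a bounded martingale‑difference sequence, so $\frac{1}{t}\sum_{\tau\le t}\big(\mathbf{v}(\tau)-\mathbb{E}[\mathbf{v}(\tau)\mid\mathcal{F}_{\tau-1}]\big)\to 0$ a.s.\ (and likewise for bounded functions of $\mathbf{v}$), which is what legitimizes replacing realized rates by the conditional means $\mathbb{E}_{Z(t)}[\cdot\mid K(t)]$ appearing in $f(s)$ in \eqref{sch_rule}. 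One also checks by inspection that $f(s)$ equals $\nabla U(\mathbf{x}^*(t-1))$ dotted with the $Z$‑averaged reward vector of schedule $s$, so $\pi^*$ selects, among the frame‑$t$ admissible streams $\bar{\mathcal N}(t)\times\{1,2\}$, the vertex maximizing $\langle\nabla U(\mathbf{x}^*(t-1)),\cdot\rangle$.

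For claim (1) I would argue by self‑correction, using the clique characterization of $\bar\Lambda=\Lambda(\bar{\mathcal G}_c)$ from Theorem~\ref{th:ssp}. Fix a clique $Q$ entering the definition of $\bar{\mathcal N}(t)$. If $\beta_Q(t) > 1$ then every pair $(i,j)\in Q$ is barred from $\bar{\mathcal N}(t)$, so $A_{ij}(t)=0$ and $\beta_{ij}(t+1) = \frac{t}{t+1}\beta_{ij}(t)$ for all $(i,j)\in Q$, whence $\beta_Q(t+1) = \frac{t}{t+1}\beta_Q(t) < \beta_Q(t)$: a violated constraint contracts geometrically. Since one frame moves $\beta_Q$ by $O(1/t)$ (each $\beta_{ij}\le 1$ and the $p_{ij}$ are fixed positive constants), an induction gives $\beta_Q(t)\le 1+O(1/t)$ for all $Q$ and all large $t$; as $\bar\Lambda$ is precisely the polytope $\{\beta:\beta_Q\le 1\ \forall Q\}$, this yields $\inf\{\|\beta^*(t)-\beta\|_1:\beta\in\bar\Lambda\}\to 0$ (in fact surely, since this argument uses no randomness). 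This is the step at which, if preferred, the exponential‑barrier reformulation alluded to in Section~\ref{sec:optimal} can replace the hard excision and make the back‑off quantitative.

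For claim (2) I would expand the concave $U$ along the iteration. As $U$ is twice continuously differentiable, $\nabla U$ is Lipschitz on the compact $\mathcal{X}$, giving the descent‑lemma lower bound
\begin{align*}
U^*(t) \ge U^*(t-1) + \frac{1}{t}\,\big\langle \nabla U(\mathbf{x}^*(t-1)),\, \mathbf{v}(t)-\mathbf{x}^*(t-1)\big\rangle - \frac{C}{t^2}.
\end{align*}
Taking $\mathbb{E}[\cdot\mid\mathcal{F}_{t-1}]$ and using that $\pi^*$ maximizes $f$, the conditional mean of $\langle\nabla U(\mathbf{x}^*(t-1)),\mathbf{v}(t)\rangle$ is $\mathbb{E}_{K(t)}\big[\max_{s}\langle\nabla U(\mathbf{x}^*(t-1)),\bar{\mathbf v}_{s,K(t)}\rangle\big]$ over the frame‑$t$ admissible $s$; summing over $\tau$ and applying the martingale strong law, the limit of these quantities is, by the argument flagged below, at least the support function of the running‑average feasible set $\mathcal{R}\cap\bar\Lambda$ in the direction $\nabla U(\mathbf{x}^*(\tau-1))$, hence at least $\langle\nabla U(\mathbf{x}^*(\tau-1)),\mathbf{x}^{\msf{OPT}}\rangle$ for an optimizer $\mathbf{x}^{\msf{OPT}}$ of \eqref{opt2}. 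Concavity then gives $\langle\nabla U(\mathbf{x}^*(\tau-1)),\mathbf{x}^{\msf{OPT}}-\mathbf{x}^*(\tau-1)\rangle \ge \msf{OPT}-U^*(\tau-1)$, so after summing the display the martingale part converges a.s.\ (increments $O(1/\tau)$, $\sum 1/\tau^2<\infty$), the $\sum C/\tau^2$ term converges, and what remains is a recursion $\delta_t\le (1-\tfrac1t)\delta_{t-1} + \varepsilon_t$ for $\delta_t := \msf{OPT}-U^*(t)$ with $\tau\varepsilon_\tau\to 0$, whose solution $\delta_t\le \frac1t\sum_{\tau\le t}\tau\varepsilon_\tau\to 0$ forces $\liminf_t U^*(t)\ge\msf{OPT}$. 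The reverse inequality $\limsup_t U^*(t)\le\msf{OPT}$ follows from claim (1) plus continuity of $U$: $\mathbf{r}^*(t)$ is a running average of achievable instantaneous rate vectors, hence asymptotically in $\mathcal{R}$, and $\beta^*(t)\to\bar\Lambda$, so $\mathbf{x}^*(t)$ is asymptotically feasible for \eqref{opt2}; hence $U^*(t)\to\msf{OPT}$.

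The step I expect to be the crux — and where the argument departs most from \cite{TsibonisGeorgiadis_05} — is justifying that the time‑varying excision $\bar{\mathcal N}(t)$ costs nothing asymptotically, i.e.\ that the running average of the \emph{restricted} instantaneous support functions converges to the support function of the \emph{full} $\mathcal{R}\cap\bar\Lambda$. The difficulty is that if $\mathbf{x}^{\msf{OPT}}$ makes some clique constraint $\beta_Q\le 1$ tight, the policy keeps $\beta_Q$ near $1$ and so excises the streams through $Q$ for a non‑vanishing fraction of frames, not merely an $o(1)$ fraction; one must show that excising exactly the streams associated with tight constraints near the optimum perturbs the attainable direction only by $o(1)$, which is precisely the role of the exponential‑barrier smoothing (or, alternatively, of a Slater‑type interiority assumption on $\mathbf{x}^{\msf{OPT}}$). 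The remaining work is careful but routine bookkeeping: the martingale versions of the strong law used twice above, the verification that $f(s)=\langle\nabla U(\mathbf{x}^*(t-1)),\bar{\mathbf v}_s\rangle$ under the precise definitions of $\beta_i$ versus $\beta_{ij}$, and the passage from the frame‑indexed recursion and the $\liminf/\limsup$ definitions entering $\mathcal{R}$ to the clean limits stated in the theorem.
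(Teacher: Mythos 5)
Your treatment of claim (1) is essentially the paper's own argument (Lemma~\ref{lem:claim1}): the hard excision of all pairs in a violated clique makes $\beta_Q$ contract by the factor $\tfrac{t-1}{t}$, and a single frame can overshoot the constraint by at most $|Q|/(tp)$, so $\limsup_t\beta_Q^*(t)\le 1$. That part is fine. For claim (2) your overall architecture (Frank--Wolfe/stochastic-approximation drift, martingale SLLN to absorb the unknown $Z(t)$, concavity to convert the linearized gain into $\msf{OPT}-U^*(t-1)$) is also the right one and matches the spirit of the paper's Lemma~\ref{lem:main}.

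The genuine gap is exactly the step you flag as the crux, and flagging it is not the same as closing it. The inequality you need --- that the conditional mean of $\max_{s\subseteq\mathcal{\bar N}(t)\times\{1,2\}} f(s)$ is asymptotically at least $\langle\nabla U(\mathbf{x}^*(t-1)),\mathbf{x}^{\msf{OPT}}\rangle$ --- fails as stated whenever a clique constraint is active at $\mathbf{x}^{\msf{OPT}}$: the policy then keeps $\beta_Q^*(t)$ oscillating around $1$, the streams through $Q$ are excised on a non-vanishing fraction of frames, and on those frames the restricted maximum can be strictly below the linearization evaluated at the occupation measure representing $\mathbf{x}^{\msf{OPT}}$. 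Your two proposed fixes are not equivalent to a proof: a Slater-type interiority assumption on $\mathbf{x}^{\msf{OPT}}$ is an extra hypothesis the theorem does not make, and ``the exponential-barrier smoothing'' is named but never executed. The paper's actual closure of this gap is a three-part construction (Appendix B): (i) reformulate the problem statically over occupation measures $\alpha_{skz}$ subject to the measurability constraint $\alpha_{skz}=q_z\sum_{z'}\alpha_{skz'}$ and show via compactness that its value $\msf{OPT}'$ dominates $\msf{OPT}$ (Lemma~\ref{lem:opt21}) --- a step your sketch also needs but glosses over when it asserts the optimizer is reachable by $Z$-oblivious schedules; (ii) introduce a sequence of penalized objectives $U-\sum_Q e^{n(\alpha^Q-1)}$ with associated \emph{unexcised} policies $\pi_n$ that maximize the penalized linearization over all streams, for which the drift condition of the Lyapunov theorem can be verified cleanly, giving $\lim_t U_n(t)=\msf{OPT}_n$ and $\lim_n\msf{OPT}_n=\msf{OPT}'$; (iii) show that on any finite horizon $\pi_n$ and $\pi^*$ make identical decisions once $n$ is large, and conclude by exchanging the limits in $n$ and $t$. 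Without some version of (ii)--(iii) (or an explicit argument that the excision cost is $o(1)$ along the trajectory), your drift inequality is unproven at precisely the point where this theorem differs from the uncooperative setting of \cite{TsibonisGeorgiadis_05}, so the proposal as written does not constitute a complete proof of claim (2).
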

The proof outline is provided in Section~\ref{sec:proof}, with details in Appendix B. Theorem~\ref{th:opt} shows that policy $\pi^*$ asymptotically achieves the optimum of \eqref{opt2}. 

\subsection{Greedy Implementation}
Although converging to the optimal solution, policy $\pi^*$ suffers from high computational complexity, since it involves an exhaustive search over all subsets of streams. To reduce the complexity, we consider a suboptimal greedy implementation of the policy, similar to \cite{DimicSidiropoulos_05} for non-cooperative MU-MIMO. The algorithm works by iteratively building the schedule set, at each step adding the stream $(i^*,j^*,d^*)$ that contributes the largest amount to the objective $f(s)$, and committing to this choice in the following iterations, until there are no streams left that can result in a utility increment factor of $(1+\epsilon)$ to the existing schedule set (see Algorithm~\ref{alg:greedy}). The worst-case complexity of the algorithm is $O\lp NDn\rp$, where $D$ is the maximum node degree in $\mathcal{G}$, and $N$ is the maximum number of streams that can be scheduled at a time.
\begin{algorithm}[htbp]
\scriptsize
\begin{algorithmic}[1]
\STATE $iter=1$, $schedule\_set=\emptyset$, initialize $\epsilon>0$.
\WHILE{$iter \leq N$}
\STATE $\lp i^*, j^*, d^*\rp = \argmax_{(i,j,d) \in \mathcal{\bar N}(t) \times \{ 1,2\}} f\lp schedule\_set \cup (i,j,d)\rp$ \label{line:algmax}
\STATE $f^*(iter) = f\lp schedule\_set \cup (i^*,j^*,d^*)\rp$
\IF{$f^*(iter) > (1+\epsilon)f^*(iter-1)$}
\STATE $schedule\_set = schedule\_set \cup (i^*, j^*, d^*)$
\STATE $iter=iter+1$
\ELSE
\FORALL{$Q \in \mathcal{Q}$}
\STATE $\beta_Q(t+1) = update\_clique\_states(\beta_Q(t), schedule\_set)$
\ENDFOR
\STATE stop
\ENDIF
\ENDWHILE
\end{algorithmic}
\caption{Greedy cooperative scheduling}
\label{alg:greedy}
\end{algorithm}

\subsection{Choice of Utility Function}\label{sec:utility}
We focus on utility functions of the form\footnote{Note that this choice means that the function is not defined for $\beta_i=1$ and $r_i=0$, but we ignore this since no user will operate at these points.}
\begin{align}
U_i\lp r_i, \beta_i\rp = \log(r_i) + \kappa\log(1 - \beta_i), \label{eq:utility_form}
\end{align}
where $\kappa \geq 0$ is a parameter that controls the trade-off between fairness in throughput and fairness in relaying load. Using the concavity of the objective, it can be shown that (see Appendix~E for details) for any feasible pair $\lp \mathbf{r}, \beta \rp$, the optimum $\lp \mathbf{\wtild r}, \wtild \beta \rp$ with respect to the objective \eqref{eq:utility_form} satisfies
\begin{align}
\sum_i \frac{r_i - \wtild r_i}{\wtild r_i} \leq \kappa\sum_{i} \frac{(1-\wtild \beta_i)-(1-\beta_i)}{1-\wtild \beta_i}. \label{fairness}
\end{align}
The condition \eqref{fairness} admits a meaningful interpretation. Note that the left-hand side represents the sum of the relative gains in throughput due to the perturbation, whereas the right hand-side represents the sum of the relative decrease in time spent idle (not relaying). The condition in \eqref{fairness} then suggests that any perturbation to the optimal values will result in a total percentage throughput gain that is less than the total percentage increase in relaying cost, with the parameter $\kappa$ acting as a translation factor between throughput and relaying cost. This can be considered a generalization of well-studied proportional fairness, which implies that any perturbation to the optimal operating point results in a total percentage throughput loss. Our generalization allows for a positive total relative throughput change, albeit only at the expense of a larger total relative cost increase in relaying. For this utility function, we can evaluate the scheduling rule \eqref{sch_rule} as
\begin{align*}
s^* = \argmax_{s \subseteq \mathcal{\bar N}(t) \times \{ 1,2\} }  \frac{\mathbb{E}_{Z(t)} \lb \left. R_{sK(t)Z(t)}^{(i)}\right| K(t)\rb}{r_i (t)} - \frac{\kappa}{1 - \beta_{ij}(t)}.
\end{align*}

\subsection{Proof Outline of Theorem~\ref{th:opt}}\label{sec:proof}
We provide the outline for the proof of Theorem~\ref{th:opt}, leaving details to Appendix B.

We begin with the first claim. Due to Theorem~\ref{th:ssp}, it is sufficient to show that for any maximal clique $Q \subseteq \mathcal{V}_c$, $\limsup \beta_{Q}^* (t) \leq 1$ almost surely. We state this in the following lemma, whose proof is relatively straightforward and provided in Appendix D.

\begin{lem}\label{lem:claim1}
For all maximal cliques $Q$ of $\mathcal{G}_c$, $\limsup \beta_{Q}^* (t) \leq 1$ with probability 1 in the probability space generated by $K(t)$ and $Z(t)$.
\end{lem}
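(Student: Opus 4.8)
The plan is to show that the relay flow control component of $\pi^*$ directly enforces the clique constraints $\beta_Q^*(t) \leq 1$ in the limit, by a telescoping/accounting argument on the arrival process $A_{ij}^*(t)$ to each queue. The key observation is that a stream $(i,j,d)$ is admitted into the schedule set $s^*$ only if $(i,j) \in \mathcal{\bar N}(t)$, which by definition of $\mathcal{\bar N}(t)$ requires $\beta_{Q'}(t) \leq 1$ for \emph{every} maximal clique $Q'$ containing $(i,j)$. Hence whenever $\beta_Q(t) > 1$ for a particular clique $Q$, no arrival can occur to any queue $Q_{ij}$ with $(i,j) \in Q$ at frame $t$, i.e.\ $A_{ij}^*(t) = 0$ for all $(i,j) \in Q$.

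The main steps I would carry out are: First, fix a maximal clique $Q$ of $\mathcal{G}_c$ and consider the weighted arrival count $N_Q(t) := \sum_{\tau=1}^{t} \sum_{(i,j)\in Q} \frac{1}{p_{ij}} A_{ij}^*(\tau)$, so that $\beta_Q^*(t)$ is essentially $\frac{1}{t} N_Q(t)$ up to the normalization and the $\limsup$ in the definition of $\beta_{ij}$. Second, I would argue that $\beta_Q(t)$ (the running estimate used by the algorithm, which tracks the empirical averages) increases at frame $\tau$ only when some $(i,j)\in Q$ is scheduled, and that by the admission rule this can happen only when $\beta_Q(\tau) \leq 1$ held at the start of that frame. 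Since a single frame contributes at most a bounded increment to $\beta_Q$ — at most $\sum_{(i,j)\in Q}\frac{1}{p_{ij}}\cdot\frac{1}{\tau}$ in the time-average, or more cleanly a bounded constant $c_Q := \sum_{(i,j)\in Q}\frac{1}{p_{ij}}$ to the unnormalized count per frame relative to its previous value — the value of $\beta_Q(t)$ can never exceed $1 + c_Q/t \to 1$. Third, I would make this rigorous by a standard ``overshoot is bounded'' argument: define the last frame $t_0 \leq t$ at which $\beta_Q(t_0) \leq 1$; for all frames in $(t_0, t]$ the estimate exceeds $1$, so no arrivals to $Q$-queues occur, so $N_Q$ is frozen on that interval while the normalizing $\tau$ keeps growing, forcing $\beta_Q(t) \leq \frac{t_0}{t}(1 + c_Q/t_0) = \frac{t_0 + c_Q}{t} \leq \frac{t + c_Q}{t}$, which gives $\limsup_{t\to\infty} \beta_Q^*(t) \leq 1$. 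The almost-sure qualifier comes for free since this deterministic accounting holds on every sample path of $K(t), Z(t)$.

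The one technical subtlety — and the place I would be most careful — is reconciling the algorithm's running quantity $\beta_Q(t)$ (which is built from the empirical averages $\beta_{ij}(t) = \frac1t\sum_{\tau\le t} A_{ij}(\tau)$, and whose increments therefore shrink like $1/t$) with the $\limsup$-based definition $\beta_{ij} = \limsup_t \beta_{ij}(t)$ appearing in $\beta_Q := \sum_{(i,j)\in Q}\beta_{ij}/p_{ij}$: one must check that the per-frame jump is $o(1)$ uniformly, that $|Q|$ is bounded (it is, since $|Q| \leq |\mathcal{V}_c| = n(n-1)$, and more tightly by Theorem~\ref{th:ssp}'s clique structure), and that $p_{ij} > 0$ for all the links in play (guaranteed by the ``non-zero link availability probabilities'' hypothesis). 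Modulo this bookkeeping, the proof is the short deterministic argument above, which is why the excerpt calls it ``relatively straightforward''; I would relegate the full $\epsilon$-$\delta$ version of the overshoot bound to Appendix~D as the authors do.
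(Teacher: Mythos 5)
Your proof is correct and takes essentially the same route as the paper's: both rest on the two facts that the admission rule $\mathcal{\bar N}(t)$ blocks all arrivals to clique $Q$ whenever the running average $\beta_Q$ exceeds $1$, and that a single frame can raise the time average by at most a constant over $t$, so any overshoot vanishes as $t\to\infty$. The only cosmetic difference is that you argue directly via the last time $t_0$ at which $\beta_Q(t_0)\le 1$, whereas the paper runs the same bound as a proof by contradiction with an explicit $\epsilon$ and $N=|Q|/(\epsilon p)$.
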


The proof of the second claim uses stochastic approximation techniques similar to the main proof in \cite{TsibonisGeorgiadis_05}, but also features several key differences to account for D2D cooperation, multiuser MIMO, partial network knowledge, relay queue stability, and generalized utility functions. To prove the second claim, we first reformulate \eqref{opt2} in terms of the variables $\alpha_{skz}$, as follows
\begin{single}
\begin{align}
\text{maximize   } & U(\mathbf{y}) := \sum_{i \in \mathcal{N}} U_i \lp  \sum_{s: i \in s_1} \sum_{k \in \mathcal{K}} \sum_{z \in \mathcal{Z}} R_{skz}^{(i)} \alpha_{skz} , \sum_{s: i \in s_2} \sum_{k \in \mathcal{K}} \sum_{z \in \mathcal{Z}} \alpha_{skz}\rp \label{opt3} \\
\text{s.t.   } & \alpha_{skz} \geq 0, \;\; \sum_{s} \alpha_{skz} \leq p_k q_z, \;\; \alpha_{skz} = q_z \sum_{z'} \alpha_{skz'}, \;\; \forall s,k,z \label{opt3cons1}\\
&\sum_{(i,j) \in Q} \sum_{s: \substack{i \in s_1\\ j \in s_2}} \sum_{k \in \mathcal{K}}  \sum_{z \in \mathcal{Z}} \alpha_{skz} \leq 1, \;\; \forall Q \in \mathcal{Q},\label{opt3cons2}
\end{align}
\end{single}
\begin{double}
\begin{align}
\text{maximize   } & U(\mathbf{y}) := \sum_{i \in \mathcal{N}} U_i \lp  \sum_{s: i \in s_1} \sum_{k \in \mathcal{K}} \sum_{z \in \mathcal{Z}} R_{skz}^{(i)} \alpha_{skz} ,\right. \notag\\
&\qquad\qquad \qquad\qquad\left. \sum_{s: i \in s_2} \sum_{k \in \mathcal{K}} \sum_{z \in \mathcal{Z}} \alpha_{skz}\rp \label{opt3} \\
\text{s.t.   } & \alpha_{skz} \geq 0, \;\; \sum_{s} \alpha_{skz} \leq p_k q_z, \;\; \alpha_{skz} = q_z \sum_{z'} \alpha_{skz'}, \notag \\
&\qquad\qquad\qquad\qquad \qquad\qquad \qquad \forall s,k,z \label{opt3cons1}\\
&\sum_{(i,j) \in Q} \sum_{s: \substack{i \in s_1\\ j \in s_2}} \sum_{k \in \mathcal{K}}  \sum_{z \in \mathcal{Z}} \alpha_{skz} \leq 1, \;\; \forall Q \in \mathcal{Q},\label{opt3cons2}
\end{align}
\end{double}
where $p_k = \Prob{ K(t) = k}$, and $q_z = \Prob{Z(t) = z}$, where $\alpha_{skz}$ are deterministic; they represent the fraction of time spent in state $(s,k,z)$ throughout the transmission. The last condition in \eqref{opt3cons1} reflects the fact that the scheduling decision cannot depend on the realization of $Z(t)$, since this information is not available at the base station.

\begin{lem}\label{lem:opt21}
Let $\mathsf{OPT}'$ denote the optimal value of \eqref{opt3}. Then $\mathsf{OPT}' \geq \mathsf{OPT}$.
\end{lem}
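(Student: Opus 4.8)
The plan is to show that the optimization \eqref{opt3} is a \emph{relaxation} of \eqref{opt2}, in the sense that every feasible pair $(\mathbf{r},\beta)$ for \eqref{opt2} can be mapped to a feasible point $\{\alpha_{skz}\}$ of \eqref{opt3} achieving at least the same objective value. Since the objective functions are the same (the $U_i$'s evaluated at the induced throughput/relaying vectors), this immediately yields $\mathsf{OPT}' \geq \mathsf{OPT}$.

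First I would take an arbitrary scheduling policy $\pi$ that is feasible for \eqref{opt2} and realizes a pair $(\mathbf{r}^\pi,\beta^\pi)$ with $\beta^\pi \in \bar\Lambda$. Using the empirical state/schedule frequencies $\alpha^\pi_{skz}(t)$ defined in Section~\ref{sec:formulation}, I would pass to a convergent subsequence (the quantities live in a compact set since they are bounded by $1$) and let $\alpha_{skz}$ be a limit point, so that $\alpha_{skz} \geq 0$ and $\sum_{s,k,z}\alpha_{skz} \le 1$. By the (strong) law of large numbers applied to the i.i.d.\ network states $K(t),Z(t)$, the marginal fraction of time the network spends in state $(k,z)$ converges to $p_k q_z$ almost surely, which gives $\sum_s \alpha_{skz} \le p_k q_z$; moreover, since the policy's scheduling decision $\mathcal{S}_\pi(\tau)$ is not allowed to depend on $Z(\tau)$, conditioning on $K(\tau)=k$ and $\mathcal{S}_\pi(\tau)=s$ the occurrences of $Z(\tau)=z$ are i.i.d.\ $\mathrm{Bernoulli}(q_z)$, which in the limit yields the decomposition $\alpha_{skz} = q_z \sum_{z'} \alpha_{skz'}$. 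Thus the constraints \eqref{opt3cons1} hold.

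Next I would verify the clique constraints \eqref{opt3cons2}. From the definition of $\mathcal{R}$ we have $\beta^\pi_i = \limsup_t \sum_{s:i\in s_2}\sum_{k,z}\alpha^\pi_{skz}(t)$, and since $\beta^\pi \in \bar\Lambda = \Lambda(\mathcal{\bar G}_c)$, Theorem~\ref{th:ssp} gives $\sum_{(i,j)\in Q}\beta^\pi_{ij}/p_{ij} \le 1$ for every maximal clique $Q$ of $\mathcal{\bar G}_c$. I would need to connect this to \eqref{opt3cons2}, which is stated over maximal cliques of $\mathcal{G}_c$ and without the $1/p_{ij}$ weighting; the cleanest route is to observe that the flow-control component built into $\pi^*$ (and, more to the point, the structure of $\bar\Lambda$) ensures the per-link arrival rates $\beta^\pi_{ij} = \limsup_t \sum_{s:\,i\in s_1,\,j\in s_2}\sum_{k,z}\alpha^\pi_{skz}(t)$ satisfy the clique bounds, and then use $p_{ij}\le 1$ together with the monotonicity of the $U_i$'s to conclude \eqref{opt3cons2} holds for the limit point. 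Finally, by lower-semicontinuity of $\liminf$ and continuity of the $U_i$'s, the throughput coordinate $\sum_{s:i\in s_1}\sum_{k,z}R^{(i)}_{skz}\alpha_{skz} \ge r^\pi_i$ and the relaying coordinate equals (or lower-bounds, in the favorable direction since $U_i$ is non-increasing in $\beta_i$) $\beta^\pi_i$, so $U(\mathbf{y}) \ge \sum_i U_i(r^\pi_i,\beta^\pi_i)$; taking $\pi$ to approach the optimum of \eqref{opt2} gives $\mathsf{OPT}' \ge \mathsf{OPT}$.

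The main obstacle I anticipate is the bookkeeping in the second and third steps: carefully justifying that limit points of the empirical frequencies satisfy \emph{exactly} the constraint set \eqref{opt3cons1}--\eqref{opt3cons2} — in particular the $Z$-independence decomposition and the passage from the stability-region characterization of Theorem~\ref{th:ssp} (with its $p_{ij}$-weighted clique inequalities over $\mathcal{\bar G}_c$) to the unweighted clique inequalities over $\mathcal{Q}$ in \eqref{opt3cons2}. One must be a little careful that these two families of constraints are compatible (e.g.\ that $\mathcal{\bar G}_c \supseteq \mathcal{G}_c$ so cliques of $\mathcal{G}_c$ embed in cliques of $\mathcal{\bar G}_c$, or argue directly) and that the $\limsup$/$\liminf$ asymmetry in the definition of $\mathcal{R}$ does not cause the induced point to fall just outside the feasible region. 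Everything else — compactness, the law of large numbers for the i.i.d.\ states, and continuity of the utilities — is routine.
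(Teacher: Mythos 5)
Your proposal follows essentially the same route as the paper's own proof in Appendix~D: both map (near-)optimal feasible policies for \eqref{opt2} to limit points of the empirical frequency vectors $\alpha^{\pi}_{skz}(t)$, invoke the strong law of large numbers and the independence of the schedule from $Z(t)$ to place those limit points in the feasible set of \eqref{opt3}, and conclude via compactness and continuity of $U$. Your treatment is, if anything, more explicit than the paper's about verifying \eqref{opt3cons2} (the paper silently absorbs the passage from the $p_{ij}$-weighted clique inequalities of Theorem~\ref{th:ssp} to the unweighted ones, which works exactly as you note since $p_{ij}\leq 1$), so no further changes are needed.
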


Lemma~\ref{lem:opt21} is proved in Appendix D using properties of compact sets.

Using Lemma~\ref{lem:opt21}, it is sufficient to show that $U^\pi (t)$ converges to the optimum value of \eqref{opt3}. We state this in the following lemma, whose proof is provided in Appendix B.
\begin{lem}\label{lem:main}
$\lim_{t \to \infty} U^* (t) = \mathsf{OPT}'$, with prob. 1 in the probability space generated by $\lp K(t), Z(t)\rp$.
\end{lem}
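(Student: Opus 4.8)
\textbf{Proof plan for Lemma~\ref{lem:main}.}

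The plan is to follow the stochastic approximation strategy of \cite{TsibonisGeorgiadis_05}, tracking the running averages $r_i(t)$ and $\beta_i(t)$ as a single vector, and showing that the policy $\pi^*$ drives the empirical utility to the optimum of \eqref{opt3}. First I would set up the state vector $\mathbf{y}(t) = \lp \mathbf{r}(t), \beta(t)\rp$ and write its one-step update in the canonical Robbins--Monro form $\mathbf{y}(t) = \mathbf{y}(t-1) + \frac{1}{t}\lp \mathbf{v}(t) - \mathbf{y}(t-1) \rp$, where $\mathbf{v}(t)$ collects the instantaneous rates $R^{(i)}_{s^* K(t) Z(t)}$ delivered to scheduled destinations and the relaying indicators for scheduled relays under the schedule set $s^*$ chosen by $\pi^*$. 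The key observation, exactly as in the scheduling rule \eqref{sch_rule}, is that $\pi^*$ is designed to maximize the inner product $\langle \nabla U(\mathbf{y}(t-1)), \mathbf{v}\rangle$ over all schedule sets $s \subseteq \bar{\mathcal{N}}(t)\times\{1,2\}$ once one takes the conditional expectation over $Z(t)$; this is the gradient-ascent feature that makes the ODE method applicable.

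The main steps, in order, are: (i) verify the standard stochastic-approximation preconditions --- boundedness of the iterates (the rates $R^{(i)}_{skz}$ are bounded because $\mathcal{K},\mathcal{Z}$ are finite and the SNRs are bounded, and $\beta_i\in[0,1]$), a vanishing step size $1/t$ with $\sum 1/t = \infty$, $\sum 1/t^2 < \infty$, and a martingale-difference decomposition of the noise term $\mathbf{v}(t) - \mathbb{E}[\mathbf{v}(t)\mid \mathcal{F}_{t-1}]$ whose conditional second moments are uniformly bounded; (ii) identify the limiting ODE as $\dot{\mathbf{y}} = \bar{\mathbf{v}}(\mathbf{y}) - \mathbf{y}$, where $\bar{\mathbf{v}}(\mathbf{y})$ is the expected increment vector induced by the $\nabla U(\mathbf{y})$-greedy schedule averaged over the stationary law of $\lp K(t),Z(t)\rp$, and argue that any such $\bar{\mathbf{v}}(\mathbf{y})$ lies in the feasible set defined by \eqref{opt3cons1}--\eqref{opt3cons2} (here I would use Theorem~\ref{th:ssp} to translate relay-queue stability, enforced through the restriction to $\bar{\mathcal{N}}(t)$, into the clique constraints \eqref{opt3cons2}, and Lemma~\ref{lem:claim1} to guarantee the clique constraints are asymptotically met so the restriction to $\bar{\mathcal{N}}(t)$ does not permanently exclude the maximizing streams); (iii) show $U$ is a strict Lyapunov function for this ODE --- along trajectories, $\frac{d}{dt} U(\mathbf{y}) = \langle \nabla U(\mathbf{y}), \bar{\mathbf{v}}(\mathbf{y}) - \mathbf{y}\rangle \geq 0$ with equality only at points where $\mathbf{y}$ itself maximizes $\langle \nabla U(\mathbf{y}),\cdot\rangle$ over the feasible set, which by concavity of $U$ and the KKT conditions for \eqref{opt3} are exactly the optimizers --- so by the Arzelà--Ascoli / Kushner--Clark argument the interpolated trajectory converges a.s.\ to the set of maxima of $U$ over the feasible region, i.e.\ $U^*(t)\to\mathsf{OPT}'$.

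The hard part will be step (ii): carefully reconciling the relay-flow-control restriction to $\bar{\mathcal{N}}(t)$ with the ODE limit. One must show that although $\pi^*$ may transiently exclude a stream whose clique is temporarily over its budget, these exclusions occur on a vanishing fraction of frames (this is where Lemma~\ref{lem:claim1} and the exponential-barrier structure mentioned in Section~\ref{sec:optimal} enter), so that the averaged drift $\bar{\mathbf{v}}(\mathbf{y})$ is the same as the drift of the unrestricted $\nabla U$-greedy policy on the constraint set of \eqref{opt3}; otherwise the Lyapunov argument would only give convergence to a suboptimal point. A secondary subtlety, already flagged in the third remark after the definition of $\pi^*$, is that $\pi^*$ uses only $\mathbb{E}_{Z(t)}[R^{(i)}_{sK(t)Z(t)}\mid K(t)]$ rather than the realized $R^{(i)}_{sK(t)Z(t)}$; one must verify that this replacement is consistent with the last constraint in \eqref{opt3cons1} (the scheduling decision is $Z$-independent), so that the greedy rule genuinely maximizes the gradient inner product over the feasible polytope of \eqref{opt3} and not over a larger, infeasible set.
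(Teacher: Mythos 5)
Your overall skeleton (stochastic approximation in the occupation/average variables, a gradient-greedy interpretation of \eqref{sch_rule}, a Lyapunov/ODE argument showing the drift increases $U$, concavity to identify stationary points with optimizers) matches the paper's, which also builds on \cite{TsibonisGeorgiadis_05, BhattacharyaGeorgiadis_95}; the paper tracks the finer state $\alpha_{skz}(t)$ rather than $(\mathbf{r}(t),\beta(t))$, which is needed both to express the clique constraints \eqref{opt3cons2} and to exploit the $Z$-independence constraint in the drift computation, and you correctly flag the latter as a point requiring care. However, your resolution of what you yourself identify as the hard part --- step (ii), reconciling the restriction to $\mathcal{\bar N}(t)$ with the ODE limit --- contains a genuine gap. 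You propose to show that the exclusions of over-budget cliques ``occur on a vanishing fraction of frames'' so that the restricted policy has the same averaged drift as the unrestricted $\nabla U$-greedy policy on the feasible set. This is false precisely in the interesting case: when a clique constraint $\alpha^Q \leq 1$ is active at the optimizer of \eqref{opt3}, the flow control must exclude streams of that clique a \emph{non-vanishing} fraction of frames in order to hold $\beta_Q^*(t)$ near $1$ (Lemma~\ref{lem:claim1} only gives $\limsup \beta_Q^*(t)\leq 1$; it says nothing about the constraint being rarely active). So the time-varying feasible set $\mathcal{\bar N}(t)$ cannot be argued away, and the standard Kushner--Clark machinery does not directly apply to a greedy rule whose admissible action set depends discontinuously on the running averages.

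The paper's actual device, which your plan gestures at but does not use, is to replace the hard restriction by a sequence of smooth exponential barrier penalties: it defines relaxed objectives $U_n(\alpha) = \sum_i U_i(\alpha) - \sum_{Q}\exp\{n(\alpha^Q-1)\}$ and corresponding \emph{unrestricted} greedy policies $\pi_n$ maximizing $\wtild f_n(s)$ in \eqref{policy_n}. For each fixed $n$ the standard Lyapunov drift condition of Theorem~\ref{stoch_app} can be verified (Proposition~\ref{prop:opt3}), because the penalty is part of a fixed smooth concave objective rather than a time-varying admissibility restriction; separately, $\mathsf{OPT}_n \to \mathsf{OPT}'$ (Proposition~\ref{prop:opt1}) and, for each fixed $t$, the actions of $\pi_n$ and $\pi^*$ coincide for $n$ large enough (Proposition~\ref{prop:opt2}), and a three-term triangle inequality justifies interchanging the limits in $n$ and $t$. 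Without this (or an equivalent projection/differential-inclusion argument), your step (iii) would at best establish convergence to the optimum of a problem without the clique constraints, or fail to establish convergence at all when those constraints bind.
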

The proof of Lemma~\ref{lem:main} extends the stochastic approximation techniques from \cite{TsibonisGeorgiadis_05, BhattacharyaGeorgiadis_95} to our setup. In particular, we consider the relaxed version of the optimization problem by augmenting the objective with the stability constraint using a sequence of exponential barrier functions. We then determine the optimal policy for the relaxed problem, and take the limit in the slope of the barrier function to prove the result for the original problem.

\section{Simulation Results}\label{sec:sim}
\begin{table}
\centering
\begin{minipage}{.46\textwidth}
  \centering
\scriptsize
\begin{tabular}{| l | c || l | c | }
\hline
{\bf Parameter} & {\bf Value} & {\bf Parameter} & {\bf Value} \\
\hline
Cellular bandwidth & 40MHz & DL carrier freq. & 2GHz \\
D2D bandwidth &  40MHz & D2D carrier freq. & 5GHz\\
\# BS antennas & 32 (linear array)&OFDM FFT size & 2048 \\
\# UE antennas & 1 cell.+1 ISM&Power allocation & equal \\
Antenna spacing & $0.5\lambda$ &BS power & 46dBm \\
BS antenna gain & 0 dBi &UE power & 23dBm \\
BS antenna pattern & Uniform & Penetration loss & 0dB \\ 
\hline
\end{tabular} 
\caption{System parameters used in the simulations}
\label{tb:sim_params}
\end{minipage}\hfill
\begin{minipage}{.46\textwidth}
  \centering
  \scriptsize 
\begin{tabular}{|l|c|c| }
\hline
& {\bf Large Cell} & {\bf Small/Hetero.} \\
\hline
{ Inter-site distance ($a\sqrt{3}$)} & 1732m & 500m \\
{ No. cells ($\Omega$)} & 5 & 19  \\
{ No. active users/cell ($n$)} & 25 & 10  \\
{ Cluster radius std. dev. ($\sigma$)} & 20m & 10m  \\
{ Mean \# clusters} ($\frac{3\sqrt{3}}{2}\lambda a$) & 5 & 3  \\
{ Utility trade-off param. ($\kappa$)} & 7 & 8  \\
\hline
\end{tabular} 
\caption{Default cell-size-specific parameters}
\label{tb:cell_size_params}
\end{minipage}
\end{table}

\subsection{Simulation Setup}


\subsubsection{Geographic distribution}
For the regular network model, we consider a hexagonal grid of $\Omega$ cells (see Figure~\ref{fig:macro_map}), each of radius $a$, with a base station at the center, and $n$ users at each cell. For each cell, we first generate a set of cluster centers according to a homogeneous Poisson point process with intensity $\lambda$. Next, we randomly assign each user to a cluster, where user locations for cluster $i$ are chosen i.i.d. according to $\mathcal{CN}(\mathbf{c}_i, \sigma^2 \mathbf{I}_2)$, where $\mathbf{c}_i$ is the $i$'th cluster center, with $\sigma$ determining how localized the cluster is. In the heterogeneous network model (see Figure~\ref{fig:hetero_map}), we place the $\Omega$ base stations uniformly at random, generate cluster centers through a homogeneous Poisson process, and assign users to clusters uniformly at random. Next, each user associates with the nearest base station. In both cases, for each set of spatial parameters, we generate eight ``drops'', \emph{i.e.}, instantiations of user distributions, and the CDFs are computed by aggregating the results across the drops.

\subsubsection{Channel model}
For each (BS, user) pair, we generate a time series of $100$ channel vectors for each OFDM subcarrier using the 3GPP Spatial Channel Model (SCM) implementation \cite{SaloDelGado_05}, assuming a user mobility of 3m/s. For each user pair, we use the models from 3GPP D2D Channel Model \cite{TR36.843} to generate the path loss parameter $\phi_{ij}$ and the log-normal shadowing parameter $\chi_{ij}$. The channel between the user pair $(i,j)$ for each resource block (RB) is then computed as $\phi_{ij}\chi_{ij}\zeta_{ij}$, where $\zeta_{ij} \sim \mathcal{CN}(0,1)$ is i.i.d. fading parameter for a given RB. The D2D fading parameters are assumed i.i.d. across RBs. For the main results, we use the line-of-sight (LOS) model, but we also explore the effect of non-line-of-sight links later in the section. For each drop, the channels are computed and stored \emph{a priori}, and all the simulations are run for the same sequences of channel realizations.


\begin{figure}
\centering
\begin{minipage}{.46\textwidth}
  \centering
  \includegraphics[scale=0.32]{./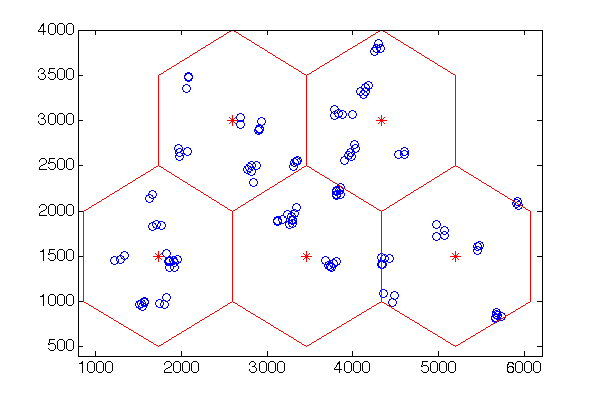}
  \caption{Sample geographic distribution of users for large cells..}
  \label{fig:macro_map}
\end{minipage}\hfill
\begin{minipage}{.46\textwidth}
  \centering
  \includegraphics[scale=0.26]{./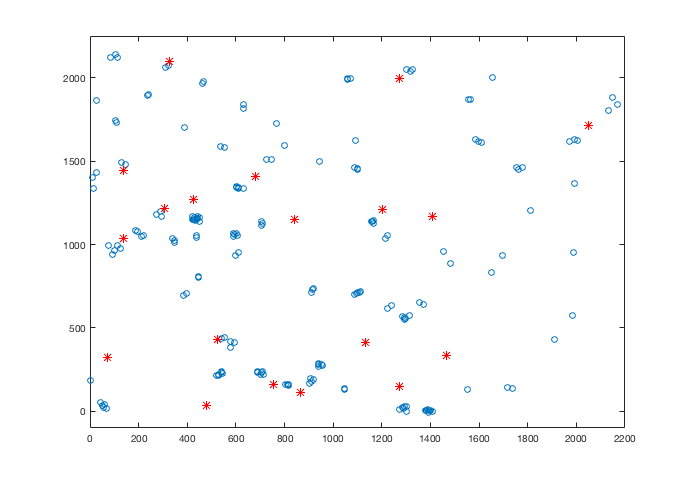}
  \caption{Example realization of usr and base station realizations for the heterogeneous network model.}
  \label{fig:hetero_map}
\end{minipage}
\end{figure}

\subsubsection{System operation}
Various system parameters are given in Table~\ref{tb:sim_params}. We assume an infinite backlog of data to be transmitted for each user. At every time slot, the base station obtains an estimate of the current network state (estimation error modeled normally distributed with variance proportional to the total energy of the channel gains across the OFDM subcarriers, independently for each antenna), and makes a scheduling decision. The scheduling decision is made without knowledge of the inter-cell interference. In the cooperative case, scheduling is done according to Algorithm 1 in Appendix E. In the non-cooperative case, we similarly use the greedy scheduling algorithm of \cite{DimicSidiropoulos_05}. Once the scheduling decision is made, the throughput is computed using the results of Section~\ref{sec:phy} based on the actual channel realizations with inter-cell interference, assuming regularized zero-forcing beamforming, and a $3$dB $\SNR$ back-off to model practical coding performance. We also take into account various rate back-offs including OFDM cyclic prefix and guard intervals, channel training and uplink data bursts. 
After the transmission, user throughputs and relaying fractions are updated through exponentially-weighted moving average filters, with averaging window $T_w = 50$ frames.

\begin{figure}
\centering
\begin{minipage}{.46\textwidth}
  \centering
  \includegraphics[scale=0.35]{./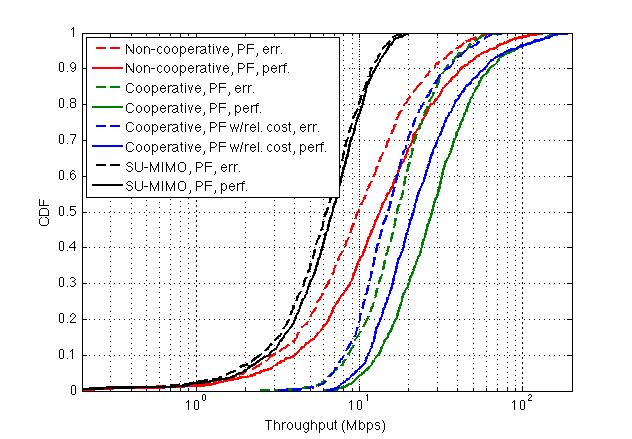}
  \caption{Throughput CDF for large cells.}
  \label{fig:macro_base}
\end{minipage}\hfill
\begin{minipage}{.46\textwidth}
  \centering
  \includegraphics[scale=0.33]{./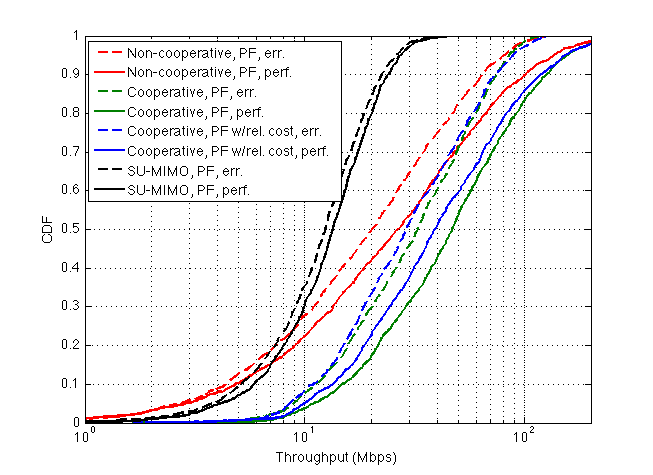}
  \caption{Throughput CDF for small cells.}
  \label{fig:micro_base}
\end{minipage}
\end{figure}

\subsection{Throughput Distribution for Regular Cells}\label{subsec:tputs}

For the setup described, we simulate the system with and without cooperation, under the utility function introduced in Section~\ref{sec:scheduling}, as well as conventional proportionally fair (PF) scheduler. We consider large and small cells, with parameters corresponding to either case provided in Table~\ref{tb:cell_size_params}. For each case, we simulate the system with and without channel estimation errors, using $p_{ij}=1$ for all $(i,j)$ (we explore smaller values of $p_{ij}$ later in the section).

The CDF of the long-term average throughput received by the users in the network is plotted in Figures~\ref{fig:macro_base} and \ref{fig:micro_base} (``err.'' represents the case with channel estimation errors, and ``perf.'' represents perfect channel estimation). These plots can be interpreted as a cumulative throughput histogram in the network, where the value on the vertical axis represents the fraction of users who experience a throughput that is less than or equal to the corresponding value on the horizontal axis.

One can observe from Figures~\ref{fig:macro_base} and \ref{fig:micro_base} that, cooperation is most helpful for the weakest (cell-edge) users in the network, providing a throughput gain ranging from $3$x up to $4.5$x for the bottom fifth-percentile of users depending on cell size, channel estimation quality and utility function used, compared to non-cooperative MU-MIMO. The gain for the median user similarly ranges from $1.4$x up to $2.1$x depending on the scenario. 

When the baseline is taken as non-cooperative SU-MIMO, the fifth percentile gain ranges from $3.5$x to $5.7$x, whereas the median gain ranges from $2.4$x up to $4.1$x.


\subsection{Throughput Distribution for Heterogeneous Networks}
We consider the same setup under the heterogeneous network model (Figure~\ref{fig:hetero_map}), with the utility function of Section~\ref{sec:scheduling}, and with the same cell-size specific parameters as those for small cells (see Table~\ref{tb:cell_size_params}). Each user associates with the closest base station, and the resulting CDF is obtained by aggregating the results from independently generated drops, where the base station locations are different across drops. We observe that similar results can be obtained for randomly placed base stations of the heterogeneous model (see Figure~\ref{fig:hetero}). The fifth-percetile gain is $4.2$x, while the median user gain is $1.8$x, with respect to non-cooperative MU-MIMO.

\begin{figure}
\centering
\begin{minipage}{.46\textwidth}
  \centering
  \includegraphics[scale=0.26]{./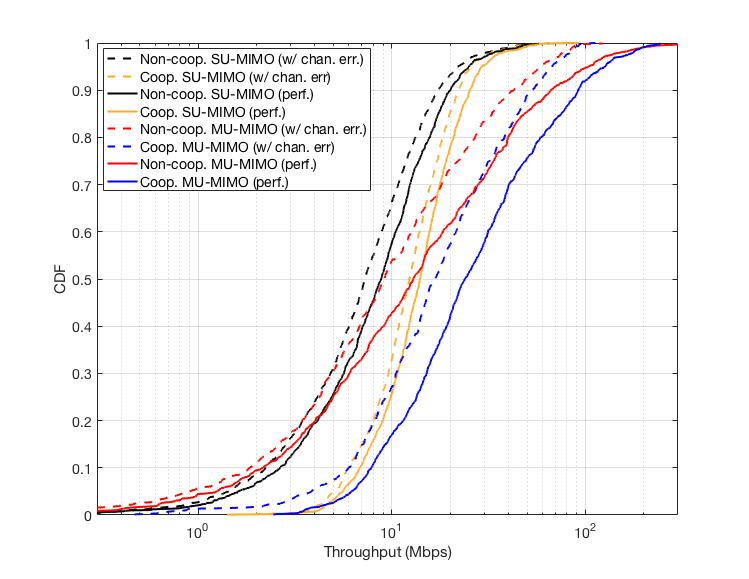}
  \caption{Throughput CDF for heterogeneous network.}
  \label{fig:hetero}
\end{minipage}\hfill
\begin{minipage}{.46\textwidth}
  \centering
    \includegraphics[scale=0.33]{./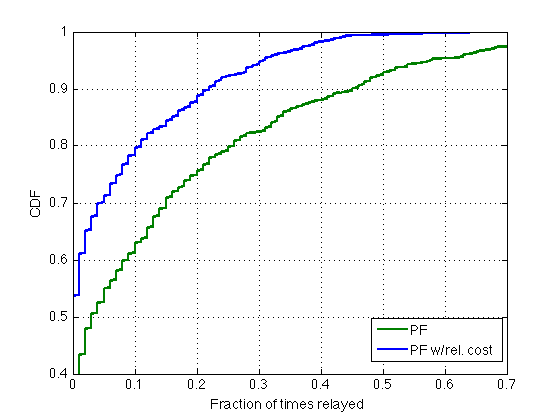}
  \caption{CDF for the fraction of time spent relaying for large cells.}
  \label{fig:macro_schd}
\end{minipage}
\end{figure}
\subsection{Relaying Cost}

We consider the CDF of the fraction of time a user has performed relaying, for the same runs of simulation as in the previous subsection, 
in Figure~\ref{fig:macro_schd}. In this figure, the values on the vertical axis represent the fraction of users that perform relaying a fraction of time less than or equal to the corresponding value at the horizontal axis, \emph{e.g.}, $90\%$ of users perform relaying less than $22\%$ of the time for PF with relaying cost, and less than $45\%$ of the time for pure PF utility. We observe that our proposed utility function results in more than $50\%$ drop in the total relaying load, with a relatively small penalty in throughput. In particular, the median throughput drop across users is 10\%, and the maximum drop is 16\%. Therefore, the novel utility function proposed in Section~\ref{sec:scheduling} enables a more efficient utilization of out-of-band resources, from a throughput-per-channel-access perspective. 

\subsection{D2D Link Intermittence}
We re-run the simulation in Subsection~\ref{subsec:tputs} for smaller values of $p_{ij}$. The results are plotted in Figure~\ref{fig:del_tol_intmt}, which suggests that the cell-edge gains are fairly robust to external interference of the D2D links, and the gains degrade gracefully with decreasing link availability, resulting in approximately $2.5$x gain at the bottom fifth percentile even when the links are only available $30\%$ of the time.

\begin{single}
\begin{figure}
\centering
\begin{minipage}{.46\textwidth}
  \centering
  \includegraphics[scale=0.33]{./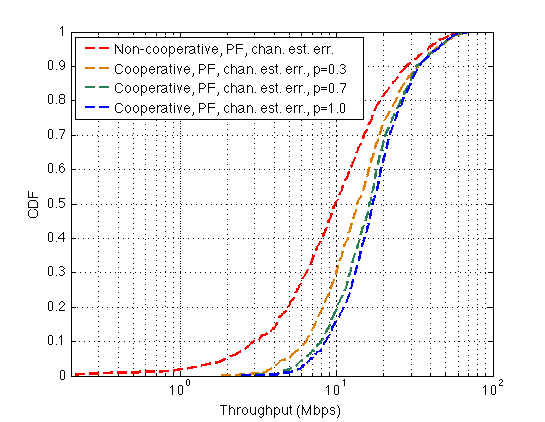}
  \caption{Throughput CDF for large cells, for intermittent side-channels.}
  \label{fig:del_tol_intmt}
\end{minipage}\hfill
\begin{minipage}{.46\textwidth}
  \centering
  \includegraphics[scale=0.28]{./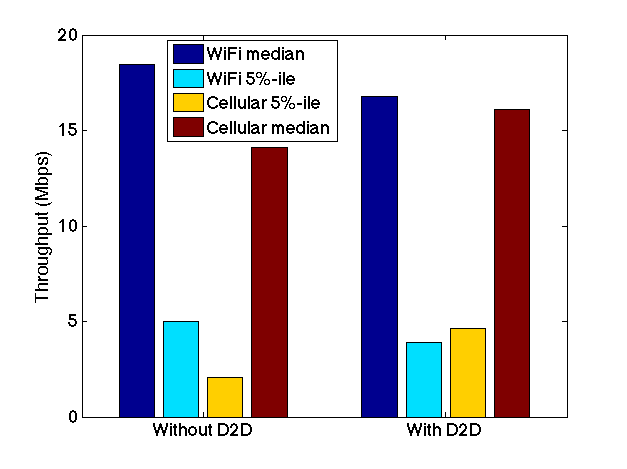}
  \caption{Throughput changes in WiFi and cellular users when D2D cooperation is enabled.}
  \label{fig:wifi_d2d}
\end{minipage}
\end{figure}
\end{single}
\begin{double}
\begin{figure}
\centering
\begin{minipage}{.46\textwidth}
  \centering
  \includegraphics[scale=0.36]{./figs/macro_schd.png}
  \caption{CDF for the fraction of time spent relaying for large cells.}
  \label{fig:macro_schd}
\end{minipage}\hfill
\begin{minipage}{.46\textwidth}
  \centering
  \includegraphics[scale=0.37]{./figs/del_tol_intmt.png}
  \caption{Throughput CDF for large cells, for intermittent side-channels.}
  \label{fig:del_tol_intmt}
\end{minipage}
\end{figure}
\end{double}

\subsection{Co-existence with WiFi}
Since the existing WiFi networks use the same band as D2D cooperation, an important question is whether co-existence of these technologies negates the possible gains due to interference. In this section, we study this scenario through simulations, and demonstrate that the combined overall benefit of WiFi access points (AP) and D2D dominates the loss due to interference, and thus WiFi and D2D cooperation can co-exist harmoniously.

To study this scenario, we consider a network model where an AP is placed at each cluster center $\mathbf{c}_i$. 
If a user is within the range of a AP, it only gets served by the AP, and is unavailable for D2D cooperation, since the unlicensed band is occupied by AP transmissions and we assume there is constant downlink traffic from the AP. Otherwise, the user is served by the base station and is potentially available for D2D cooperation. In practice, this co-existence mechanism can be implemented through a more aggressive policy, similar to LTE-U: having the user search for an available channel within the unlicensed band for a specified period of time, to use for D2D cooperation, and if none exists, having the user transmit for a short duty cycle. Note that the D2D transmissions from outside the AP range can still interfere with the receptions of AP users.

\begin{single}
\begin{figure}
\centering
\begin{minipage}{.46\textwidth}
  \centering
  \includegraphics[scale=0.3]{./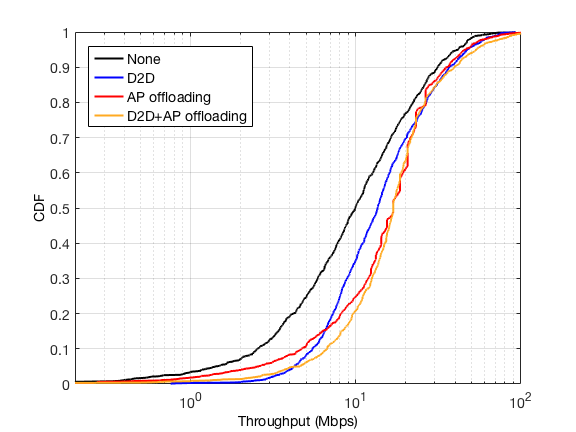}
  \caption{Throughput CDF for large cells with APs, with $\sigma=100m$.}
  \label{fig:macro_ap_100}
\end{minipage}\hfill
\begin{minipage}{.46\textwidth}
  \centering
  \includegraphics[scale=0.3]{./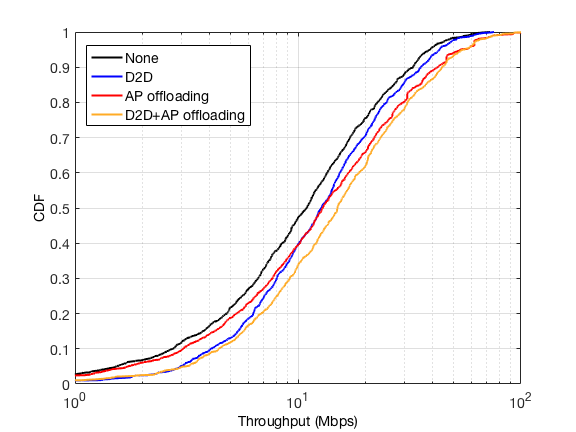}
  \caption{Throughput CDF for large cells with APs, with $\sigma=200m$.}
  \label{fig:macro_ap_200}
\end{minipage}
\end{figure}
\end{single}
\begin{double}
\begin{figure}
\centering
\begin{minipage}{.46\textwidth}
  \centering
    \includegraphics[scale=0.26]{./figs/wifi_d2d2.png}
  \caption{Throughput changes in WiFi and cellular users when D2D cooperation is enabled.}
  \label{fig:wifi_d2d}
\end{minipage}\hfill
\begin{minipage}{.46\textwidth}
  \centering
    \includegraphics[scale=0.28]{./figs/macro_ap_100.png}
  \caption{Throughput CDF for large cells with APs, with $\sigma=100m$.}
  \label{fig:macro_ap_100}
\end{minipage}
\end{figure}
\end{double}

We consider a simplified model for the rates delivered by the AP. If there are $\ell$ users within the range of a given AP, then a user $i$ at a distance $d_i$ from the AP is offered a rate
\begin{align*}
R_i (t) = \eta J_{i} (t) \min \lp R\lp d_i\rp, \frac{R_{\max}}{\ell}\rp,
\end{align*}
where $R\lp d\rp$ is a function that maps the user distance $d$ from AP to the rate delivered to that user, $R_{\max}$ is the maximum rate the AP can deliver, $0 < \eta \leq 1$ is a back-off factor capturing various overheads in the system, and $J_k (t)$ is the binary variable that takes the value 0 if a neighbor of $k$ in the connectivity graph is transmitting at time $t$, and 1 otherwise. We use the 802.11ac achievable rates reported in \cite{Broadcom80211ac_12} ($3$ streams, $80$MHz, with rates normalized to $40$MHz) for the $R\lp d_k\rp$ and $R_{\max}$ values, with $\eta=0.5$. We reduce the device power to $17$dBm for this setup. The throughput CDFs under this setup are given in Figures~\ref{fig:macro_ap_100} and \ref{fig:macro_ap_200}. If a user is served by WiFi, its throughput from WiFi is considered; otherwise, its throughput from the D2D-enhanced cellular network is considered.

%
%

The results suggest that when D2D cooperation and WiFi AP are simultaneously enabled, the performance is uniformly better than either of them individually enabled, despite the interference from D2D transmissions to AP users, and the relatively fewer D2D opportunities due to users being served by AP. 
Note
that this does not mean that the throughput of a given WiFi user is
not reduced when D2D interference takes place (see Figure~\ref{fig:wifi_d2d}, where median WiFi user throughput drops by 10\%, while the fifth-percentile cellular user throughput grows by $130$\%); it means that, if
the user falls within the bottom $x$-percentile after the D2D
interference, they are still better off than the bottom
$x$-percentile when only WiFi is enabled. The main reason D2D does not hurt WiFi too much is that D2D cooperation is used for a relatively small fraction of time compared to WiFi for a given user (see Figure~\ref{fig:macro_schd}, which shows 80\% of users relay less than 10\% of the time), which limits the amount of interference. This may also suggest that the more aggressive LTE-U-type policies may also be feasible.

\begin{figure}
\centering
\begin{minipage}{.46\textwidth}
  \centering
  \includegraphics[scale=0.35]{./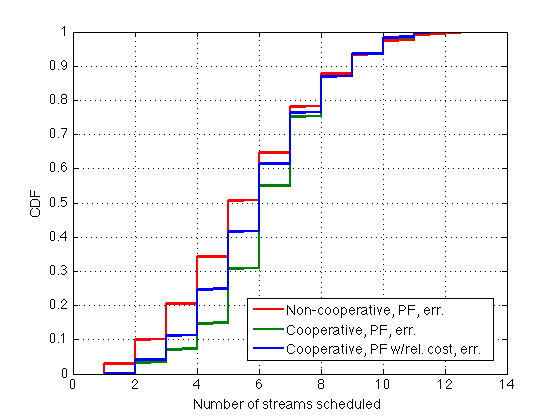}
  \caption{CDF for the number of streams scheduled for large cells.}
  \label{fig:macro_rank}
\end{minipage}\hfill
\begin{minipage}{.46\textwidth}
  \centering
  \includegraphics[scale=0.36]{./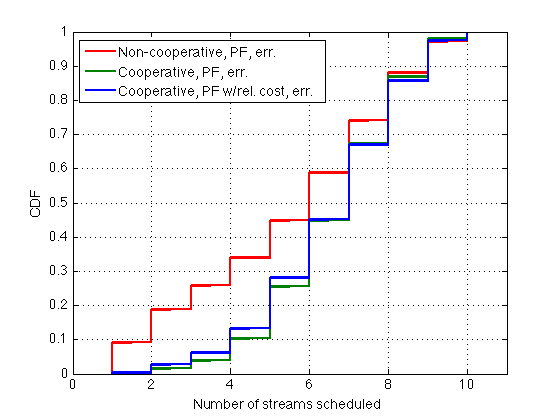}
  \caption{CDF for the number of streams scheduled for small cells.}
  \label{fig:micro_rank}
\end{minipage}
\end{figure}

\subsection{Number of Streams Scheduled}

We compare the number of streams scheduled per time slot for cooperative and non-cooperative cases, in the CDF in Figures~\ref{fig:macro_rank} and \ref{fig:micro_rank}. This can also be understood as the number of steps it takes for Algorithm 1 to terminate.

One can observe that cooperation enables the base station to schedule 1-2 additional streams on average, compared to the non-cooperative case. The reason underlying this behavior is the richness in scheduling options, since data can be transmitted to a particular user through several relaying options, with a distinct beamforming vector corresponding to each option. Since it is easier to find a stream (beamforming vector) that is compatible (approximately orthogonal) with the already scheduled streams, on the average the algorithm is able to schedule a larger number of users per time slot.

\begin{figure}
\centering
\begin{minipage}{.46\textwidth}
  \centering
  \includegraphics[scale=0.35]{./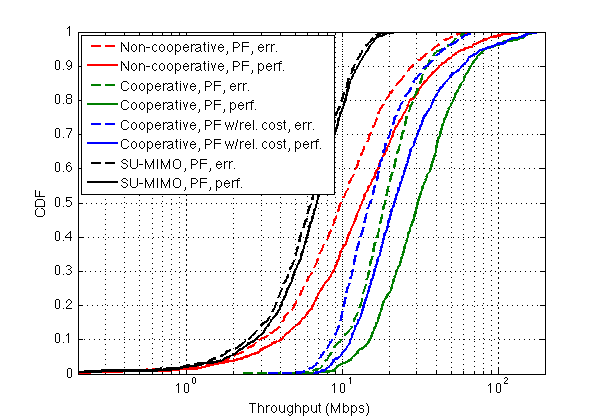}
  \caption{Throughput CDF for large cells (without stability constraint).}
  \label{fig:macro_noint}
\end{minipage}\hfill
\begin{minipage}{.46\textwidth}
  \centering
  \includegraphics[scale=0.35]{./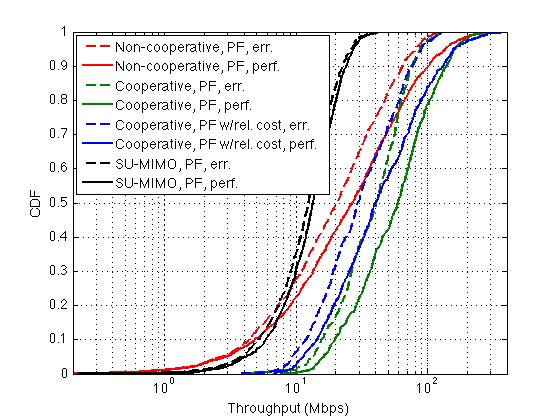}
  \caption{Throughput CDF for small cells (without stability constraint).}
  \label{fig:micro_noint}
\end{minipage}
\end{figure}

\subsection{Relaxing the Stability Constraint}

In the scenario where the cellular bandwidth is sufficiently smaller than the D2D bandwidth, the interference constraint no longer active, since the devices can perform frequency-division multiplexing to orthogonalize their transmissions. This scenario can be modeled by removing the stability constraint, and performing the maximization in \eqref{sch_rule} over all $\mathcal{N}^2\times \{ 1,2\}$ streams available for scheduling. The resulting throughput CDFs are given in Figures~\ref{fig:macro_noint} and \ref{fig:micro_noint}. Comparing the result to those in Figures~\ref{fig:macro_base} and ~\ref{fig:micro_base}, we see that the stability constraint has a rather small effect on the cooperative cell-edge gains in throughput for large cells, and a relatively larger effect for small cells. This is because the users are located more densely in small cells, and thus the interference (and thus, the stability) constraint is more restrictive. We observe that under this setup, the fifth-percentile gains with respect to SU-MIMO baseline range from $3.5\times$ up to $6.3\times$, depending on cell size, channel estimation quality and the utility function used. The median gain for large cells reaches almost $4.5\times$. The fifth-percentile gains with respect to non-cooperative MU-MIMO are similarly between $3.3\times$ and $4.9\times$, and the median user gain ranges up to $2.3\times$.

\subsection{Effect of Clustering}

For large cells, we vary the cluster radius $\sigma$ to study its effect in the throughput CDF in the network.  Figure~\ref{fig:radius} plots the throughputs corresponding to the median and the bottom fifth-percentile users in the network, for a range of cluster radii, cooperative and non-cooperative cases, and line-of-sight (LOS) and non-line-of-sight (NLOS) D2D links. We observe that at $23$dBm device power, for LOS links, most of the median and fifth-percentile throughput gains are preserved up to a cluster radius of $200$m\footnote{Note that the cluster radius is the standard deviation of user locations from each cluster center. User pairs with pairwise distance much smaller than the cluster radius can still exist within the cluster.}. The decay in throughput is much faster for NLOS D2D links, and the gain completely disappears at a cluster radius of $200$m. The performance in a real scenario would be somewhere in between the LOS and NLOS curves, since in a real scenario only a fraction of the links would be LOS.

\begin{figure}
  \centering
  \includegraphics[scale=0.35]{./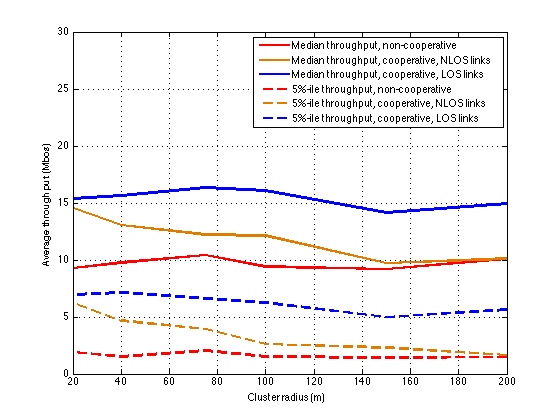}
  \caption{Median and 5-percentile throughput vs. cluster radius}
  \label{fig:radius}
\end{figure}

\subsection{Co-existence with WiFi Off-loading}
\begin{figure}
\centering
\begin{minipage}{.46\textwidth}
  \centering
  \includegraphics[scale=0.32]{./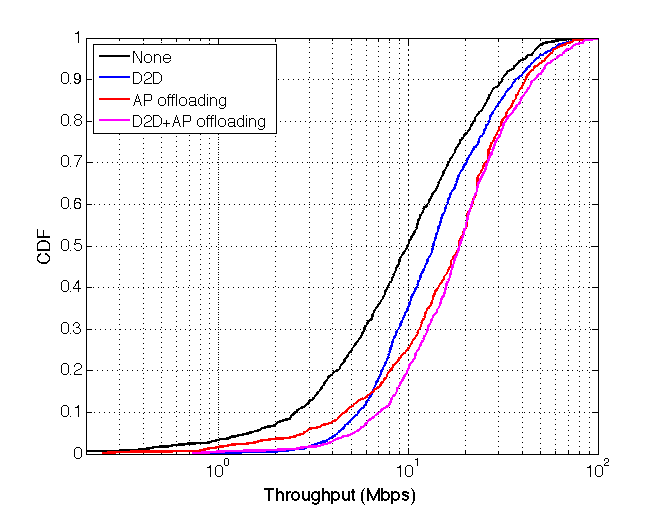}
  \caption{Throughput CDF for large cells with APs, with $\sigma=100m$ (AP users served by AP and base station).}
  \label{fig:macro_ap2_100}
\end{minipage}\hfill
\begin{minipage}{.46\textwidth}
  \centering
  \includegraphics[scale=0.29]{./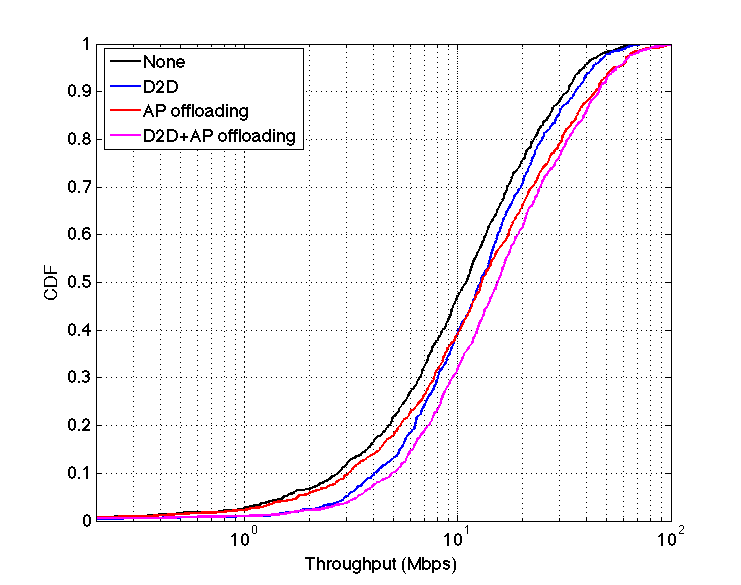}
  \caption{Throughput CDF for large cells with APs, with $\sigma=200m$ (AP users served by AP and base station).}
  \label{fig:macro_ap2_200}
\end{minipage}
\end{figure}

One can also consider an off-loading scenario where the base station continues serving the WiFi users. In this case, the WiFi users are still not available for D2D cooperation, but they can receive from both the AP and directly from the base station whenever they are scheduled based on their past throughputs. We compute the rate delivered to a WiFi user as the sum of the rate that is delivered from the base station (whenever scheduled) and the rate that is delivered from the AP. Figures~\ref{fig:macro_ap2_100} and \ref{fig:macro_ap2_200} plot the throughput CDFs under this scenario. The results follow a similar pattern to the case where WiFi users are served only by the AP, with a small additional gain in the curves with AP off-loading.

\section{Conclusion}\label{sec:conclusion}
We proposed a cellular architecture that combines MU-MIMO downlink with opportunistic use of unlicensed ISM bands to establish D2D cooperation, which  results in up to approximately $6\times$ throughput gain in cell-edge users, while improving the overall throughput. In the physical layer, the architecture is based on using D2D relaying to form virtual MIMO channels. We proposed a scheduling algorithm for this architecture that activates such D2D links to extract opportunistic gains, while maintaining fairness in terms of both throughput and the amount of relaying. To this end, we introduced a novel utility function that incorporates the cost of relaying into scheduling. We studied the architecture through extensive simulations, which suggest significant throughput gains for both cell-edge and median users under various scenarios.

\bibliography{Ref}

\begin{appendices}
\section{Proof of Theorem~\ref{th:scaling}}\label{ap:scaling}
\begin{prop}\label{prop:min_scaling}
Let $X_i, i=1,\dots,n$, be i.i.d. $\chi^2(2P)$ random variables. Then
\begin{align*}
\Prob{ \min_{1\leq i \leq n} X_i > n^{-\frac{\gamma}{2P}}\psi(2P) } = O\lp e^{-n^{1-\gamma}}\rp,\;\text{ for } 0<\gamma<1.
\end{align*}
\end{prop}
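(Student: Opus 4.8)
The plan is the standard ``minimum of i.i.d.'' argument, which reduces the claim to a lower bound on the left tail of a $\chi^2(2P)$ law near the origin. Set $t_n := n^{-\gamma/(2P)}\psi(2P)$ and let $F$ be the CDF of $\chi^2(2P)$. By independence and the elementary inequality $1-x\le e^{-x}$,
\begin{align*}
\Prob{\min_{1\le i\le n} X_i > t_n} = \lp 1 - F(t_n)\rp^n \le e^{-n F(t_n)},
\end{align*}
so it suffices to exhibit a constant $c_P>0$, depending only on $P$, with $F(t_n)\ge c_P\, n^{-\gamma/2}$; then the right-hand side is at most $e^{-c_P n^{1-\gamma/2}}$.

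The lower bound on $F(t_n)$ comes from integrating the $\chi^2(2P)$ density $f(x)=\frac{x^{P-1}e^{-x/2}}{2^P (P-1)!}$ over $[0,t_n]$. Since $\gamma>0$ forces $t_n\to 0$, on this interval $e^{-x/2}\ge e^{-t_n/2}\ge e^{-\psi(2P)/2}$, hence
\begin{align*}
F(t_n) \ge \frac{e^{-\psi(2P)/2}}{2^P (P-1)!}\int_0^{t_n} x^{P-1}\,dx = \frac{e^{-\psi(2P)/2}}{2^P P!}\,t_n^{\,P}.
\end{align*}
The point of the particular normalization $\psi(2P)=\lp (2P)!\rp^{1/(2P)}$ is exactly that $t_n^{\,P} = n^{-\gamma/2}\,\psi(2P)^P = n^{-\gamma/2}\sqrt{(2P)!}$, so $F(t_n)\ge c_P\, n^{-\gamma/2}$ with $c_P := \frac{e^{-\psi(2P)/2}\sqrt{(2P)!}}{2^P P!}>0$.

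Combining the two displays, $\Prob{\min_i X_i > t_n}\le e^{-c_P n^{1-\gamma/2}}$; and since $1-\gamma/2>1-\gamma$ for $0<\gamma<1$, we have $c_P n^{1-\gamma/2}\ge n^{1-\gamma}$ for all $n$ beyond a threshold depending only on $P$ and $\gamma$, which yields the claimed $O\lp e^{-n^{1-\gamma}}\rp$ (indeed with leading constant $1$, and in fact a bound somewhat stronger than stated). I do not expect a genuine obstacle: the only things needing a little care are matching constants so the $\psi(2P)$ factor cancels cleanly and noting that any crude-but-correct near-origin estimate of the form $F(t)\ge (t/c)^{P}$ would do equally well; the same argument applies verbatim to any fixed-parameter Gamma law with $P$ replaced by its shape parameter.
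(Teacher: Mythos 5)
Your proof is correct, and at the top level it follows the same strategy as the paper's: write $\Prob{ \min_{1\leq i \leq n} X_i > t_n}$ as (or bound it by) an $n$-th power of a single-variable tail, and then control the CDF of $\chi^2(2P)$ near the origin at $t_n = n^{-\gamma/(2P)}\psi(2P)$. Where you genuinely diverge is in that near-origin estimate, and the divergence is instructive. The paper invokes the expansion $\Prob{X_i > x} = 1 - \frac{x^{2P}}{(2P)!} + O\lp x^{2P+1}\rp$, which is tailored so that the leading term at $x=t_n$ is exactly $n^{-\gamma}$ and $(1-n^{-\gamma})^n = O\lp e^{-n^{1-\gamma}}\rp$ falls out. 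But that expansion does not match a $\chi^2(2P)$ law: integrating the density $x^{P-1}e^{-x/2}/(2^P(P-1)!)$, as you do (equivalently, expanding $1-e^{-x/2}\sum_{j=0}^{P-1}(x/2)^j/j!$), shows the CDF behaves like a constant times $x^{P}$ near zero --- the exponent is the shape parameter $P$, not the degrees of freedom $2P$. Your computation therefore yields $F(t_n)\geq c_P n^{-\gamma/2}$ and the strictly stronger conclusion $e^{-c_P n^{1-\gamma/2}}$, which implies the stated $O\lp e^{-n^{1-\gamma}}\rp$ because $1-\gamma/2 > 1-\gamma$ on $(0,1)$. So your argument is sound and in effect repairs a slip in the paper's expansion; the only casualty is cosmetic, namely that the specific normalization $\psi(2P)=\lp (2P)!\rp^{1/(2P)}$ loses its exact-cancellation role (any fixed positive constant in place of $\psi(2P)$ would serve, as you note), and nothing in the way Theorem~\ref{th:scaling} uses the proposition depends on that constant.
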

\begin{proof}
Using the Taylor series for the upper incomplete Gamma function, as $x \to 0$,
\begin{align*}
\Prob{X_i > x} = 1 - \frac{x^{2P}}{(2P)!} + O\lp x^{2P+1}\rp.
\end{align*}
Therefore,
\begin{align*}
\Prob{ \min_{1\leq i \leq n} X_i > n^{-\frac{\gamma}{2P}}\psi(2P) } = \lp \Prob{ X_i > n^{-\frac{\gamma}{2P}}\psi(2P) }\rp^n = \lp 1 - n^{-\gamma}\rp^n = O\lp e^{-n^{1-\gamma}}\rp.
\end{align*}
\end{proof}

We will first derive a lower bound on $\SNR_{ij}^{coop}$, defined by $
\SNR_{ij}^{coop} = \frac{s_{ij1}^2}{1 +\left| u_{ij1} (2) \right|^2\frac{\sigma^2_{j|i}}{\| g_{ij}\|^2}}$.
Using the fact that $\left| u_{ij1} (2) \right|^2 \leq 1$ and $\sigma^2_{j|i} \leq \sigma^2_j$, where $\sigma^2_j$ is the variance of $y_2$,
\begin{align}
\SNR_{ij}^{coop} &\geq \frac{s_{ij1}^2 }{1 + \frac{\sigma^2_j}{\| g_{ij}\|^2}} = \frac{s_{ij1}^2 }{1 + \frac{1 + \| \mathbf{h}_j\|^2}{\| g_{ij}\|^2}} \label{eq:snr_coop_lb}.
\end{align}
Next, since $s_{ij1}^2$ is the larger eigenvalue of the matrix $\mathbf{H}_{ij}\mathbf{H}_{ij}^*$, using the closed form expressions for the eigenvalues of $2 \times 2$ matrices,
\begin{align*}
s_{ij1}^2 &= \frac{1}{2}\lp \|\mathbf{h}_i\|^2 + \|\mathbf{h}_j\|^2 + \sqrt{\|\mathbf{h}_i\|^4 + \|\mathbf{h}_j\|^4 + 2\|\mathbf{h}_i\|^2\|\mathbf{h}_j\|^2\cos(2\Theta)} \rp \\
&\geq \frac{1}{2}\lp \|\mathbf{h}_i\|^2 + \|\mathbf{h}_j\|^2 + \left| \|\mathbf{h}_i\|^2 - \|\mathbf{h}_j\|^2 \right| \rp= \max\lp \|\mathbf{h}_i\|^2, \|\mathbf{h}_j\|^2\rp,
\end{align*}
where $\Theta = \cos^{-1}\frac{\mathbf{h}_i^*\mathbf{h}_j}{\| \mathbf{h}_i\|\| \mathbf{h}_j\|}$ is the angle between $\mathbf{h}_i$ and $\mathbf{h}_j$, and the lower bound is obtained by setting $\cos\lp 2\Theta\rp = -1$. Using this lower bound in \eqref{eq:snr_coop_lb}, we get
\begin{align*}
\SNR_{ij}^{coop} &\geq \frac{ \max\lp \|\mathbf{h}_i\|^2, \|\mathbf{h}_j\|^2 \rp}{1 + \frac{1 + \| \mathbf{h}_j\|^2}{\| g_{ij}\|^2}} \geq \frac{  \|\mathbf{h}_j\|^2 }{1 + \frac{1 + \| \mathbf{h}_j\|^2}{\| g_{ij}\|^2}} \geq \frac{\lp \|\mathbf{h}_j\|^2+1 \rp \| g_{ij}\|^2}{ 1 + \|\mathbf{h}_j\|^2 + \| g_{ij}\|^2} -1 \\
&\geq \frac{1}{2} \min\lp \|\mathbf{h}_j\|^2, \| g_{ij}\|^2\rp - 1.
\end{align*}
Therefore, to prove the first claim in Theorem~\ref{th:scaling}, it is sufficient to prove that
\begin{align*}
\Prob{  \min_{i \in \mathcal{N}} \min\lp \|\mathbf{h}_{j^*(i)}\|^2, \| g_{ij^*(i)}\|^2\rp > M\rho \lp  \frac{1}{2}\log n - 2\log\log n\rp } = O\lp e^{-\log^2 n + 2\log n}\rp.
\end{align*}
Define $\mathcal{P}_n = \{ j: \| \mathbf{h}_j\|^2 \geq M\rho \lp  \frac{1}{2}\log n - 2\log\log n\rp\}$, and $\mathcal{R}_n(i) = \{j \in \mathcal{P}_n: \phi_{ij} \geq n^{\frac{c}{4}}  \}$. 
\begin{prop}\label{prop:Rn_nonempty}
$\Prob{\mathcal{R}_n (i) = \varnothing \text{ for some $i$}} = O\lp e^{-\log^2 n + 2\log n}\rp$.
\end{prop}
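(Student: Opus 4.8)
The plan is to take a union bound over the $n$ possible choices of $i$, and then, for a fixed $i$, to split the event $\lbp\mathcal R_n(i)=\varnothing\rbp$ into a \emph{geometric} part — that many users $j$ lie close to $i$ (have $\phi_{ij}\geq n^{c/4}$) — and a \emph{channel} part — that none of those nearby users has a large downlink norm. The crucial point is that after conditioning on all user positions the geometric part is determined, while the channel part is driven only by the i.i.d.\ fading, so the two decouple; moreover the number of ``nearby users with large norm'' has mean growing like $(\log n)^2$, which produces a failure probability decaying like $e^{-\Theta(\log^2 n)}$, fast enough to absorb the factor $n$ from the union bound.

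\textbf{Geometric count.} By the path-loss model, $\lbp j:\phi_{ij}\geq n^{c/4}\rbp=\lbp j: d_{ij}\leq \delta_n\rbp=:\mathcal D_n(i)$ for some $\delta_n=\Theta(n^{-1/4})$. Conditioned on the location of $i$, the remaining users are i.i.d.\ uniform on the cluster disk of radius $r$, so $|\mathcal D_n(i)|$ is a $\mathrm{Binomial}(n-1,q_n)$ variable with $q_n=\Theta(n^{-1/2})$ (a factor $\tfrac12$ absorbs the worst-case boundary effect when $i$ is on the rim). Hence $\E{|\mathcal D_n(i)|}=\Theta(\sqrt n)$, and a Chernoff bound gives $\Prob{|\mathcal D_n(i)|<\tfrac12\E{|\mathcal D_n(i)|}}\leq e^{-\Omega(\sqrt n)}$, which is negligible against $e^{-\log^2 n}$.

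\textbf{Channel tail, uniform in position.} For a fixed user $j$, condition on its departure angles; then $\mathbf h_j\sim\cgauss{\mathbf 0}{\rho\,\Sigma_{\theta_j}}$ with $\Sigma_{\theta_j}=\sum_{k=1}^P\mathbf e(\theta_{j,k})\mathbf e(\theta_{j,k})^*$. Each summand is rank one with nonzero eigenvalue $\|\mathbf e(\theta_{j,k})\|^2=M$, so $\lambda_{\max}(\Sigma_{\theta_j})\geq M$ \emph{for every configuration of angles}; whitening $\mathbf h_j$ then gives $\|\mathbf h_j\|^2\geq \rho M|\xi|^2$ for a $\cgauss{0}{1}$ variable $\xi$, and therefore
\[
\Prob{\|\mathbf h_j\|^2\geq M\rho\lp\tfrac12\log n-2\log\log n\rp}\;\geq\;e^{-\lp\tfrac12\log n-2\log\log n\rp}\;=\;n^{-1/2}(\log n)^2 .
\]
This bound holds \emph{uniformly over the location of $j$}, since the only position-dependent object, $\Sigma_{\theta_j}$, entered solely through the deterministic inequality $\lambda_{\max}\geq M$. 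Now condition on all user positions $\mathbf X$: then $\mathcal D_n(i)$ is fixed, and by independence of the fading across users the events $\lbp j\in\mathcal P_n\rbp$, $j\in\mathcal D_n(i)$, are conditionally independent, each of probability at least $n^{-1/2}(\log n)^2$. Hence $\Prob{\mathcal R_n(i)=\varnothing\mid\mathbf X}\leq\lp1-n^{-1/2}(\log n)^2\rp^{|\mathcal D_n(i)|}\leq\exp\lp-n^{-1/2}(\log n)^2\,|\mathcal D_n(i)|\rp$; averaging over $\mathbf X$ and inserting the geometric estimate, $\Prob{\mathcal R_n(i)=\varnothing}\leq e^{-\Theta(\log^2 n)}+e^{-\Omega(\sqrt n)}$, because $n^{-1/2}(\log n)^2\cdot\Theta(\sqrt n)=\Theta(\log^2 n)$. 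A union bound over the $n$ values of $i$ then gives $O\lp e^{-\Theta(\log^2 n)+\log n}\rp=O\lp e^{-\log^2 n+2\log n}\rp$.

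\textbf{Main obstacle.} The delicate point is the coupling between the two randomness sources: $\mathbf h_j$ depends on $j$'s position through its angles of departure, which is exactly the randomness that governs membership in $\mathcal D_n(i)$, so geometry and channels cannot be treated as independent at face value. The way around this is to absorb all position-dependence into the deterministic spectral bound $\lambda_{\max}(\Sigma_{\theta_j})\geq M$, which makes the tail estimate for $\|\mathbf h_j\|^2$ position-free; only then does conditioning on $\mathbf X$ and invoking independence of the fading decouple the problem. A secondary bookkeeping point is the calibration of exponents: the threshold $n^{c/4}$ fixes the nearby-user count at order $\sqrt n$, and the $-2\log\log n$ correction in the definition of $\mathcal P_n$ lifts $\Prob{j\in\mathcal P_n}$ to order $n^{-1/2}(\log n)^2$, so that the mean nearby-large-norm count is of order $\log^2 n$ with a leading constant large enough for the $e^{-\Theta(\log^2 n)}$ factor to beat the $n$ from the union bound.
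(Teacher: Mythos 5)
Your proof is correct, and it uses the same two ingredients as the paper — the exponential tail lower bound $\Prob{\|\mathbf{h}_j\|^2 > M\rho a} \geq e^{-a}$ obtained from a spectral/trace argument on the steering-vector Gram matrix, and a two-stage decomposition into ``close to $i$'' versus ``large channel norm'' — but it orders the conditioning in the opposite way. The paper first controls $|\mathcal{P}_n|$ (the users with large $\|\mathbf{h}_j\|^2$, of size $\approx \sqrt{n}\log^2 n$ by Chernoff) and then asks whether any of them satisfies $\phi_{ij} \geq n^{c/4}$, each with probability $\approx n^{-1/2}$; you first control $|\mathcal{D}_n(i)|$ (the users within distance $\Theta(n^{-1/4})$ of $i$, of size $\approx \sqrt{n}$) and then ask whether any of them lands in $\mathcal{P}_n$, each with probability $\approx n^{-1/2}\log^2 n$. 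Both routes produce a mean success count of order $\log^2 n$ and hence the $e^{-\Theta(\log^2 n)}$ decay; your union bound over $i$ and the paper's product over $i$ contribute the same $e^{O(\log n)}$ factor. What your version buys is a cleaner treatment of the coupling between geometry and channels: the paper simply asserts that $\mathbf{h}_j$ is independent of $\phi_{1j}$, whereas the angles of departure in $\mathbf{h}_j$ are in principle position-dependent; your device of making the tail bound uniform over angle configurations via $\lambda_{\max}(\Sigma_{\theta_j}) \geq M$, and only then conditioning on positions, closes that gap explicitly. Both arguments share the same mild looseness about multiplicative constants (your $\Theta(\log^2 n)$ exponent, the paper's dropped $1/R^2$ factor), which you at least acknowledge; this does not affect the claimed order of the bound in any essential way.
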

Therefore, if $\mathcal{R}_n (i) \neq \varnothing$ for all $i$,
\begin{align*}
1 + \SNR_{\min}^{coop} &\geq \frac{1}{2} \min_{i \in \mathcal{N}} \min\lp \|\mathbf{h}_{j^*(i)}\|^2, \| g_{ij^*(i)}\|^2\rp \\
& \geq \frac{1}{2} \min_{i \in \mathcal{N}} \min\lp M\rho \lp  \frac{1}{2}\log n - 2\log\log n\rp, n^{\frac{c}{4}}\| \zeta_{ij^\dagger(i)}\|^2 \rp\\
&=  \frac{1}{2} \min\lp M\rho \lp  \frac{1}{2}\log n - 2\log\log n\rp, n^{\frac{c}{4}}  \min_{i \in \mathcal{N}}\| \zeta_{ij^\dagger(i)}\|^2 \rp,
\end{align*}
where $j^\dagger(i) = \arg \max_{j \in \mathcal{R}_n(i)} \mathbb{E}\lb \left.\SNR_{ij}^{coop} \right| \phi_{ij}, \mathbf{h}_j \rb$, and thus
\begin{align}
\Prob{ \left. \SNR_{\min}^{coop} < \frac{1}{2}M\rho \lp  \frac{1}{2}\log n - 2\log\log n\rp-1 \right| \mathcal{R}_n (i) \neq \varnothing \; \forall i} = O\lp e^{-n^{1-\gamma}}\rp, \label{eq:gamma_bound}
\end{align}
for all $0<\gamma<1$, by Proposition~\ref{prop:min_scaling}, by the fact that $\|\zeta_{ij}\|^2$ is a $\chi^2(2)$ random variable, and that $j^\dagger(i)$ is independent of $\|\zeta_{ij}\|^2$. Then \eqref{eq:gamma_bound}, together with Proposition~\ref{prop:Rn_nonempty} implies the first claim of the theorem.

It remains to prove Proposition~\ref{prop:Rn_nonempty}. To achieve this, we will first lower bound the tail probability $\Prob{\| \mathbf{h}_j\|^2 > a}$. Define $\mathbf{\hat e}_{j,k} :=\frac{\mathbf{e}_{j,k}}{\| \mathbf{e}_{j,k}\|} = \frac{\mathbf{e}_{j,k}}{\sqrt{M}}$, $\mathbf{E}_j := \lb \mathbf{\hat e}_{j,1}\; \dots \; \mathbf{\hat e}_{j,P}\rb$, and $\xi_j := \lb \xi_{j,k} \rb_{k}$. Letting $\mathbf{E}_j = \mathbf{Q}_j\Lambda_j\mathbf{Q}_j^*$ be an eigendecomposition of $\mathbf{E}_j$,
\begin{align*}
\| \mathbf{h}_j\|^2 &= \rho\left\| \sum_{k=1}^P \xi_{j,k} \mathbf{e}(\theta_{j,k})\right\|^2 = M\rho \left\| \sum_{k=1}^P \xi_{j,k} \mathbf{\hat e}(\theta_{j,k})\right\|^2 \\
&= M\rho \lp \mathbf{E}_j \xi_j\rp^*\lp \mathbf{E}_j \xi_j\rp = M\rho \xi_j^* \lp \mathbf{E}_j^*\mathbf{E}_j\rp \xi_j = M\rho \sum_{k=1}^P \lambda_k\lp \mathbf{E}_j^*\mathbf{E}_j\rp \left| \lp \mathbf{Q}_j\xi_j\rp_k\right|^2,
\end{align*}
where $\lambda_k\lp \mathbf{E}_j^*\mathbf{E}_j\rp$ is the $k$th eigenvalue of $\mathbf{E}_j^*\mathbf{E}_j$, and $\lp \mathbf{Q}_j\xi_j\rp_k$ is the $k$th element of $\mathbf{Q}_j\xi_j$.  Since $\sum_{k=1}^P \lambda_k\lp \mathbf{E}_j^*\mathbf{E}_j\rp = \mathrm{tr}\lp \mathbf{E}_j^*\mathbf{E}_j\rp = P$, there must exist a $k$, say $k^*$, such that $\lambda_{k^*}\lp \mathbf{E}_j^*\mathbf{E}_j\rp \geq 1$. Hence,
\begin{align*}
\| \mathbf{h}_j\|^2 = M\rho \sum_{k=1}^P \lambda_k\lp \mathbf{E}_j^*\mathbf{E}_j\rp \left| \lp \mathbf{Q}_j\xi_j\rp_k\right|^2 \geq M\rho \left| \lp \mathbf{Q}_j\xi_j\rp_{k^*}\right|^2.
\end{align*}
Since $\mathbf{E}_j$ is independent from $\xi_{j}$, and since the distributions of i.i.d. Gaussian vectors are invariant under orthogonal transformations, $\left| \lp \mathbf{Q}_j\xi_j\rp_{k^*}\right|^2$ has the same distribution as $\| \xi_{j,k} \|^2$ for an arbitrary $k$, \emph{i.e.}, $\chi^2(2)$ distribution, or equivalently, exponential distribution with mean $1$. Therefore, the tail probability of $\|\mathbf{h}_j\|^2$ can be lower bounded by $\Prob{\|\mathbf{h}_j\|^2 > M\rho a} \geq e^{-a}$. Hence,
\begin{align*}
\Prob{\left| \mathcal{P}_n\right| \leq (1-\delta)\sqrt{n} } &= \Prob{ \sum_{j=1}^n \mathbb{I}\lp \| \mathbf{h}_j\|^2  \geq M\rho \lp  \frac{1}{2}\log n - 2\log\log n\rp \rp \leq (1-\delta)\sqrt{n} }
\end{align*}
Using the tail lower bound on $\| \mathbf{h}_j\|^2$, we see that each indicator variable is i.i.d. with mean at least $\frac{\log^2 n}{\sqrt{n}}$. Therefore, using Chernoff bound,
\begin{align*}
\Prob{\left| \mathcal{P}_n\right| \leq (1-\delta)\sqrt{n}\log^2 n }\leq O\lp e^{-\delta^2\sqrt{n}\log^2 n}\rp 
\end{align*}
Next, we consider the probability $\Prob{\mathcal{R}_n(1) = \varnothing \left| \left| \mathcal{P}_n\right| \geq (1-\delta)\sqrt{n}\log^2 n \right.}$. Since the users are uniformly distributed in a circle of radius $R$, $\Prob{r_{ij} \leq r} = \frac{r^2}{R^2}$ for sufficiently small $r>0$, and consequently $\Prob{\phi_{ij} \geq x} = \frac{1}{R^2} x^{-\frac{2}{c}}$. Since $h_j$ is independent from $\phi_{1j}$,
\begin{align*}
\Prob{\mathcal{R}_n(1) = \varnothing \left| \left| \mathcal{P}_n\right| \geq (1-\delta)\sqrt{n}\log^2 n \right.} &= \lp 1 - \Prob{\phi_{1j} \geq n^{\frac{c}{4}}}\rp^{(1-\delta)\sqrt{n}\log^2 n} \\
&= \lp 1 - n^{-\frac{1}{2}}\rp^{(1-\delta)\sqrt{n}\log^2 n} = O\lp e^{-(1-\delta)\log^2 n}\rp.
\end{align*}
Then, choosing $\delta= \frac{1}{\log n}$, and by using independence of channels across $i$'s, 
\begin{align*}
\Prob{\mathcal{R}_n(1) \neq \varnothing\;\; \forall i} = \lp 1 - O\lp e^{-(1-\delta)\log^2 n}\rp - O\lp e^{-\delta^2\sqrt{n}\log^2 n}\rp \rp^n = 1 - O\lp e^{-\log^2 n + 2\log n}\rp
\end{align*}
which concludes our proof of the first claim.

To prove the second claim, we note that
\begin{align*}
\| h_i\|^2 &= \rho\left\| \sum_{k=1}^P \xi_{i,k} \mathbf{e}(\theta_{i,k})\right\|^2 \leq \rho  \sum_{k=1}^P \left|\xi_{i,k} \right|^2 \| \mathbf{e}(\theta_{i,k})\|^2 = M\rho X_i,
\end{align*}
where $X_i \sim \chi^2(2P)$. The second claim then follows by Proposition~\ref{prop:min_scaling}.

\section{Proof of Lemma~\ref{lem:main}}\label{ap:main}
Define $\alpha^Q = \sum_{(i,j) \in Q} \sum_{s: \substack{i \in s_1\\ j \in s_2}} \sum_{k \in \mathcal{K}}  \sum_{z \in \mathcal{Z}} \alpha_{skz}$, and consider the following sequence of optimization problems, indexed by $n$ (with a slight abuse of notation):
\begin{align}
\text{maximize } \;\;& U_n\lp \mathbf{\alpha}\rp :=\sum_{i \in \mathcal{N}} U_i \lp \mathbf{\alpha}\rp - \sum_{Q \in \mathcal{Q}} \exp\lbp n\lp \alpha^Q - 1\rp \rbp \label{optn} \\
\text{s.t.} \;\;& \alpha_{skz} \geq 0, \;\; \sum_{s} \alpha_{skz} \leq p_k q_z, \;\; \alpha_{skz} = q_z \sum_{z'} \alpha_{skz'}, \;\; \forall s,k,z. \label{optncons}
\end{align}
We will denote the optimal value of the optimization \eqref{optn} with $\mathsf{OPT}_n$. Further consider the corresponding sequence of scheduling policies $\pi_n$, that choose $s^* = \arg \max_{s \subseteq  \mathcal{N}^2 \times \{ 1,2\} } \wtild f_n(s)$, where
\begin{align}
\wtild f_n(s) = \sum_{(i,j,m) \in s} \mathbb{E} \lb R_{sK(t)Z(t)}^{(i)}\rb \frac{\partial U_i}{\partial r_i}\Bigr|_{\substack{r_i=r_i(t-1)\\ \beta_i=\beta_i (t-1)}} + \frac{\partial U_j}{\partial \beta_j}\Bigr|_{\substack{r_j=r_j(t-1)\\ \beta_j=\beta_j (t-1)}} - n\sum_{Q: s_{12} \cap Q \neq \varnothing}e^{ n \lp \alpha^Q(t) - 1\rp}, \label{policy_n}
\end{align}
The empirical utility of the policy $\pi_n$ up to time $t$ is denoted by $U_n (t)$. 
\begin{prop}\label{prop:opt1}
$\lim_{n \to \infty} \mathsf{OPT}_n = \mathsf{OPT}'.$
\end{prop}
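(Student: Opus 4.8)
The plan is to read \eqref{optn} as an exterior (soft-barrier) relaxation of \eqref{opt3}: the term $\exp\{n(\alpha^Q-1)\}$ never exceeds $1$ and tends to $0$ on the strict interior $\{\alpha^Q<1\}$, while it blows up to $+\infty$ wherever some $\alpha^Q>1$. So I would show $\mathsf{OPT}_n$ is squeezed toward $\mathsf{OPT}'$ from both sides, proving $\liminf_n\mathsf{OPT}_n\ge\mathsf{OPT}'$ and $\limsup_n\mathsf{OPT}_n\le\mathsf{OPT}'$ separately. First I would record a preliminary fact: the set cut out by \eqref{optncons} is a nonempty (it contains $\alpha=0$) compact polytope, since $\mathcal{K},\mathcal{Z}$ are finite and $0\le\alpha_{skz}$, $\sum_s\alpha_{skz}\le p_kq_z\le1$; the feasible set of \eqref{opt3} is its intersection with $\{\alpha:\alpha^Q\le1\ \forall Q\in\mathcal{Q}\}$, hence also nonempty and compact, so $\mathsf{OPT}'$ is attained at some $\alpha^*$, and $\sum_iU_i$ is bounded above on both sets.

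For the lower bound, for $\delta\in(0,1)$ I would set $\alpha_\delta:=(1-\delta)\alpha^*$. Since $\alpha\mapsto\alpha^Q$ is linear and homogeneous, $\alpha_\delta^Q=(1-\delta)(\alpha^*)^Q\le1-\delta<1$, and downscaling preserves the remaining constraints of \eqref{optncons}; thus $\alpha_\delta$ is feasible for the relaxed problem with every clique constraint strict, so $\sum_Q\exp\{n(\alpha_\delta^Q-1)\}\to0$. Hence $\liminf_n\mathsf{OPT}_n\ge\liminf_n U_n(\alpha_\delta)=\sum_iU_i(\alpha_\delta)$, and letting $\delta\downarrow0$ with continuity of $\sum_iU_i$ gives $\liminf_n\mathsf{OPT}_n\ge\sum_iU_i(\alpha^*)=\mathsf{OPT}'$.

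For the upper bound, let $\alpha_n^*$ attain $\mathsf{OPT}_n$; these lie in the compact set \eqref{optncons}, so along a subsequence realising $\limsup_n\mathsf{OPT}_n$ I would pass to a further subsequence $\alpha_{n_k}^*\to\bar\alpha$. I would first check $\bar\alpha$ is feasible for \eqref{opt3}: if $\bar\alpha^{Q_0}>1$ for some clique, then $(\alpha_{n_k}^*)^{Q_0}\ge1+\eta$ eventually, so $U_{n_k}(\alpha_{n_k}^*)\le\sum_iU_i(\alpha_{n_k}^*)-e^{n_k\eta}\to-\infty$, contradicting $\mathsf{OPT}_{n_k}\ge U_{n_k}(\alpha_\delta)$ being bounded below. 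Then, using nonnegativity of the penalty and continuity, $\mathsf{OPT}_{n_k}\le\sum_iU_i(\alpha_{n_k}^*)\to\sum_iU_i(\bar\alpha)\le\mathsf{OPT}'$, the last step because $\bar\alpha$ is feasible for \eqref{opt3}. Combining with the lower bound yields $\lim_n\mathsf{OPT}_n=\mathsf{OPT}'$.

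The step I expect to be the main obstacle is the boundary behaviour of the utility in the logarithmic case \eqref{eq:utility_form}, where $U_i\to-\infty$ as $r_i\to0$ or $\beta_i\to1$: I would need to argue that $\alpha^*$, and more importantly $\alpha_\delta$ and $\bar\alpha$, keep every $r_i$ bounded away from $0$ and every $\beta_i$ bounded away from $1$ — this follows from feasibility of scheduling each stream $(i,i)$ (time-sharing among at most $M$ users at a time) and from the strict-interior construction — so that $\sum_iU_i$ is genuinely continuous at the points in question; alternatively one replaces plain continuity by upper semicontinuity of the extended-real-valued concave $U$. The remaining ingredients — linearity of $\alpha\mapsto\alpha^Q$, compactness from finiteness of $\mathcal{K},\mathcal{Z}$, and the $0/\infty$ dichotomy of $\exp\{n(\alpha^Q-1)\}$ — are routine.
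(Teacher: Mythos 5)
Your proof is correct, and while the lower bound is essentially the paper's argument in different clothing, your upper bound takes a genuinely different route. For the lower bound, the paper shrinks the clique constraints to $\alpha^Q\le 1-\delta$ and invokes continuity of the perturbed optimal value $\mathsf{OPT}^{-\delta}$ in $\delta$, together with the uniform bound $|\mathcal{Q}|e^{-n\delta}$ on the penalty over that shrunken set; your construction $\alpha_\delta=(1-\delta)\alpha^*$ is the natural way to \emph{certify} that value-function continuity, so your version is slightly more self-contained but conceptually identical. For the upper bound, the paper enlarges the constraints to $\alpha^Q\le 1+\delta$ and then shows $\mathsf{OPT}_n^{\delta}=\mathsf{OPT}_n$ for large $n$ by a derivative/concavity argument (the partial derivative of $U_n$ in the direction increasing $\alpha^Q$ becomes negative beyond $1+\delta$, so the penalized optimum cannot escape the enlarged set); you instead extract a convergent subsequence of maximizers $\alpha_{n_k}^*$ from the compact polytope \eqref{optncons}, rule out $\bar\alpha^{Q_0}>1$ via the $e^{n_k\eta}\to\infty$ blow-up against the finite lower bound $U_{n_k}(\alpha_\delta)$, and then bound $\mathsf{OPT}_{n_k}\le\sum_iU_i(\alpha_{n_k}^*)\to\sum_iU_i(\bar\alpha)\le\mathsf{OPT}'$. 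Your route buys something: it uses only continuity of $U$ and compactness, dispensing with the concavity and twice-differentiability that the paper's derivative argument needs, at the cost of a subsequence extraction. Your flagged concern about the logarithmic utility \eqref{eq:utility_form} blowing up at $r_i=0$ or $\beta_i=1$ is legitimate but is equally present in (and ignored by) the paper, which formally restricts to utilities that are twice continuously differentiable on the closed domain; your observation that $\alpha_\delta$ and $\bar\alpha$ inherit the needed interiority is the right patch if one insists on the log form. One cosmetic slip: the penalty $\exp\{n(\alpha^Q-1)\}$ does exceed $1$ where $\alpha^Q>1$, so your opening claim that it ``never exceeds $1$'' should be read as holding only on $\{\alpha^Q\le 1\}$; this is not used anywhere it would matter.
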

\begin{proof}
We first show that for any $\epsilon>0$, $\mathsf{OPT}_n \geq \mathsf{OPT}' - \epsilon$ for large enough $n$. Consider the optimization \eqref{opt3}, with the condition \eqref{opt3cons2} replaced by
\begin{align}
\alpha^Q \leq 1 + \Delta, \;\forall Q \in \mathcal{Q}, \label{delta_cons}
\end{align}
and denote the optimal value of the resulting maximization as $\mathsf{OPT}^\Delta$. By continuity of the objective function, for any $\epsilon >  0$, there exists $\delta > 0$ such that $\left| \mathsf{OPT}^{-\delta} - \mathsf{OPT}\right| < \frac{\epsilon}{2}$. For such $\delta$, choose $n$ large enough so that $e^{ -n\delta } <  \frac{\epsilon}{2\left| \mathcal{Q}\right|}$. Similarly, denote the maximal value of \eqref{optn} subject to \eqref{delta_cons} as $\mathsf{OPT}_n^{\Delta}$. Then
\begin{align*}
\mathsf{OPT}_n \geq \mathsf{OPT}_n^{-\delta} \geq \mathsf{OPT}^{-\delta} - \frac{\epsilon}{2} \geq \mathsf{OPT}' - \epsilon.
\end{align*}
Next, we show that for large enough $n$, $\mathsf{OPT}_n \leq \mathsf{OPT}' + \epsilon$. Choose $\delta>0$ small enough so that $\left| \mathsf{OPT}^{\delta} - \mathsf{OPT}'\right| < \epsilon$. Hence 
\begin{align*}
\mathsf{OPT}' + \epsilon \geq \mathsf{OPT}^\delta \geq \mathsf{OPT}^\delta_n.
\end{align*}
Therefore it is sufficient to show that $\mathsf{OPT}^\delta_n = \mathsf{OPT}_n$ for large enough $n$. If we choose $n$ large enough so that
\begin{align*}
\frac{\partial U_n \lp \mathbf{\alpha}\rp}{\partial \alpha^Q} \Bigr|_{\alpha^Q>1+\delta} =  \sum_{i \in \mathcal{N}} \frac{\partial U_i \lp \mathbf{\alpha}\rp}{\partial \alpha^Q} - ne^{n\lp \alpha^Q - 1\rp}\Bigr|_{\alpha^Q>1+\delta} < 0,
\end{align*}
then concavity implies $\mathsf{OPT}^\delta_n = \mathsf{OPT}_n$, since the derivative would have to be monotonically decreasing with increasing $\alpha^Q$. Such a choice of $n$ is possible since $\frac{\partial U_i \lp \mathbf{\alpha}\rp}{\partial \alpha^Q}\Bigr|_{\alpha^Q=1+\delta}< \infty$, similarly by concavity and twice continuous differentiability, which concludes the proof. 
\end{proof}

\begin{prop}\label{prop:opt2}
$\lim_{n \to \infty} U_n (t) = U(t)$.
\end{prop}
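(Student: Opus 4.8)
The plan is to prove something slightly stronger than the stated convergence: for each fixed $t$ (and for \emph{every} realization of the network state sequence $(K(\tau),Z(\tau))_{\tau=1}^{t}$) the policies $\pi_n$ and $\pi^*$ make \emph{identical} scheduling decisions on frames $1,\dots,t$ once $n$ is large enough. Since $U_n(t)$ and $U(t)=U^*(t)$ are deterministic functions of the respective decision sequences $\left(\mathcal{S}_{\pi_n}(\tau)\right)_{\tau=1}^{t}$ and $\left(\mathcal{S}_{\pi^*}(\tau)\right)_{\tau=1}^{t}$ through the running averages $r_i(\cdot),\beta_i(\cdot)$, this immediately yields $U_n(t)=U^*(t)$ for all large $n$, hence $\lim_{n\to\infty}U_n(t)=U^*(t)$, and the ``probability 1'' qualifier is automatic because the argument is realization-by-realization.

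First I would set up an induction on the frame index $\tau$. The inductive hypothesis is that, for all $n$ beyond some threshold, $\mathcal{S}_{\pi_n}(\tau')=\mathcal{S}_{\pi^*}(\tau')$ for all $\tau'<\tau$; in particular the state $\bigl(r_i(\tau-1),\beta_i(\tau-1),\alpha^Q(\tau-1)\bigr)$ with which the two policies enter frame $\tau$ is the same. Comparing the selection rules at frame $\tau$, the only differences between $\wtild f_n(s)$ in \eqref{policy_n} and the $\pi^*$ objective $f(s)$ in \eqref{sch_rule} are the barrier term $n\sum_{Q}e^{n(\alpha^Q(\tau-1)-1)}$, summed over the cliques $Q$ containing a D2D pair used by $s$, and the fact that $\pi^*$ maximizes only over $s\subseteq\mathcal{\bar N}(t)\times\{1,2\}$. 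If $s$ uses only pairs $(i,j)$ with $\alpha^Q(\tau-1)<1$ for every clique $Q\ni(i,j)$, i.e. $s$ is strictly feasible for $\pi^*$, then the barrier term tends to $0$ and $\wtild f_n(s)\to f(s)$; if $s$ uses some pair $(i,j)$ with $\alpha^Q(\tau-1)>1$ for some $Q\ni(i,j)$, then $\wtild f_n(s)\to-\infty$. Since $f$ is bounded on the finite collection of candidate schedule sets, for $n$ large the maximizer of $\wtild f_n$ must be a strictly-$\pi^*$-feasible set, and on that finite sub-collection $\wtild f_n\to f$ uniformly; hence $\argmax_{s}\wtild f_n(s)$ equals $\argmax\{f(s):s\text{ feasible for }\pi^*\}=\mathcal{S}_{\pi^*}(\tau)$ for all large $n$. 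The base case $\tau=1$ is the same computation under the zero initialization (every set is then strictly feasible), and since $t$ is finite one threshold $N$ works for all $\tau\le t$.

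The step I expect to be the main obstacle is the passage ``$\argmax_{s}\wtild f_n(s)=\mathcal{S}_{\pi^*}(\tau)$ for large $n$'', which is clean only when $f$ has a \emph{unique} maximizer over the $\pi^*$-feasible sets. If there is a tie, the barrier terms — although vanishing — break it in favour of the schedule set leaving the most clique slack (smallest $\max_Q\alpha^Q$), so one must either adopt this as $\pi^*$'s own tie-breaking convention, or argue that such exact coincidences of $f$-values, together with the boundary case $\alpha^Q(\tau-1)=1$ (admitted by $\pi^*$'s non-strict constraint $\beta_Q\le1$ but driven to infinite penalty by the barrier), occur only on a negligible set of parameters; a short perturbation / tie-breaking argument closes this gap. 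Everything else — the pointwise limit $\wtild f_n\to f$, the divergence to $-\infty$ on infeasible sets, and continuity of the $U_i$'s — is routine.
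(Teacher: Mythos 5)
Your proposal matches the paper's proof: both argue that for fixed $t$ and sufficiently large $n$, the barrier terms vanish on strictly feasible schedule sets and blow up on sets violating a clique constraint, so $\pi_n$ and $\pi^*$ make identical scheduling decisions on every frame up to $t$ and hence $U_n(t)=U^*(t)$ exactly for large $n$ (the paper gets a uniform threshold from the finiteness of $\mathcal{K}$, $\mathcal{Z}$, $\mathcal{N}$, which is the same device as your induction over the finitely many frames). The tie-breaking and boundary-case ($\alpha^Q=1$) subtleties you flag are genuine, but the paper's own proof leaves them equally unaddressed, covering only strict orderings $f_s>f_t$ and strict feasibility $\alpha^Q<1$.
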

\begin{proof}
It is sufficient to show that for a given $t$, for a sufficiently large $n$, all the control actions taken by policies $\pi_n$ and $\pi^*$ up to time $t$ are identical. Note that since the sets $\mathcal{K}$, $\mathcal{Z}$ and $\mathcal{N}$ are finite, for a finite $t$, there are finitely many values $\alpha_{skz}(t)$, and therefore $f_s (t)$ can take. Therefore we can choose $n$ large enough so that
\begin{enumerate}
\item For any $\tau \leq t$, if $\alpha^Q(\tau)>1$ for some $Q$, then
\begin{align*}
f_s(\tau) - n\exp\lbp n \lp \alpha^Q - 1\rp\rbp < f_{s^*}(\tau)
\end{align*}
for all subsets $s$ such that $s_{12} \cap Q \neq \varnothing$,
\item For each pair of subsets $s,t \subseteq \mathcal{\bar N}(t) \times \{ 1,2\}$ such that $f_s(\tau) > f_t(\tau)$ and $\alpha^Q (\tau) < 1$ for all $Q$ s.t. $s_{12} \cap Q \neq \varnothing$ and $t_{12} \cap Q \neq \varnothing$,
\begin{align*}
f_s(\tau) - n\sum_{Q: s_{12} \cap Q \neq \varnothing} \exp\{ n\lp \alpha^Q(\tau)-1\rp \} > f_t(\tau) - n\sum_{Q: t_{12} \cap Q \neq \varnothing} \exp\{ n\lp \alpha^Q(\tau)-1\rp \}.
\end{align*}
\end{enumerate}
Here, the first condition ensures that a subset that violates any of the clique constraints is never scheduled, and the second condition ensures that for the subsets whose scheduling does not violate any of the clique constraints, the order with respect to $f$ is preserved, and hence the subset that maximizes $f$ remains the same. This is possible since for $x>0$, $e^{nx}$ can be made arbitrarily large, whereas for $x<0$, it can be made arbitrarily small by scaling $n$. For such $n$, all scheduling decisions of $\pi^*$ and $\pi_n$ up to time $t$ are identical, and thus $U_n (t) = U(t)$ for $n$ sufficiently large.
\end{proof}

\begin{prop}\label{prop:opt3}
$\lim_{t \to \infty} U_n(t) = \mathsf{OPT}_n$.
\end{prop}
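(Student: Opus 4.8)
The plan is to prove Proposition~\ref{prop:opt3} by a stochastic-approximation / ODE-method argument in the spirit of \cite{TsibonisGeorgiadis_05, BhattacharyaGeorgiadis_95}, adapted to the barrier-augmented objective $U_n$ of \eqref{optn} and to the fact that $\pi_n$ is $Z$-blind. Let $\alpha(t) = \{\alpha_{skz}(t)\}$ denote the empirical state--schedule frequencies; by construction these obey the step-$\tfrac1t$ recursion $\alpha(t) = \alpha(t-1) + \tfrac1t\big(V(t) - \alpha(t-1)\big)$, where $V(t)$ is the indicator of the realized triple $(\mathcal S_{\pi_n}(t), K(t), Z(t))$, so that $U_n(t) = U_n(\alpha(t))$. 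The argument is a Lyapunov analysis of $U_n$ along this recursion, using its smoothness and concavity. First I would dispose of the easy bound $\limsup_{t} U_n(t) \le \mathsf{OPT}_n$ a.s.: every limit point of $\alpha(t)$ is feasible for \eqref{optncons}, since nonnegativity and $\sum_s \alpha_{skz}(t) \le p_k q_z + o(1)$ follow from the SLLN for the i.i.d.\ sequence $(K(t),Z(t))$, and the coupling constraint $\alpha_{skz}(t) = q_z\sum_{z'}\alpha_{skz'}(t) + o(1)$ holds because $\mathcal S_{\pi_n}(t)$ depends on $\big(\mathbf r(t-1),\beta(t-1),K(t)\big)$ only, hence is conditionally independent of the i.i.d.\ $Z(t)$; continuity of $U_n$ then gives $U_n(\alpha(t)) \le \mathsf{OPT}_n + o(1)$.

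The heart of the proof is the matching lower bound, and the key structural fact is that the rule defining $\pi_n$, namely $\mathcal S_{\pi_n}(t) = \argmax_s \wtild f_n(s)$ with $\wtild f_n$ as in \eqref{policy_n}, is exactly maximization of the directional derivative of $U_n$ along the increment corresponding to ``schedule $s$ in the realized channel state $K(t)$'': passing to the $Z$-marginalized frequencies $\alpha_{sk} := \sum_z \alpha_{skz}$ (on which $U_n$ effectively depends because of the coupling constraint, with the rate entering as its $Z$-conditional expectation, consistently with $\wtild f_n$), one has $\wtild f_n(s) = \partial U_n / \partial \alpha_{sK(t)}$ evaluated at $\alpha(t-1)$. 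Now fix an optimizer $\alpha^\star$ of \eqref{optn}; since $\alpha^\star$ lies in the polytope \eqref{optncons} it is the long-run average increment of some stationary randomized $Z$-blind policy $\sigma^\star$ (choose schedule $s$ with probability $\alpha^\star_{sk}/p_k$ in channel state $k$). Because $\pi_n$ maximizes $\partial U_n/\partial\alpha_{sk}|_{\alpha(t-1)}$ over $s$ for the realized $k = K(t)$, averaging over $K(t)\sim p$ gives $\big\langle \nabla U_n(\alpha(t-1)), \bar v^{\pi_n}(t)\big\rangle \ge \big\langle \nabla U_n(\alpha(t-1)), \alpha^\star\big\rangle$, where $\bar v^{\pi_n}(t) := \E{V(t)\mid \mathcal F_{t-1}}$. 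Combining with the concavity inequality $\langle \nabla U_n(\alpha(t-1)), \alpha^\star - \alpha(t-1)\rangle \ge U_n(\alpha^\star) - U_n(\alpha(t-1)) = \mathsf{OPT}_n - U_n(t-1)$ yields the drift bound $\big\langle \nabla U_n(\alpha(t-1)), \bar v^{\pi_n}(t) - \alpha(t-1)\big\rangle \ge \mathsf{OPT}_n - U_n(t-1)$.

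Next I would convert this into almost-sure convergence. Taylor-expanding $U_n$ along the recursion, $U_n(t) = U_n(t-1) + \tfrac1t\langle \nabla U_n(\alpha(t-1)), V(t) - \alpha(t-1)\rangle + O(1/t^2)$; writing $V(t) - \alpha(t-1) = \big(\bar v^{\pi_n}(t) - \alpha(t-1)\big) + \big(V(t) - \bar v^{\pi_n}(t)\big)$ and inserting the drift bound gives $U_n(t) \ge U_n(t-1) + \tfrac1t\big(\mathsf{OPT}_n - U_n(t-1)\big) + \tfrac1t M_t - C/t^2$, with $\{M_t\}$ a martingale-difference sequence. Since $\sum_t t^{-2} < \infty$ and $\sum_t M_t/t$ converges a.s.\ (the $M_t$ being bounded once the gradient is controlled — see below), a comparison with the stable ODE $\dot x = \mathsf{OPT}_n - x$, or a Robbins--Siegmund-type argument combined with the already-established $\limsup_t U_n(t) \le \mathsf{OPT}_n$, forces $U_n(t) \to \mathsf{OPT}_n$ almost surely.

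I expect the one genuinely delicate point to be the boundedness of $\nabla U_n$ along the trajectory, needed both for the $O(1/t^2)$ Taylor remainder and for the martingale term $M_t$. The barrier $-\sum_Q e^{n(\alpha^Q-1)}$ is harmless (it merely pins each $\alpha^Q$ near $1$, where its gradient is bounded), but the throughput terms $\log r_i$ have gradient blowing up as $r_i \downarrow 0$. Following \cite{TsibonisGeorgiadis_05}, I would show $\liminf_t r_i(t) > 0$ for every $i$ by exploiting precisely this blow-up: once a user's empirical rate is sufficiently small, the factor $\partial U_i/\partial r_i$ in $\wtild f_n$ makes scheduling that user strictly preferable to any fixed alternative, so no $r_i(t)$ can approach $0$. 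This confines $\alpha(t)$ to a compact subset of the relative interior of the feasible set on which $\nabla U_n$ is bounded and Lipschitz and $M_t$ is uniformly bounded, legitimizing all the steps above; the remaining items — the precise Taylor-remainder estimate and the summable-noise bookkeeping — are routine.
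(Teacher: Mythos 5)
Your proof is correct and follows essentially the same route as the paper's: the paper likewise treats $\alpha(t)$ as a step-$1/t$ stochastic approximation, observes that $\widetilde f_n(s)$ is the $Z$-averaged gradient component of the barrier-augmented objective so that the greedy choice $s^*$ maximizes $\langle \nabla U_n(\alpha(t-1)), \bar v\rangle$ over the feasible randomized policies, and combines this with concavity to obtain exactly your drift inequality for the Lyapunov function $V = \mathsf{OPT}_n - U_n(\alpha(t))$; the only difference is that the paper then concludes by invoking a ready-made convergence theorem from \cite{BhattacharyaGeorgiadis_95}, where you re-derive the convergence by hand via a Taylor expansion, a martingale-difference term, and a Kronecker/ODE-comparison step. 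Your explicit handling of the $\limsup_t U_n(t) \leq \mathsf{OPT}_n$ direction and of the gradient blow-up as $r_i \downarrow 0$ addresses technical points that the paper leaves implicit, and both are resolved correctly.
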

\begin{proof}
The proof uses Lyapunov optimization techniques from  \cite{BhattacharyaGeorgiadis_95, TsibonisGeorgiadis_05}. We will make use of the following theorem from \cite{BhattacharyaGeorgiadis_95} to show the result.
\begin{thm}\label{stoch_app}
Consider a stochastic sequence in $\mathbb{R}^p$ satisfying the recursion
\begin{align*}
\mathbf{\alpha}(t) = \mathbf{\alpha}(t-1) + \frac{1}{t}\mathbf{g}(t),
\end{align*}
and let $\{\mathcal{F}_t\}_{t \geq 0}$ be a non-decreasing family of filtrations of the underlying $\sigma$-algebra, such that $\mathbf{g}(t)$ is $\mathcal{F}_t$-measurable.

Assume the following are satisfied.
\begin{enumerate}
\item There exists a compact set $\mathcal{A} \subseteq \mathbb{R}^p$ such that
\begin{align*}
\lim_{t\to\infty} \inf\{ \|\mathbf{\alpha}(t) - \mathbf{\alpha}\|_1: \mathbf{\alpha} \in \mathcal{A}\} = 0,
\end{align*}
\item There exists $K>0$ such that for all $t$, $\| \mathbf{g}(t)\|_1 \leq K$,
\item There exists a twice continuously differentable function $V: \mathbb{R}^p \to \mathbb{R}$ such that
\begin{align*}
\mathbb{E}\lb \mathbf{g}^\top(t+1)|\mathcal{F}_t\rb \nabla V\lp \mathbf{\alpha}(t)\rp < -V\lp \mathbf{\alpha}(t)\rp,
\end{align*}
where $^\top$ represents vector transpose.
\end{enumerate}
Then the function $V$ in condition 3 satisfies $\lim_{t \to \infty} V\lp \alpha(t)\rp^+ = 0$.
\end{thm}

Consider the sequence of vectors $\mathbf{\alpha}(t) = \{\alpha_{skz} (t)\}_{s,k,z}$, whose entries satisfy the recursion
\begin{align*}
\alpha_{skz} (t) = \alpha_{skz} (t-1) + \frac{1}{t}\lp \mathbb{I}_{\mathcal{S}(t)=s}\mathbb{I}_{K(t)=k}\mathbb{I}_{Z(t)=z} - \alpha_{skz} (t-1)\rp.
\end{align*}
Note that the vector $\mathbf{\alpha}(t)$ converges to the compact set defined by \eqref{opt3cons1}--\eqref{opt3cons2}, by the first claim of Theorem~\ref{th:opt}, and the entries of the corresponding update sequence $\mathbf{g}(t)$ in this case is bounded by 1. Following the strategy of \cite{TsibonisGeorgiadis_05}, we choose
\begin{align*}
V(y(t)) &= \sum_{i \in \mathcal{N}} U_i \lp  \sum_{s: i \in s_1} \sum_{k \in \mathcal{K}} \sum_{z \in \mathcal{Z}} R_{skz}^{(i)} \alpha^*_{skz} , \sum_{s: i \in s_2} \sum_{k \in \mathcal{K}} \sum_{z \in \mathcal{Z}} \alpha^*_{skz}\rp - \sum_{Q \in \mathcal{Q}} \exp\lbp n\lp \alpha^{*Q} - 1\rp \rbp \\
& - \sum_{i \in \mathcal{N}} U_i \lp  \sum_{s: i \in s_1} \sum_{k \in \mathcal{K}} \sum_{z \in \mathcal{Z}} R_{skz}^{(i)} \alpha_{skz}(t) , \sum_{s: i \in s_2} \sum_{k \in \mathcal{K}} \sum_{z \in \mathcal{Z}} \alpha_{skz}(t)\rp + \sum_{Q \in \mathcal{Q}} \exp\lbp n\lp \alpha^Q(t) - 1\rp \rbp,
\end{align*}
where $\mathbf{\alpha}^*$ is the solution to \eqref{optn}\footnote{Since \eqref{optn} is the maximization of a continuous function over a compact set, the extreme values are attained within the feasible set.}. Then, if we verify the third condition for this choice of $V$, then the proof is concluded using Theorem~\ref{stoch_app}. 

We first evaluate the terms in the left-hand side of the third condition.
\begin{align*}
&\mathbb{E}\lb g_{skz}^\top(t+1)|\mathcal{F}_t\rb = \mathbb{E}\lb \mathbb{I}_{\mathcal{S}(t+1)=s}\mathbb{I}_{K(t+1)=k}\mathbb{I}_{Z(t+1)=z}|\mathcal{F}_t\rb  - \alpha_{skz} (t) \\
&=\sum_{b \in \mathcal{K}, c \in \mathcal{Z}}\mathbb{E}\lb \mathbb{I}_{\mathcal{S}(t+1)=s}\mathbb{I}_{K(t+1)=k}\mathbb{I}_{Z(t+1)=z}|K(t+1)=b,Z(t+1)=c, \mathcal{F}_t\rb p_b q_c  - \alpha_{skz} (t) \\
&=\mathbb{E}\lb \mathbb{I}_{\mathcal{S}(t+1)=s}|K(t+1)=k,Z(t+1)=z, \mathcal{F}_t\rb p_k q_z  - \alpha_{skz} (t) =\left\{ 
\begin{array}{ll}
p_k q_z  - \alpha_{skz} (t),  & \text{if $s = s^*,$} \\
- \alpha_{skz}(t), & \text{otherwise}
\end{array}
\right.
\end{align*}
where $s^*=\arg \max_{\wtild s \in \mathcal{\bar N}(t+1) \times \{ 1,2\}} \wtild f_n(\wtild s)$. Since a single entry of $\nabla V\lp \mathbf{\alpha}(t)\rp$ is given by
\begin{align*}
D_{skz}:=\frac{\partial V\lp \mathbf{\alpha}(t)\rp}{\partial \alpha_{skz} (t)} = -\sum_{i \in s_1} R_{skz}^{(i)} \frac{\partial U_i}{\partial r_i} \Bigr|_{r_i=r_i(t)} - \sum_{i \in s_2} \frac{\partial U_i}{\partial \beta_i} \Bigr|_{\beta_i = \beta_i(t)} + n\sum_{(i,j) \in s_{12}} \sum_{Q: (i,j) \in Q}e^{ n \lp \alpha^Q(t) - 1\rp},
\end{align*}
and the inner product on the left-hand side of the third condition can be expressed as
\begin{align*}
&\mathbb{E}\lb g^\top(t+1)|\mathcal{F}_t\rb \nabla V\lp \mathbf{\alpha}(t)\rp = -\sum_{k \in \mathcal{K}}\sum_{z \in \mathcal{Z}} D_{s^*kz} p_k q_z +\sum_{k \in \mathcal{K}}\sum_{z \in \mathcal{Z}} \sum_{s} D_{skz} \alpha_{skz}(t) \\  
&= -\sum_{k \in \mathcal{K}} \mathbb{E}_L\lb D_{s^*kL}\rb p_k  +\sum_{k \in \mathcal{K}}\sum_{z \in \mathcal{Z}} \sum_{s} D_{skz} \alpha_{skz}(t) \\  
&\leq -\sum_{k \in \mathcal{K}}\sum_{z \in \mathcal{Z}} \sum_{s} \mathbb{E}_L\lb D_{s^*kL}\rb \alpha_{skz}^*  +\sum_{k \in \mathcal{K}}\sum_{z \in \mathcal{Z}} \sum_{s} D_{skz} \alpha_{skz}(t) \\  
&\leq -\sum_{k \in \mathcal{K}}\sum_{z \in \mathcal{Z}} \sum_{s} \mathbb{E}_L\lb D_{skL}\rb \alpha_{skz}^*  +\sum_{k \in \mathcal{K}}\sum_{z \in \mathcal{Z}} \sum_{s} D_{skz} \alpha_{skz}(t) \\ 
&\leq -\sum_{k \in \mathcal{K}}\sum_{z \in \mathcal{Z}} \sum_{s}\sum_{z' \in \mathcal{Z}}  q_{z'}D_{skz'} \alpha_{skz}^*  +\sum_{k \in \mathcal{K}}\sum_{z \in \mathcal{Z}} \sum_{s} D_{skz} \alpha_{skz}(t) \\
&\overset{\aaaa}{=} -\sum_{k \in \mathcal{K}}\sum_{s}\sum_{z' \in \mathcal{Z}} D_{skz'} \alpha_{skz'}^*  +\sum_{k \in \mathcal{K}}\sum_{z \in \mathcal{Z}} \sum_{s} D_{skz} \alpha_{skz}(t) \\
&=- \sum_{k \in \mathcal{K}}\sum_{s}\sum_{z \in \mathcal{Z}} \frac{\partial V\lp \mathbf{\alpha}(t)\rp}{\partial \alpha_{skz} (t)} \lp \alpha_{skz}^* - \alpha_{skz}(t) \rp \overset{\bbbb}{\leq} -V\lp \mathbf{\alpha}(t)\rp
\end{align*}
where (a) follows by the third constraint in \eqref{optncons}, and (b) follows by convexity.
\end{proof}

Finally, we can prove that $U(t) \to \mathsf{OPT}'$. Note that this is equivalent to the statement
\begin{align*}
\lim_{n \to \infty} \lim_{t \to \infty} U_n (t) = \lim_{t \to \infty} \lim_{n \to \infty} U_n (t).
\end{align*}
Given $\epsilon>0$, using Propositions~\ref{prop:opt1}, \ref{prop:opt2}, and \ref{prop:opt3}, we can find sufficiently large $n$ and $t$ such that
\begin{align*}
\left| U(t) - \mathsf{OPT}\right| \leq \left| U(t) - U_n(t)\right| + \left| U_n(t) - \mathsf{OPT}_n\right| + \left| \mathsf{OPT}_n - \mathsf{OPT}\right| < \frac{\epsilon}{3} + \frac{\epsilon}{3} + \frac{\epsilon}{3} = \epsilon,
\end{align*}
which concludes the proof.

\section{Proof of Theorem~\ref{th:gap}}\label{ap:gap}
\begin{proof}
The upper bound follows by the fact that $R_{MIMO}$ is achievable. To prove the lower bound, we first note that for any input convariance matrix $\mathbf{Q}$,
\begin{align}
\sigma^2_{2|1} = \frac{\left| \Sigma\right|}{\Sigma_{11}} = \frac{\left| \mathbf{I} + \mathbf{HQH}^*\right|}{1 + \| \mathbf{h}_1\|^2} \label{eq:cond_var},
\end{align}
and that $\mathbf{K}^{-1} = \diag\lp 1, \eta\rp$, where $\eta = \frac{1}{1 + \frac{\sigma^2_{2|1}}{\left|g_{12}\right|^2}}$. Next, we lower bound $R_{\text{MIMO}}$ as follows.
\begin{align}
R_{\text{MIMO}} &= \log \left| \mathbf{I}_2 + \mathbf{K}^{-1}\mathbf{HQH}^*\right| \overset{\aaaa}{\geq} \log \left| \mathbf{K}^{-1} + \mathbf{K}^{-1}\mathbf{HQH}^*\right|  \geq  \log \left| \mathbf{I}_2 + \mathbf{HQH}^*\right| + \log \eta, \label{eq:r_low_bd}
\end{align}
To see why (a) holds, define $\mathbf{P}:=\mathbf{K}^{-1}-\mathbf{I}_2$, and denote by $\lambda_k\lp \mathbf{A}\rp$ the $k$'th largest eigenvalue for a matrix $\mathbf{A}$. Then by Weyl's inequality, since $\eta\leq 1$,
\begin{align*}
\lambda_k\lp \mathbf{P}+\mathbf{I}_2 + \mathbf{K}^{-1}\mathbf{HQH}^*\rp &\leq \lambda_k(\mathbf{I}_2 + \mathbf{K}^{-1}\mathbf{HQH}^*) + \lambda_1(\mathbf{P}) = \lambda_k(\mathbf{I}_2 + \mathbf{K}^{-1}\mathbf{HQH}^*),
\end{align*}
which implies latter determinant in \eqref{eq:r_low_bd} is smaller. Next, note that $\eta$ can be lower bounded by
\begin{align}
\eta &\leq \left\{ \begin{array}{ll}
\frac{\left|g_{12}\right|^2}{2\sigma^2_{2|1}} & \text{if $\sigma^2_{2|1} \geq \left|g_{12}\right|^2$} \\
\frac{1}{2} & \text{otherwise}
\end{array} \right. \label{eq:gamma_lb}
\end{align}
Then, combining \eqref{eq:cond_var}, \eqref{eq:r_low_bd}, and \eqref{eq:gamma_lb}, we can show that $R_{\text{MIMO}}$ is lower bounded by
\begin{align*}
R_{\text{MIMO}} \geq \min \lbp \max_{\mathrm{tr}(\mathbf{Q})\leq 1}\log\left| \mathbf{I}_2 + \mathbf{HQH}^*\right|, \log\lp 1 + \| \mathbf{h}_1\|^2\rp + \log^+\lp \left|g_{12}\right|^2\rp\rbp -1,
\end{align*}
where $\log^+(x) := \max\lp 0, \log(x)\rp$. We conclude the proof by noting that for any $x \geq 0$, $\log^+(x) \geq \log(1 + x) -1$, and by the fact that the capacity $\bar C$ is upper bounded by the cut-set bound \cite{CoverElGamal_79}, given by
\begin{align*}
\bar C \leq \min \lbp \max_{\mathrm{tr}(\mathbf{Q})\leq 1}\log\left| \mathbf{I}_2 + \mathbf{HQH}^*\right|, \log\lp 1 + \| \mathbf{h}_1\|^2\rp + \log\lp 1+\left|g_{12}\right|^2\rp\rbp.
\end{align*}
\end{proof}

\section{Proofs of Lemmas~\ref{lem:claim1} and \ref{lem:opt21}}\label{ap:lemmas}
\subsection{Proof of Lemma~\ref{lem:opt21}}
For any $n \in \mathbb{N}$, let $\pi_n$ be a feasible policy such that $\liminf_{t \to \infty} U^{\pi_n}(t) \geq \mathsf{OPT} - \frac{1}{2n}$. Then by definition, there must exist $T_n$ such that for $t>T_n$, $U^{\pi_n}(t) \geq \mathsf{OPT} - \frac{1}{n}$. Consider the sequence $\mathbf{\alpha}^{\pi_n}(T_n)$, where $U^{\pi_n}(t) = U\lp \mathbf{\alpha}^{\pi_n}(t)\rp$. Let the set of vectors $\mathbf{\alpha}$ defined by $\eqref{opt3cons1}$ and $\eqref{opt3cons2}$ be $\mathcal{Y}$. Then strong law of large numbers, and the independence of $\lp \mathcal{S}(t), K(t)\rp$ from $Z(t)$ implies $\lim_{n\to\infty} \inf\lbp \|\mathbf{\alpha} -\mathbf{\alpha}^{\pi_n}(T_n) \|: \mathbf{\alpha} \in \mathcal{Y}\rbp = 0$. Therefore, there exists a sequence $\lbp\mathbf{\alpha}_n\rbp \in \mathcal{Y}$ such that $\lim_{n\to\infty} \|\mathbf{\alpha}_n -\mathbf{\alpha}^{\pi_n}(T_n) \|=0$. Since $\mathcal{Y}$ is closed and bounded, it is compact, and therefore $\mathbf{\alpha}_n$ must have a subsequence, say $\mathbf{\alpha}_{n_k}$, that converges to a point $\mathbf{\alpha}^* \in \mathcal{Y}$, which implies $\lim_{k\to\infty} \mathbf{\alpha}^{\pi_{n_k}}(T_{n_k}) = \mathbf{\alpha}^* \in \mathcal{Y}$.
Since the function $U$ is continuous, we have
\begin{single}
\begin{align*}
\msf{OPT}=\lim_{k \to \infty} U\lp \mathbf{\alpha}^{\pi_{n_k}}(T_{n_k})\rp = U\lp \lim_{k \to \infty} \mathbf{\alpha}^{\pi_{n_k}}(T_{n_k})\rp=U\lp \mathbf{\alpha}^*\rp.
\end{align*}
\end{single}
\begin{double}
\begin{align*}
\msf{OPT}&=\lim_{k \to \infty} U\lp \mathbf{\alpha}^{\pi_{n_k}}(T_{n_k})\rp\\
& = U\lp \lim_{k \to \infty} \mathbf{\alpha}^{\pi_{n_k}}(T_{n_k})\rp=U\lp \mathbf{\alpha}^*\rp.
\end{align*}
\end{double}
Since $\mathbf{\alpha}^*$ is in the feasible set $\mathcal{Y}$, it must be that $\msf{OPT}' \geq U\lp \mathbf{\alpha}^*\rp = \msf{OPT}$.

\subsection{Proof of Lemma~\ref{lem:claim1}}

Assume that there exists $\epsilon>0$, $Q \in \mathcal{Q}$, such that for any $N$, there exists $t>N$ that satisfies $\beta_{Q}^* (t) > 1 + \epsilon$. Note that
\begin{align}
\beta_{Q}^* (t) \leq \frac{t-1}{t}\beta_{Q}^* (t-1) + \frac{\left| Q\right|}{t p} \mathbb{I}_{\beta_{Q}^* (t-1) < 1} \label{eq:pf_ub},
\end{align}
with $p = \min_{(i,j) \in Q} p_{ij}$, where the upper bound is obtained by observing that the maximal increase in $\beta_{Q}^* (t)$ is achieved when all flows $(i,j) \in Q$ are scheduled at slot $t$. Choosing $N = \frac{\left| Q \right|}{\epsilon p}$, there must exist $t>N$ s.t. $\beta_{Q}^* (t) > 1 + \epsilon$. Letting $t^* \geq N$ to be the smallest of such indices, it must be that $\beta_{Q}^* (t^* -1) \leq 1$, since otherwise the increment $\beta_{Q}^* (t) - \beta_{Q}^* (t-1)$ cannot be positive, by construction. But by \eqref{eq:pf_ub} and by the choice of $N$,
\begin{align*}
\beta_{Q}^* (t^*) \leq \frac{t^*-1}{t^*}\beta_{Q}^* (t^*-1) + \epsilon \mathbb{I}_{\beta_{Q}^* (t^*-1) < 1} \leq 1 + \epsilon,
\end{align*}
which is a contradiction. 

\section{Utility Function with Relaying Cost}\label{ap:utility}
For an arbitrary $\kappa$, let $\lp \mathbf{\wtild r}, \wtild \beta\rp$ solve the optimization \eqref{opt2} with $U_i\lp r_i, \beta_i\rp = \log(r_i) + \kappa\log(1 - \beta_i)$,
where
\begin{align*}
r_i = \sum_{s: i \in s_1} \sum_{k \in \mathcal{K}} \sum_{z \in \mathcal{Z}} R_{skz}^{(i)} \alpha_{skz}, \;\;\beta_i = \sum_{s: i \in s_2} \sum_{k \in \mathcal{K}} \sum_{z \in \mathcal{Z}} \alpha_{skz}.
\end{align*}
Note that here $\alpha_{skz}$ has no time dependence and refers to a deterministic quantity, \emph{i.e.}, the fraction of time for which $\mathcal{S}(t)=s, K(t)=k, Z(t)=z$, throughout the (infinite) duration of transmission. Then, for any feasible perturbation $\delta \mathbf{\alpha}$ that pushes the operating point from $\lp \mathbf{\wtild r}, \wtild \beta\rp$ to $\lp \mathbf{r}, \beta\rp$, it must be that $\sum_{s,k,z} \delta \alpha_{skz} \sum_i \frac{\partial U_i}{\partial \alpha_{skz}} \leq 0$ by concavity, which, using the facts 
\begin{single}
\begin{align*}
r_i - \wtild r_i = \delta r_i = \sum_{s: i \in s_1} \sum_{(k,z) \in \mathcal{K}\times\mathcal{Z}} R_{skz}^{(i)} \delta \alpha_{skz},\;\;\;\;\;
\beta_i - \wtild \beta_i = \delta \beta_i = \sum_{s: i \in s_2} \sum_{(k,z) \in \mathcal{K}\times\mathcal{Z}} \delta \alpha_{skz}
\end{align*}
\end{single}
\begin{double}
\begin{align*}
r_i - \wtild r_i = \delta r_i &= \sum_{s: i \in s_1} \sum_{(k,z) \in \mathcal{K}\times\mathcal{Z}} R_{skz}^{(i)} \delta \alpha_{skz},\\
\beta_i - \wtild \beta_i &= \delta \beta_i = \sum_{s: i \in s_2} \sum_{(k,z) \in \mathcal{K}\times\mathcal{Z}} \delta \alpha_{skz},
\end{align*}
\end{double}
can be re-arranged into 
\begin{align}
\sum_i \frac{r_i - \wtild r_i}{\wtild r_i} \leq \kappa\sum_{i} \frac{(1-\wtild \beta_i)-(1-\beta_i)}{1-\wtild \beta_i}. \label{fairness}
\end{align}

\section{Proof of Theorem~\ref{th:ssp}}\label{ap:stability}
Before we present the proof, we need several definitions.
\begin{definition}
The chromatic number $\chi(\mathcal{G})$ is the minimum number of colors needed to color graph $\mathcal{G}$.
\end{definition}
\begin{definition}
The clique number $\omega(\mathcal{G})$ is the maximum clique size in $\mathcal{G}$.
\end{definition}
\begin{definition}
A \emph{perfect graph} is a graph whose chromatic number equals its clique number, \emph{i.e.}, $\chi(\mathcal{G})=\omega(\mathcal{G})$.
\end{definition}
\begin{definition}
A graph is \emph{chordal} if, for every cycle of length larger than three, there is an edge that is not part of the cycle, connecting two of the vertices of the cycle.
\end{definition}
Given these definitions, we are ready for the proof. The results in \cite{TassiulasEphremides_92} can be used to show that the stability region of the constrained queueing network formed by the $n$ users is given by
\begin{align}
\Lambda = \lbp \beta: \mathbf{D}^{-1}\beta \in conv\lp \Pi\rp \rbp, \label{ssp}
\end{align}
where $\mathbf{D}$ is a diagonal matrix with $p_{ij}$ values on the diagonal ($p_{ij}>0$ without loss of generality), $conv(\cdot)$ represents the convex hull of a set of vectors, and $\Pi$ is the set of incidence vectors of the independent sets of $\mathcal{G}_c$, \emph{i.e.}, a vector $\mathbf{s}$ whose elements are indexed by $(i,j)$ is contained in $\Pi$ if $\lbp (i,j): s_{(i,j)} = 1\rbp$ is an independent set of $\mathcal{G}_c$\footnote{The boundary of the stability region is included in the set $\Lambda$ for technical convenience. Note that this does not change the supremum value in the optimization \eqref{opt1} since the objective function is continuous.}.  

The set $\Lambda$ as defined in $\eqref{ssp}$ is known as the stable set polytope of the graph $\mathcal{G}_c$. The exact characterization of $\Lambda$ is not known in general \cite{Rebennack_08}. However, stable set polytopes of perfect graphs can be completely described in terms of their maximal cliques, as characterized in the following theorem.
\begin{theorem}{\cite{Chtaval_75}}\label{th:clique}
Let $\mathcal{Q}$ be the set of maximal cliques of a perfect graph $\mathcal{G}$. Then the stable set polytope of $\mathcal{G}$ is the set of vectors $x \in [0,1]^{| \mathcal{V}|}$ satisfying $\sum_{v \in Q} x_v \leq 1$ for all  $Q\in \mathcal{Q}$.
\end{theorem}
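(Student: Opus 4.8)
The plan is to prove the two descriptions coincide by a support-function comparison, reducing everything to a weighted min--max identity that perfection supplies. Write
\[
\mathrm{QSTAB}(\mathcal{G}) := \Bigl\{ x \in \mathbb{R}^{|\mathcal{V}|}_{\geq 0} : \textstyle\sum_{v \in Q} x_v \leq 1 \ \text{for all } Q \in \mathcal{Q}\Bigr\}
\]
for the right-hand polytope; the inequalities coming from non-maximal cliques are dominated by those from maximal ones together with $x \geq 0$, so restricting to $\mathcal{Q}$ loses nothing. One inclusion holds for \emph{every} graph: a stable set meets each clique in at most one vertex, so every incidence vector $\chi_S$ satisfies all clique inequalities, whence $\mathrm{STAB}(\mathcal{G}) \subseteq \mathrm{QSTAB}(\mathcal{G})$. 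Both sets are closed, convex, down-monotone, and contain the origin, so they are equal as soon as their support functions agree on the nonnegative orthant (rational scaling reduces this to integer directions, and a direction with a negative coordinate is handled by down-monotonicity, which forces that coordinate of the maximizer to vanish and passes to an induced subgraph). Since $\max\{ w^\top x : x \in \mathrm{STAB}(\mathcal{G})\} = \alpha_w(\mathcal{G})$ trivially, where $\alpha_w(\mathcal{G})$ denotes the maximum weight of a stable set, the entire theorem reduces to proving, for each $w \in \mathbb{Z}^{|\mathcal{V}|}_{\geq 0}$, that $\max\{ w^\top x : x \in \mathrm{QSTAB}(\mathcal{G})\} = \alpha_w(\mathcal{G})$.

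The lower bound here is immediate, as the incidence vector of a maximum-weight stable set is feasible. For the matching upper bound I would pass to the linear-programming dual,
\[
\max\{ w^\top x : x \in \mathrm{QSTAB}(\mathcal{G})\} = \min\Bigl\{ \textstyle\sum_{Q \in \mathcal{Q}} y_Q : y \geq 0,\ \sum_{Q \ni v} y_Q \geq w_v \ \text{for all } v\Bigr\},
\]
which is a \emph{fractional weighted clique cover}. It therefore suffices to produce an \emph{integral} cover of total size $\alpha_w(\mathcal{G})$, i.e.\ to establish the weighted min--max identity $\bar\chi_w(\mathcal{G}) = \alpha_w(\mathcal{G})$, where $\bar\chi_w(\mathcal{G})$ is the least number of cliques, counted with multiplicity, covering each vertex $v$ at least $w_v$ times.

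I would obtain this identity by reducing the weighted statement to the unweighted one via vertex multiplication. Form the blow-up $\mathcal{G}[w]$ by replacing each vertex $v$ with a stable set $I_v$ of $w_v$ copies, joining a copy in $I_u$ to a copy in $I_v$ exactly when $u,v$ are adjacent in $\mathcal{G}$. Stable sets of $\mathcal{G}[w]$ correspond to weighted stable sets of $\mathcal{G}$, so $\alpha(\mathcal{G}[w]) = \alpha_w(\mathcal{G})$, while any clique of $\mathcal{G}[w]$ uses at most one copy per $I_v$ and projects to a clique of $\mathcal{G}$, so the clique-cover number satisfies $\bar\chi(\mathcal{G}[w]) = \bar\chi_w(\mathcal{G})$. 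The crux is then the \textbf{Replication Lemma} of Lov\'asz, that blow-ups of perfect graphs are perfect; combined with the Perfect Graph Theorem (the complement of a perfect graph is perfect), perfection of $\mathcal{G}[w]$ gives $\bar\chi(\mathcal{G}[w]) = \chi(\overline{\mathcal{G}[w]}) = \omega(\overline{\mathcal{G}[w]}) = \alpha(\mathcal{G}[w])$. Chaining these equalities yields $\bar\chi_w(\mathcal{G}) = \alpha_w(\mathcal{G})$, so the dual optimum is at most $\alpha_w(\mathcal{G})$, closing the min--max identity and hence the theorem.

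The main obstacle is precisely the Replication Lemma and its reliance on the Perfect Graph Theorem. The lemma follows by induction on $\sum_v w_v$ from the single-vertex duplication step, and that step is the delicate point: showing $\chi = \omega$ survives duplicating one vertex requires a careful case analysis according to whether the duplicated vertex lies in a maximum clique or can be recolored in an optimal coloring. The Perfect Graph Theorem itself is a deep result that I would invoke as a black box rather than reprove, exactly as the original statement attributed to Chv\'atal does.
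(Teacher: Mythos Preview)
The paper does not prove this theorem at all; it is quoted from Chv\'atal (1975) and invoked as a black box inside the proof of Theorem~\ref{th:ssp}. There is therefore no ``paper's own proof'' to compare against.

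Your proposal is correct and is essentially the classical argument due to Fulkerson, Lov\'asz, and Chv\'atal: reduce $\mathrm{STAB}=\mathrm{QSTAB}$ to the weighted identity $\alpha_w(\mathcal{G})=\bar\chi_w(\mathcal{G})$ via LP duality, then obtain the weighted identity from the unweighted one on the blow-up $\mathcal{G}[w]$ using perfection of $\mathcal{G}[w]$ and the (weak) Perfect Graph Theorem. The support-function reduction to nonnegative integer weights, with negative coordinates handled by down-monotonicity and passage to an induced subgraph, is fine.

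One small terminological point worth tightening: Lov\'asz's Replication Lemma, as usually stated, concerns \emph{true}-twin duplication (the new copy is adjacent to the original), i.e.\ clique blow-up, whereas your $\mathcal{G}[w]$ is a \emph{stable-set} blow-up (false twins). This does not create a gap---false-twin addition preserves $\chi=\omega$ by the trivial observation that the new vertex can reuse the original's color and cannot enlarge any clique, and this propagates to all induced subgraphs---but you should either note this directly or route through the complement: $\overline{\mathcal{G}[w]}$ is the clique blow-up of $\overline{\mathcal{G}}$, which is perfect by the Perfect Graph Theorem, and then Lov\'asz's lemma applies verbatim. Either way the chain $\bar\chi(\mathcal{G}[w])=\chi(\overline{\mathcal{G}[w]})=\omega(\overline{\mathcal{G}[w]})=\alpha(\mathcal{G}[w])$ goes through.
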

Therefore, to complete the proof, it is sufficient to show that there exists a polynomial-time procedure that adds edges in $\mathcal{G}_c$ such that the resulting graph $\mathcal{\bar G}_c$ is perfect\footnote{The fact that $\Lambda\lp\mathcal{\bar G}_c\rp \subseteq \Lambda\lp\mathcal{G}_c\rp$ follows directly from the fact that $\mathcal{E}_c \subseteq \mathcal{\bar E}_c$}.

It is known that chordal graphs are perfect \cite{Berge_61}, and any graph can be made into a chordal one in polynomial time by inserting edges\footnote{For instance, one can iterate over the vertices, in each iteration connecting all the previously unvisited neighbors of the current vertex to each other. It is easy to show that such a procedure outputs a chordal graph.}. Further, the number of maximal cliques in a chordal graph is upper bounded by the number of nodes (equal to $n(n-1)$ for $\mathcal{G}_c$) \cite{Gavril_74}, and the maximal cliques of a chordal graph can be listed in polynomial time \cite{RosgenStewart_07}, which concludes the proof.

\end{appendices}

\end{document}